\newcommand\notsotiny{\@setfontsize\notsotiny\@viipt\@viiipt}
\title{Unreliability in Practical Subclasses of Communicating Systems}
\author{Amrita Suresh}
{University of Oxford, UK}{
amrita.suresh@cs.ox.ac.uk 
}{%
https://orcid.org/0000-0001-6819-9093 
}{}
\author{Nobuko Yoshida}
{University of Oxford, UK}{
nobuko.yoshida@cs.ox.ac.uk 
}{%
https://orcid.org/0000-0002-3925-8557
}{}
\authorrunning{A. Suresh and N. Yoshida} %
\keywords{Communicating automata, lossy channel, corruption, out of order, session types, crash-stop failure} %
\begin{document}

\maketitle

\begin{abstract}
Systems of communicating automata are
prominent 
models for peer-to-peer message-passing over unbounded channels,
but in the general scenario, most verification properties are
undecidable. To address this issue,
two decidable subclasses, \emph{Realisable with Synchronous
  Communication} ($\RSC$) 
and \emph{$k$-Multiparty Compatibility} ($\kmc$),
were proposed in the literature, with corresponding verification tools developed and applied in practice. Unfortunately, both 
$\RSC$ and $\kmc$ are not resilient under failures: (1) their decidability relies on
the assumption of perfect channels and (2) most standard
protocols do not satisfy
$\RSC$ or $\kmc$ under failures. 
To address these limitations, 
this paper studies the resilience of 
$\RSC$ and $\kmc$ under 
two distinct failure models: \emph{interference} and \emph{crash-stop
  failures}.
For interference,
we relax the conditions of $\RSC$
and $\kmc$ 
and prove that the inclusions of  
these relaxed properties
remain decidable under interference, preserving their known complexity bounds.  
We then propose a novel crash-handling communicating system that
captures wider behaviours than existing multiparty session types
(MPST) with crash-stop failures.  We study a translation of MPST with
crash-stop failures into this system integrating $\RSC$ and $\kmc$
properties, and establish their decidability results. Finally, by verifying representative protocols from the literature using $\RSC$ and $\kmc$ tools extended to interferences, we evaluate the relaxed systems and demonstrate their resilience.
\end{abstract}

\section{Introduction}
\label{sec:introduction}
Asynchronous processes that communicate using First In First Out
(FIFO) channels \cite{brand_communicating_1983}, henceforth referred to as FIFO systems, are widely 
used to model distributed protocols, 
but their verification is known to be computationally challenging. 
The model is Turing-powerful for even just two processes communicating via two unidirectional FIFO channels 
\cite{brand_communicating_1983}.

To address this challenge, several efforts have focused on identifying practical yet decidable subclasses -- those expressive enough to model a wide range of distributed protocols, while ensuring that verification problems such as reachability and model checking remain decidable.
Most FIFO systems assume \emph{perfect} channels, which is too restrictive 
to model the real-world distributed phenomena where
system failures often happen. 
This paper investigates whether two practical decidable subclasses of communicating systems,
\emph{Realisable with Synchronous 
  Communication} ($\RSC$) \cite{di_giusto_towards_2021} 
and \emph{$k$-Multiparty Compatibility}
($\kmc$) \cite{lange_verifying_2019}, 
are \emph{resilient}
when integrated with two different kinds of \emph{failures}. These failure models were originally introduced in the contexts of \emph{contracts} \cite{lozes_reliable_2012} and \emph{session types} \cite{barwell_designing_2023, barwell_generalised_2022}.  %
We say a system is \emph{resilient} under a given failure model 
if (i) the inclusion remains decidable, and (ii) the verification properties 
of interest remain decidable under that failure model. 

\subparagraph*{Failures in communications.}
A widely studied failure model in FIFO systems 
is \emph{lossy channels}. Finkel \cite{finkel_decidability_1994} showed that the termination problem is decidable for the class of \emph{completely specified protocols}, a model which strictly includes FIFO systems with lossy channels.
Abdulla and Jonsson~\cite{abdulla_verifying_1993} developed 
algorithms for verifying termination, safety, and eventuality
properties for protocols on lossy channels, by showing that they
belong to the class of well-structured transition systems. %

Another type of failure, studied in a more practical setting, occurs when one or more processes \emph{crash}. 
In the most general case,  Fekete et al.~\cite{fekete_impossibility_1993} proved that 
if an underlying process crashes, no fault-tolerant reliable communication protocol can be implemented. To address this, they consider faultless models which attempt to capture the behaviour of crashes by broadcasting crash messages. Such approaches have been explored in the context of \emph{runtime verification} techniques \cite{kazemlou_crash-resilient_2018} and \emph{session types} \cite{barwell_designing_2023, barwell_generalised_2022,BHYZ2025}. In this work, we closely study a failure model proposed by \cite{barwell_designing_2023,BHYZ2025}. 

{
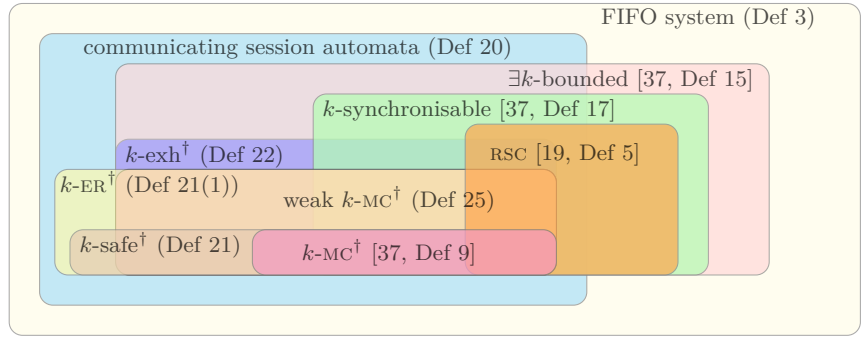
\begin{figure}[t]
	\begin{center}
		\begin{tikzpicture}[scale = 0.4, opacity = 0.7]
			\draw[rounded corners = 5pt, gray!95, fill=yellow!10]
			(-3,-2) rectangle (25,9);
			\draw[rounded corners = 5pt, gray!95, fill=cyan!30]
			(-2,-1) rectangle (16,8);
			\draw[rounded corners = 5pt, gray!95, fill=pink!60]
			(0.5,0) rectangle (22,7); 
			\draw[rounded corners = 5pt, gray!75, fill=blue!40]
			(0.5,0) rectangle (15,4.5);    
			\draw[rounded corners = 5pt, gray!90, fill=green!30]
			(7,0) rectangle (20,6);
			\draw[rounded corners = 5pt, black, fill=magenta!60]
			(0.5,0) rectangle (15,3.5);
			\draw[rounded corners = 5pt, gray!90, fill=yellow!40]
			(-1.5,0) rectangle (15,3.5);
			\draw[rounded corners = 5pt, gray!90, fill=brown!40]
			(-1,0) rectangle (15,1.5);
			\draw[rounded corners = 5pt, gray!95, fill=orange!70]
			(12,0) rectangle (19,5);
			\draw[rounded corners = 5pt, gray, fill=magenta!50]
			(5,0) rectangle (15,1.5);
			\draw (20, 8.5) node {\footnotesize{\text{FIFO system}}      \footnotesize{(Def~\ref{def:ca})}} ;
			\draw (6.5, 7.5) node {\footnotesize{\text{communicating session automata}} \footnotesize{(Def~\ref{def:csa})}} ;
			\draw (7, 5.45) node[right]{\footnotesize{\text{$k$-synchronisable} \cite[Def~17]{lange_verifying_2019}}};
			\draw (21.8, 6.45) node[left] {\footnotesize{\text{$\exists$$k$-bounded} \cite[Def~15]{lange_verifying_2019}}} ;
			\draw (0.5, 4) node[right] {\footnotesize{\text{$k$-exh}$^\dagger$ (Def~\ref{def:kexh})}};
			\draw (-1.65, 3) node[right] {\footnotesize{\text{$k$-\textsc{er}}$^\dagger$ (Def~\ref{def:ksafe}(1))}};
			\draw (9.5, 2.5) node
                              {\footnotesize{\text{weak
                                    $k$-\textsc{mc}}$^\dagger$ (Def~\ref{def:wkmc})}};
			\draw (9.5, 0.75) node {\footnotesize{\text{$k$-\textsc{mc}}$^\dagger$ \cite[Def~9]{lange_verifying_2019}}};
			\draw (-1.0, 1) node[right] {\footnotesize{\text{$k$-safe}$^\dagger$ (Def~\ref{def:ksafe})}};
			\draw (12.5, 4) node[right] {\footnotesize{\text{\textsc{rsc}} \cite[Def~5]{di_giusto_multiparty_2023}}};
		\end{tikzpicture}
\caption{Classes of communication systems (since the $\dagger$-marked definitions are introduced in the context of CSA (Def \ref{def:csa}), we restrict them accordingly).}
\label{fig:diagram_comparison}
\end{center}
\end{figure}
 }
\subparagraph*{Restricting the channel behaviour.}
To define decidable subclasses,  
many works 
study how to restrict read and write access to channels.  For two-process (binary) FIFO systems, the notion of \emph{half-duplex} communication was introduced in \cite{cece_verification_2005}, where at most one direction of communication is active at any time. For such systems, reachability is decidable in polynomial time. However, generalising this idea to the multiparty setting often yields subclasses that are either too restrictive or lose decidability.  
Di Giusto et al.~\cite{di_giusto_multiparty_2023, di_giusto_towards_2021} extended this idea to multiparty systems while preserving decidability, resulting in the notion of systems \emph{realisable with synchronous communication} ($\RSC$).
They showed that this definition overlaps with that in
\cite{cece_verification_2005} for mailbox communication. However, in the
case of peer-to-peer communication where the two definitions differ,
peer-to-peer $\RSC$ behaviour was
proved to be decidable. $\RSC$ systems are related to \emph{synchronisable systems} \cite{bouajjani_completeness_2018, di_giusto_k-synchronizability_2020, bollig_unifying_2021}, in which FIFO behaviours must admit a synchronisable execution. The tool \emph{ReSCu} applies this idea to verify real-world distributed protocols \cite{desgeorges_rsc_2023}. %

Another approach to restricting channel behaviours is to bound the length of the channel. Lohrey~\cite{lohrey_bounded_2002} introduced 
\emph{existentially bounded systems} (see also \cite{genest_kleene_2004,genest_communicating_2007}) where all executions that reach a final state with empty channels can be re-ordered into a bounded execution. Although many verification problems are decidable for this class of systems, checking if a system is existentially $k$-bounded is undecidable, even if $k$ is given as part of the input.
	
A decidable bounded approach, \emph{$k$-multiparty
  compatibility} ($\kmc$), 
was introduced in \cite{lange_verifying_2019}. 
This property is 
defined by two conditions, \emph{exhaustivity} and \emph{safety}. Exhaustivity implies existential boundedness and characterises systems where each automaton behaves the same way under bounds of a certain length.
Checking $\kmc$ is decidable, and 
the tool $\kmc$-checker is implemented and applied to verify 
Rust \cite{cutner_deadlock-free_2022,LNY2022} and OCaml \cite{ImaiLN22} programs. 

\subparagraph*{Combining the two approaches.}  
As far as we are aware, the intersection between expressive, decidable subclasses and communication failures %
is less explored. 
Lozes and Villard \cite{lozes_reliable_2012} studied reliability in binary half-duplex systems and showed that 
many communicating contracts can be verified with this model. 
Inspired by this, we investigate whether practical \emph{multiparty} 
subclasses, $\RSC$ and $\kmc$, remain robust in the presence of communication errors.

Although failure models such as lossy channels are well studied, the complexity of verification in their presence is often very high — for instance, reachability in lossy systems is non-primitive recursive \cite{finkel_decidability_1994}. Our goal is not only to show that $\RSC$ and $\kmc$ systems are resilient, but also that their inclusion remains decidable under failure models, with complexity maintained from the failure-free case.

This paper extends $\RSC$
\cite{di_giusto_multiparty_2023,di_giusto_towards_2021} and
$\kmc$ \cite{lange_verifying_2019} by integrating 
two distinct failure models.  
$\RSC$ and $\kmc$ systems are incomparable to each other 
($\RSC$ is not a subset of $\kmc$ and vice-versa),  
but both are closely related to existentially bounded systems. Figure~\ref{fig:diagram_comparison} illustrates their relationship with other models. %

For failures, 
first, we consider \emph{interferences} from the environments
by modelling 
\emph{lossiness}, 
\emph{corruption} (a message is altered to a different message) 
and \emph{out-of-ordering} (two messages in a queue are swapped)
of channels
studied in the context of FIFO systems. %
Secondly, we consider 
potential \emph{crashes} of processes introduced in the setting of 
session types \cite{barwell_designing_2023, barwell_generalised_2022}.

\begin{figure}[t]
\begin{center}
\small{
		\begin{tikzpicture}[>=stealth,node distance=1.1cm,shorten >=1pt,
			every state/.style={text=black, scale =0.45, minimum size=0.1em}, semithick,
			font={\fontsize{8pt}{12}\selectfont}]
			\node[state, initial, initial text=] (place1) at (0,0) {};
			\node[state, right=1.5 cm of place1] (place2) {};
			\node[state, below=of place2] (place3) {};
			\node[state, above=of place1] (place4) {};
			\node[left= 2.5cm of place2] (heading) {Process $\procx{s}$};
			
			\path [-stealth, thick]
			(place2) edge node[right=0.01cm] {$\sendact{\chan{s}{r}}{end}$}(place3)
			(place2) edge[loop right] node[right=0.01cm] {$\sendact{\chan{s}{r}}{data}$} (place2)
			(place1) edge node[above=0.01cm] {$\sendact{\chan{s}{r}}{start}$} (place2)
			(place4) edge[bend left] node[right=0.01cm] {$\sendact{\chan{s}{r}}{data}$} (place1)
			(place1) edge[bend left] node[left=0.01cm] {$\recact{\chan{r}{s}}{err}$} (place4)
			(place3) edge node[left=0.1cm] {$\recact{\chan{r}{s}}{\textit{ack}}$} (place1);

			\node[state, initial, initial text=, right=7cm of place1] (place5) {};
			\node[state, right=1.5 cm of place5] (place6) {};
			\node[state, below=of place6] (place7) {};
			\node[state, above=of place5] (place8) {};
		\node[left=2.5cm of place6] (heading2) {Process $\procr$};
			
			\path [-stealth, thick]
			(place6) edge node[right=0.01cm]{$\recact{\chan{s}{r}}{end}$}(place7)
			(place6) edge[loop right] node[right=0.01cm] {$\recact{\chan{s}{r}}{data}$} (place6)
			(place5) edge node[above=0.01cm] {$\recact{\chan{s}{r}}{start}$} (place6)
			(place5) edge[bend left] node[left=0.01cm] {$\recact{\chan{s}{r}}{data}$} (place8)
			(place8) edge[bend left] node[right=0.01cm] {$\sendact{\chan{r}{s}}{err}$} (place5)
			(place7) edge node[left=0.1cm] {$\sendact{\chan{r}{s}}{ack}$} (place5);
		
	\end{tikzpicture}}
\vspace*{-2mm}
	\caption{The above system $\system$ is half-duplex in the absence of errors. However, in case of (any or multiple) errors, it is no longer half-duplex.}
	\label{fig:ex1}
 \end{center}
\end{figure}
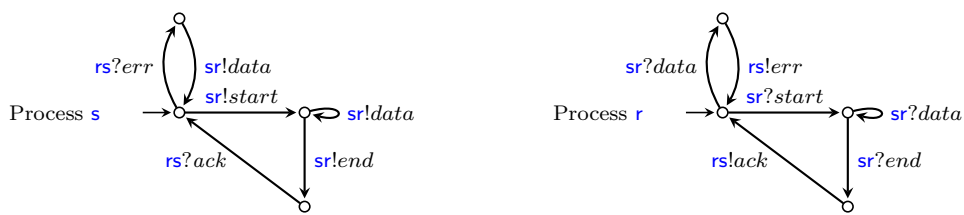

Let us consider the following simple half-duplex protocol 
as an example.

\begin{example}
The system in Figure~\ref{fig:ex1} 
is half-duplex under the assumption of perfect channels 
\cite{cece_verification_2005}.
It consists of two processes, a $\procx{s}$ender ($\procx{s}$) and a
(dual) $\procr$eceiver ($\procr$), communicating via unbounded
FIFO channels. A transition $\sendact{\chan{s}{r}}{m}$ denotes that the
$\procx{s}$ender puts (asynchronously) a message $\msg{m}$ on
channel $\bchan{s}{r}$, and similarly, $\recact{\chan{s}{r}}{m}$
denotes that message $\msg{m}$ is consumed by the $\procr$eceiver
from channel $\bchan{s}{r}$. Since the channel $\bchan{r}{s}$
only contains messages after the
$\procr$eceiver receives the $\msg{end}$ message and has emptied
$\bchan{s}{r}$, the system satisfies the half-duplex condition.
Moreover, the $\procx{s}$ender never sends $\msg{data}$ without having first sent $\msg{start}$ so the error loop is never triggered.

Now suppose the channels are prone to \emph{corruption}. 
A message $\msg{data}$ could be altered to $\msg{end}$ after being sent.
This allows the $\procr$eceiver to react prematurely by sending $\msg{ack}$,
while the $\procx{s}$ender continues sending $\msg{data}$.
As a result, both channels may become non-empty,
violating the half-duplex property.
Similarly
in the presence of other forms of interference, as shown in Example~\ref{ex:interferences},
this system no longer satisfies the half-duplex condition.
\end{example}

\subparagraph*{Contributions and outline.}
The main objective of this paper is to investigate
whether multiparty adaptations of half-duplex systems 
($\RSC$ and $\kmc$) retain both their expressiveness for modelling real-world protocols and the decidability of their inclusions, with preserved complexity, under two distinct kinds of communication failures: interferences and crash-stops. 
\S~\ref{sec:prelim} introduces preliminary notions, notably FIFO
systems and interference models;
\S~\ref{sec:rsc} 
 studies $\RSC$ under interference, and shows that relaxing certain conditions on matching send and receive actions retains both expressiveness and decidability (Theorem~\ref{thm:rscmain}).
\S~\ref{sec:kmc} 
examines $\kmc$ with interferences, and proposes a relaxed version, $\kwmc$ (weak $\kmc$), by weakening the \emph{safety} condition. We prove that checking the $\kwmc$ property remains decidable under interferences (Theorem~\ref{thm:kmcmainthm}).
\S~\ref{sec:crashstop} introduces 
the FIFO systems with crash-stop
failures (called \emph{crash-handling systems}), and shows that
checking $\RSC$ and $\kwmc$ under crash-stop failures 
is decidable (Theorems~\ref{prop:rsccrashdec} and \ref{thm:kmcmain}); 
\S~\ref{sec:crashstopsessions}
defines a translation from (local) multiparty session types (MPST) to
crash-handling systems and 
proves that this translation preserves trace semantics.
This implies the decidability of $\RSC$ and $\kmc$ within
the asynchronous MPST system extended to crash-stop failures 
(Theorem~\ref{lem:fifocrash}); \S~\ref{sec:implementation} evaluates
protocols
from the literature extending the existing tools with support for interferences;
and \S~\ref{sec:conclusion} concludes with further related and future work. 
Proofs are provided in the appendix. 
The tools and benchmarks are publicly available from \url{https://github.com/NobukoYoshida/Interference-Tool}.
\section{Preliminaries}\label{sec:prelim}
For a finite set $\Alphset$, we denote by $\Alphset^\ast$ the set of finite words over $\Alphset$, and the empty word with $\emptyword$. We use $|w|$ to denote the length of the word $w$, and $w_1\cdot w_2$ indicates the concatenation of two words $w_1, w_2 \in \Alphset^\ast$. Given a (non-deterministic) finite-state automaton $\automata$, we denote by $\lang(\automata)$ the language accepted by $\automata$. 
Consider a finite set of processes $\procset$ (ranged over by $\procp, \procq, \procr, \ldots$ or occasionally by $\procx{r_1}, \procx{r_2}, \ldots$) and a set of messages $\msgset$. In this paper, we consider the \emph{peer-to-peer} communication model; i.e., there is a pair of unidirectional FIFO channels between each pair of processes, one for each direction of communication. In our model, processes 
act either by point-to-point communication or by internal actions (actions local to a single process). Moreover, in this setting, we consider messages to be atomic, akin to letters of an alphabet.\\%

Let $\chanset$ $ = \{\chan{\procp}{\procq} \mid \procp \neq \procq \text{ and }\procp, \procq \in \procset\}$ be a set of \emph{channels} that stand for  
point-to-point links.  
Since we are considering the peer-to-peer model of communication, 
there is a unique process that can send a message to (or dually, receive a message from) a particular channel. 
An action $a =
(\chan{\procp}{\procq}, !, \msg{m}) \in \actset$ indicates that a
process $\procp$ sends a message $\msg{m}$ on the channel
$\chan{\procp}{\procq}$. Similarly, $a = (\chan{\procq}{\procp}, ?, \msg{m})
\in \actset$ indicates that $\procp$ receives a message $\msg{m}$ on the
channel $\chan{\procq}{\procp}$. 
We henceforth denote an action $a =
(\chan{\procp}{\procq}, \dagger, \msg{m}) \in \actset$, where $\dagger \in
\{!, ?\}$, in a simplified form as $\chan{\procp}{\procq}{\dagger} \msg{m}$.
An \emph{internal action} $\internalp{p}$ means that process $\procp$ 
performs the action $c$. 
We define a finite set of actions as $\actset \subseteq (\chanset \times \{!, ?\} \times\msgset )
\cup \internalset$ where 
$\internalset$ is the set of all internal actions. 

\begin{definition}[FIFO automaton]
	A FIFO automaton $\automata_{\procp}$, associated with $\procp$, is defined as $\automata_\procp = (\cstateset_\procp, \transrel_\procp, \initp{p})$ where:
		 $\cstateset_\procp$ is the finite set of control-states, 
		$\transrel_\procp \subseteq \cstateset_\procp \times \actset \times \cstateset_\procp$ 
		is the transition relation, and
		$\initp{p} \in \cstateset_\procp$ is the initial control-state.
\end{definition}

Note that in this model, there are no final or accepting states. 

The set of outgoing channels of process $\procp$ 
is represented by $\chansetout =
\{\chan{\procp}{\procq} \mid \procq \in \procset \setminus \procp\}$. 
Similarly,
$\chansetin = \{\chan{\procq}{\procp} \mid \procq \in
\procset \setminus \procp\}$ is the set of incoming channels of
process $\procp$.

Given an action $\act$, 
an \emph{active process}, denoted by $\proc(\act)$, 
is defined as: 
 $\proc(\actsend{\procp}{\procq}{m}) = \procp$ and $\proc(\actrec{\procp}{\procq}{m}) = \procq$. Similarly, $\ch(\actsend{\procp}{\procq}{m}) = \ch(\actrec{\procp}{\procq}{m}) = \chan{\procp}{\procq}$.

We say a control state $q \in \cstateset_\procp$ is a \emph{sending state} (resp. \emph{receiving state}) if all its outgoing transitions are labelled by send (resp. receive) actions. If a control state is neither a sending nor receiving state, i.e., it either has both send and receive actions or neither, then we call it a \emph{mixed state}. We say a sending (resp. receiving) state is \emph{directed} if all the send (resp. receive) actions from that control state are to the same process. 
Like for finite-state automata, 
we say that a FIFO automaton 
$\automata_\procp = (\cstateset_\procp, \transrel_\procp, 
\initp{p})$ is \emph{deterministic} if for all transitions $(q, \act, q'), (q, \act', q'') \in \transrel_\procp$, $\act = \act' \implies q' = q''$. 
We write 
$q_1 \xrightarrow{\act_1 \cdots \act_{n}} q_{n+1}$ for $(q_1, \act_1, q_2) \cdots (q_{n}, \act_{n}, q_{n+1})
\in \transrel_\procp$. 
Unless specified otherwise, we consider non-deterministic automata, 
allowing mixed states, and all states do not have to be directed.

\begin{definition}[FIFO system] \label{def:ca}
A FIFO system $\system = (\automata_\procp)_{\procp \in \procset}$ is a set of communicating FIFO automata. A configuration of $\system$ is a
	pair $\conf = \confq{\globq}{\chancon}$ where $\globq =(q_\procp)_{\procp \in \procset}$ is called the \emph{global state} with $q_\procp \in \cstateset_\procp$ being one of the local control-states of $\automata_{\procp}$, and where $\chancon = (\word_{\chan{\procp}{\procq}})_{\chan{\procp}{\procq \in \chanset}}$ with $\word_{\chan{\procp}{\procq}} \in \msgset^\ast$.
\end{definition}\label{def:fifoconf}

The \emph{initial configuration} of $\system$ is $\initconf =
\confq{\globqinit}{\chanconinit}$ where $\globqinit =
(\initp{p})_{\procp \in \procset}$ and we write $\chanconinit$ for the
$|\chanset|$-tuple $(\emptyword, \ldots, \emptyword)$. We
let $\automata_\procp = (\cstateset_\procp,
\transrel_\procp, \initp{p})$ be a FIFO automaton. Let $\system =
(\automata_\procp)_{\procp \in \procset}$ be the system
whose initial configuration is $\initconf$.
The FIFO automaton $\product(\system)$ associated with $\system$ is
the standard asynchronous product automaton: $\product(\system) =
(\cstateset, \transrel, \globqinit)$ where $\cstateset =
\prod_{\procp \in \procset}\cstateset_\procp$, $\globqinit =
(\initp{p})_{\procp \in \procset}$, and $\transrel$ is the set of
triples $(\globqone, a, \globqtwo)$ for which there exists $\procp \in
\procset$ such that $({q_1}_\procp, a, {q_2}_\procp) \in
\delta_\procp$ and ${q_1}_\procr = {q_2}_\procr$ for all $\procr \in
\procset \setminus \{\procp\}$.
An \emph{execution} $e = a_1 \cdot a_2 \cdots a_n \in \actset^\ast$ is an
arbitrary finite sequence of actions. 
We write $\exec(\system)$ for $\{e \in
\actset^\ast \mid \initconf \xrightarrow{e} \conf \text{ for some
  configuration } \conf\}$. Given $e = a_1 \cdot a_2 \cdots a_n$, we
write $\actset(e) = \{ a_1, a_2, \ldots, a_n\}$. Moreover, we say two systems are trace-equivalent if they produce the same set of executions, i.e. $\system \approx \system'$ is as follows:  $\forall \phi, \phi \in \exec(\system) \Leftrightarrow \phi \in \exec{(\system')}$.

\subparagraph*{Interferences.}
In this paper, we do not restrict
the study to perfect channels, and instead consider that they may
subject to \emph{interferences} from the environment. 
Interferences are modelled as potential evolution of channel contents
without a change in the global state of the system.  As in \cite{lozes_reliable_2012}, we model interferences by a preorder over words ${\succeq} \subseteq \Alphset^* \times \Alphset^*$, which will parametrise the semantics of dialogue systems. We denote by $w \succeq w'$ if $w$ and $w'$ are the contents of the buffer respectively before and after the interferences.

\begin{definition}[Interference model] (from \cite{lozes_reliable_2012})
	An interference model is a binary relation ${\succeq} \subseteq \Alphset^* \times \Alphset^*$ satisfying the following axioms:
\begin{center}
\begin{tabular}{ccccc}
Reflexivity~ & ~Transitivity~ & ~Additivity~ & ~Integrity~ & ~Non-expansion \\ [0.1cm]
$\infer{ a \succeq a}{a \in \Alphset}$ & $\infer{w \succeq w''}{w \succeq w' \;  w' \succeq w''}$ & $\infer{w_1\cdot w_2 \succeq w'_1 \cdot w'_2}{w_1 \succeq w'_1\; w_2 \succeq w'_2}$ & $\infer{w = \varepsilon}{\varepsilon \succeq w}$ & $\infer{|w| \geq |w'|}{w \succeq w'}$
\end{tabular}
\end{center}
 \end{definition}

Axiom \emph{Additivity} defines that failures can happen at any part of
the words; axiom \emph{Integrity} says $\varepsilon$ is 
the least word; and axiom \emph{Non-expansion}
says that $\succeq$ preserves the size of words. 
Based on interferences, we define three failures as follows: 
\begin{itemize}
	\item \emph{Lossiness}:~ Possible leaks of messages during transmission are modelled by adding the axiom $a \succeq \emptyword$.
	\item \emph{Corruption}:~ Possible transformation of a message $a$ into a message $b$ is modelled by adding the axiom $a \succeq b$.
	\item \emph{Out-of-order}:~ Out-of-order communications are modelled by adding axioms $a\cdot b \succeq b \cdot a$ for all $a, b \in \Alphset$.
\end{itemize}
We now define successor configurations for FIFO systems with interferences.
\begin{definition}[Successor configuration under interference]
	\label{def:irs}
	Let $\system$ be a FIFO system. A configuration 
	$\conf' = \confq{\globq'}{\chancon'}$ is a  \emph{successor} of another 
	configuration $\conf = \confq{\globq}{\chancon}$, 
	by firing the transition $(q_\procp, a, q'_\procp) \in \transrel_\procp$,  
	written $\conf \xrightarrow{} \conf'$ or 
	$\conf \xrightarrow{a} \conf'$, 
	if either:
	{\rm (1)} $a= \procp\procq ! m$ 
	and {\rm (a)} $q'_{\procr}=q_{\procr}$ for all $\procr\not = \procp$; 
	and {\rm (b)} $w'_{\procp\procq} \preceq w_{\procp\procq}\cdot m$ and 
	$w'_{\procr\procs} \preceq w_{\procr\procs}$ for all $\procr\procs\neq \procp\procq$; or {\rm (2)} 
	$a= \procq\procp ? m$  
	and {\rm (a)} $q'_{\procr}=q_{\procr}$ for all $\procr\not = \procp$; 
	and {\rm (b)} $m\cdot w_{\procq\procp}' \preceq  w_{\procq\procp}$ and 
	$w'_{\procr\procs} \preceq w_{\procr\procs}$ for all $\procr\procs\neq \procq\procp$. 
\end{definition}

The condition (1-b) puts the content to a channel $\chan{\procp}{\procq}$, 
while (2-b) gets the content
from a channel $\chan{\procp}{\procq}$. 
The reflexive and transitive closure of $\xrightarrow{}$
is $\xrightarrow{\ast}$. 
We write $\conf_1 \xrightarrow{a_1\cdot a_2 \cdots a_m} \conf_{m+1}$
for $\conf_1 \xrightarrow{a_1}\conf_2 \cdots \conf_m \xrightarrow{a_m} \conf_{m+1}$. Moreover, we write $(\conf_1, {a_1\cdot a_2 \cdots a_m}, \conf_{m+1}) \subseteq \transrel$ to denote $\{(\conf_1, {a_1},\conf_2), \ldots, (\conf_m,{a_m},\conf_{m+1})\} \subseteq \transrel$.
A configuration $\conf$ is \emph{reachable} if 
$\conf_0\astred \conf$ and we define 
$\RS{\system}=\{ \conf \ | \ \conf_0\astred \conf\}$. 

A configuration $\conf = \confq{\globq}{\chancon}$ is said to be \emph{$k$-bounded} if for all $\chan{\procp}{\procq} \in \chanset$, $|\word_{\chan{\procp}{\procq}}| \leq k$. We say that an execution $e = e_1e_2\ldots e_n$ is \emph{$k$-bounded} 
from $\conf_1$ if $\conf_1 \xrightarrow{e_1} \conf_2 \ldots \conf_n \xrightarrow{e_n} \conf_{n+1}$ and for all $1 \leq i \leq n+1$, $\conf_i$ is $k$-bounded; we denote this as $\conf_1 \xrightarrow{e}_k \conf_{n+1}$.

We define the $k$-reachability set of
$\system$ to be the largest subset $\RSK{k}{\system}$ of 
$\RS{\system}$ within which each configuration $\conf$ can be 
reached by a $k$-bounded execution from $\conf_0$. 
Note that, given a FIFO system
$\system$, for every integer $k$, the set $\RSK{k}{\system}$ 
is finite and computable. 

\begin{example}\label{ex:interferences}
	Let us revisit the system in Figure~\ref{fig:ex1} and explore each of the interferences with the following executions (we denote by \textcolor{red}{red} the messages subject to interference):
	\begin{itemize}
		\item \emph{Corruption}: Let us consider execution $e_c = \actsend{\procx{s}}{\procr}{start} ~.~ \actrec{\procx{s}}{\procr}{start} ~.~ \textcolor{red}{\actsend{\procx{s}}{\procr}{data}} ~.~$ $ \textcolor{red}{\actrec{\procx{s}}{\procr}{end}} ~.~ \actsend{\procr}{\procx{s}}{ack} ~. \actsend{\procx{s}}{\procr}{data}$. Here, the message $\msg{data}$ has been corrupted to $\msg{end}$. Hence, process $\procr$ incorrectly receives the message $\msg{end}$, and assumes that process $\procx{s}$ has stopped sending $\msg{data}$, while process $\procx{s}$ continues to send it. 
		\item \emph{Lossiness}: Consider the execution $e_\ell = \actsend{\procx{s}}{\procr}{start} ~.~ \actrec{\procx{s}}{\procr}{start} ~.~ \actsend{\procx{s}}{\procr}{data} ~.~ \actrec{\procx{s}}{\procr}{data} ~.~$ $ \textcolor{red}{\actsend{\procx{s}}{\procr}{end}}$. Here, the message $\msg{end}$ has been lost, which means process $\procr$ will be stuck waiting for process $\procs$ to either send $\msg{data}$ or $\msg{end}$.
		\item \emph{Out-of-order}: Let  $e_o = \actsend{\procx{s}}{\procr}{start} ~.~ \actrec{\procx{s}}{\procr}{start} ~.~ \textcolor{red}{\actsend{\procx{s}}{\procr}{data}} ~.~  \textcolor{red}{\actsend{\procx{s}}{\procr}{end}} ~.~ \textcolor{red}{\actrec{\procx{s}}{\procr}{end}} ~.~ \actsend{\procr}{\procx{s}}{ack} ~.~ \actrec{\procr}{\procx{s}}{ack} ~.~ \textcolor{red}{\actrec{\procx{s}}{\procr}{data}}
		.$ $
		\actsend{\procr}{\procx{s}}{err} ~.~ \actrec{\procr}{\procx{s}}{err}$. In this case, the order of $\msg{data}$ and $\msg{end}$ has been swapped in the queue, which leads to a configuration where the error message is sent.
	\end{itemize}
\end{example}

As shown in \cite{abdulla_verifying_1993, finkel_decidability_1994}, for communicating automata with lossiness, the reachability set is recognisable, and the reachability problem is decidable. In the case of out-of-order scheduling, it is easy to see that the problem reduces to reachability in Petri nets. It is less clear, but it can also be reduced to Petri net reachability problem in case of corruption. We recall these proofs in Appendix~\ref{app:sec2}. 
However, the complexity of reachability for these systems is
very high -- it is non-primitive recursive for lossy
systems \cite{Schnoebelen02}, and Ackermann-hard for corruption
and out-of-order \cite{CzerwinskiLLLM21}. Hence, it is still worth exploring subclasses in the presence of errors.

\section{RSC systems with interferences}\label{sec:rsc}
We first extend the definitions of synchrony in  systems from \cite{di_giusto_towards_2021} to consider possible interferences. The main extension relates to the definition of \emph{matching pairs}. Intuitively, matching pairs refer to a send action and the corresponding receive action in a given execution. In the presence of interferences, it is not necessary that the same message that is sent is received (due to corruption), or that the $k$-th send action corresponds to the $k$-th receive action (due to lossiness or out-of-order). 
Hence we extend the definition of matching pairs.

\begin{definition}[Matching pair with interference]
	\label{def:matcherrors}
	Given an execution $e = a_1 \ldots a_n$, if there exists a channel $\chan{\procp}{\procq}$, messages $m, m' \in \msgset$ and $j, j', k, k' \in \{1, \ldots, n\}$ where $j < j'$, and the following four conditions:

{\rm (1)} $a_j = \actsend{\procp}{\procq}{m}$;
{\rm (2)} $a_{j'} = \actrec{\procp}{\procq}{m'}$;
{\rm (3)} $a_j$ is the $k$-th send action to
		$\chan{\procp}{\procq}$ in $e$; and 
{\rm (4)} $a_{j'}$ is the $k'$-th receive action on
		$\chan{\procp}{\procq}$ in $e$,  
	then we say that $\{j, j'\} \subseteq \{1, \ldots , n\}$ is a \emph{matching pair with interference}, or $i$-matching pair.
\end{definition}

\noindent Note that if $m = m'$ and $k = k'$, we are back to the original
definition of matching pairs in \cite[Section~2]{di_giusto_multiparty_2023}, which we shall refer to henceforth as \emph{perfect matching pairs}. When we refer to a matching pair, we mean either a perfect or $i$-matching pair. Moreover, our formalism allows for a single message to have more than one kind of interference, e.g. the same message can be corrupted and received out-of-order.

\begin{example}\label{ex:matchpairwint}
	Consider the following execution $e = a_1\ldots a_5 =  \actsend{\procp}{\procq}{a} \cdot \actsend{\procq}{\procp}{b} \cdot \actrec{\procq}{\procp}{b} \cdot \actsend{\procp}{\procq}{c} \cdot \textcolor{red}{\actrec{\procp}{\procq}{c}}$. For the channel $\chan{\procq}{\procp}$, we have a perfect matching pair $\{2,3\}$ which corresponds to the actions $\actsend{\procq}{\procp}{b}$ and $\actrec{\procq}{\procp}{b}$, the 1$^{\text{st}}$ send and receive action along $\chan{\procq}{\procp}$. For the channel $\chan{\procp}{\procq}$, we see that the 1$^\text{st}$ receive action is not $\actrec{\procp}{\procq}{a}$, and hence, there is no perfect matching pair corresponding to $\actsend{\procp}{\procq}{a}$. However, in case of interferences, we can have the following cases:
	\begin{itemize}
		\item If the message $a$ is lost, i.e., $\actsend{\procp}{\procq}{a}$ would be a lost action, $\actsend{\procp}{\procq}{c} \cdot \textcolor{black}{\actrec{\procp}{\procq}{c}}$ would be a matched send-receive pair, and therefore, $\{4,5\}$ would be the corresponding $i$-matching pair.
		\item If the message $a$ was corrupted to $c$, then, $\actrec{\procp}{\procq}{c}$ would be the receive action corresponding to $\actsend{\procp}{\procq}{a}$, and we would have $\{1,5\}$ as an $i$-matching pair.
		\item If the trace with an appended action as follows: $e' = \actsend{\procp}{\procq}{a} \cdot \actsend{\procq}{\procp}{b} \cdot \actrec{\procq}{\procp}{b} \cdot \actsend{\procp}{\procq}{c} \cdot \textcolor{black}{\actrec{\procp}{\procq}{c}} \cdot \textcolor{red}{\actrec{\procp}{\procq}{a}}$, then it could be that messages $a$ and $c$ were scheduled out-of-order in the channel $\chan{\procp}{\procq}$. Then we have $i$-matching pairs $\{1,6\}$ and $\{4,5\}$.   
	\end{itemize}
\end{example}
\noindent We now modify the definition of \emph{interactions} from \cite{di_giusto_multiparty_2023} as follows.

\begin{definition}[Interaction] An \emph{interaction} of  $e$ is either a (perfect or $i$-) matching pair, %
	or a singleton $\{j\}$ such that $a_j$ is a send action and $j$ does not belong to any matching pair (such an interaction is called unmatched send). %
\end{definition}
\noindent Given $e = a_1 \cdots a_n$, a set of interactions $\comsett$ is called a \emph{valid communication} of $e$ if for every index $j \in \{1, \ldots, n\}$, there exists exactly one interaction $\exch \in \comsett$ such that $j \in \exch$. I.e., we need to ensure that every action in $e$ belongs to exactly one interaction in the valid communication. We denote by $\setcomm(e)$ the set of all valid communications associated to $e$. 

\begin{example}\label{ex:validcomms}
Revisiting Example~\ref{ex:matchpairwint}, given the execution
$e = a_1\ldots a_5 =  \actsend{\procp}{\procq}{a} 
\cdot \actsend{\procq}{\procp}{b} \cdot 
\actrec{\procq}{\procp}{b} \cdot 
\actsend{\procp}{\procq}{c} \cdot 
\textcolor{black}{\actrec{\procp}{\procq}{c}}$, 
there are two valid communications, 
$\comsett_1 =  \{ \{1,5\}, \{2,3\}, \{4\} \}$ and 
$\comsett_2 = \{ \{1\}, \{2,3\}, \{4,5\}\}$, 
and $\setcomm(e) = \{\comsett_1, \comsett_2\}$. 
\end{example}

For the rest of this section, when we refer to an execution, we are referring to a tuple $(e, \comsett)$ such that $\comsett \in \setcomm(e)$. We say that two actions $a_1$, $a_2$ \emph{commute} if $\proc(a_1) \neq \proc(a_2)$ and they do not form a matching pair. 

Given an execution $(e, \comsett)$ such that $e = a_1 \ldots a_n$ and $\comsett \in \setcomm(e)$, we say that $j \prec_{e,\comsett} j'$ if (1) $j < j'$ and (2) $a_j$, $a_{j'}$ do not commute in $\comsett$.
We now graphically characterise \emph{causally equivalent} executions, using the notion of a conflict graph. This allows us to establish equivalences between different executions in which actions can be interchanged. 

\begin{definition}[Conflict graph]
	Given an execution $(e, \comsett)$, the conflict graph $\cgraph(e,\comsett)$ is the directed graph $(\comsett, \rightarrow_{e, \comsett})$ where for all interactions $\exch_1, \exch_2 \in \comsett$, $\exch_1 \rightarrow_{e, \comsett} \exch_2$ if there is $j_1 \in \exch_1$ and $j_2 \in \exch_2$ such that $j_1 \prec_{e, \comsett} j_2$.
\end{definition}

The conflict graph corresponding to Example~\ref{ex:validcomms}, 
$\cgraph(e,\comsett_1)$, is:
\begin{center}
		\begin{tikzpicture}[>=stealth,node distance=1cm,shorten >=1pt,
			every state/.style={text=black, scale =0.5, minimum size=0.1em}, semithick,
			font={\fontsize{8pt}{12}\selectfont}]

			\node[state] (place1) at (0, 0) {$\{1,5\}$};
			\node[state, right=1 cm of place1] (place2) {$\{2,3\}$};
			\node[state, right=1 cm of place2] (place3) {$\{4\}$};

			\path [-stealth]
			(place1) edge[bend left = 30](place2)
			(place2) edge[](place3)
			(place2) edge[bend left = 30](place1);

	\end{tikzpicture}

 \end{center}
Two executions $(e, \comsett)$ and $(e', \comsett')$ are causally equivalent, denoted by $(e, \comsett) \sim (e', \comsett')$, if their conflict graphs are isomorphic.

We are now ready to define $\iRSC$  systems, which is the extension of $\RSC$  to include interferences. Intuitively, $\iRSC$ executions can be reordered to mimic rendezvous (or synchronous) communication. In the case with interference, we enforce that every valid communication is equivalent to a RSC execution.  

\begin{definition}[\textsc{$i$-RSC}  system] An execution $(e, \comsett)$ is $\iRSC$ if all matching pairs in $\comsett$ are of the form $\{j, j+1\}$. A system $\system$ is $\iRSC$ if for all tuples $(e, \comsett)$ such that $e \in \exec(\system)$ and $\comsett \in \setcomm(e)$, we have $\cgraph(e, \comsett) = \cgraph(e', \comsett')$ where $(e', \comsett')$ is an $\iRSC$ execution.
\end{definition}

\begin{example}
\label{ex:irsc}
From Ex.~\ref{ex:validcomms}, $(e, \comsett_2)$ is an $\iRSC$ execution, but $(e, \comsett_1)$ is neither an $\iRSC$ execution nor equivalent to one, as message $\msg{a}$ has to be sent before message $\msg{b}$ is received by process $\procp$ while message $\msg{b}$ has to be sent before the corresponding message (which is now $\msg{c}$ due to corruption) is received by process $\procq$.
\end{example}

 This is the strictest version, however, this can be adapted to include only one communication by assuming a single instance of $\comsett$ instead of all. We formalise our observation about non-$\iRSC$ behaviours from Example
\ref{ex:irsc}, and show that 
$\iRSC$ still maintains the good properties of the conflict graph as in \cite{di_giusto_multiparty_2023}.

\begin{restatable}{lemma}{rscconflict}
	An execution $(e, \comsett)$ is causally equivalent to an $\iRSC$ execution iff the associated conflict graph $\cgraph(e, \comsett)$ is acyclic.

\end{restatable}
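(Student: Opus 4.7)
The statement is a purely combinatorial property of conflict graphs, independent of the particular system semantics, and it parallels the DAG-of-interactions characterisation used in Mazurkiewicz trace theory and in the original $\RSC$ result of \cite{di_giusto_multiparty_2023}. I would prove the two implications separately, reusing the skeleton of that argument while re-examining every place where the notion of matching pair has been relaxed to allow an $i$-matching pair.

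\textbf{Forward direction ($\Rightarrow$).} Assume $(e,\comsett) \sim (e',\comsett')$ with $(e',\comsett')$ an $\iRSC$ execution. Since causal equivalence is defined through isomorphism of conflict graphs, it suffices to prove that $\cgraph(e',\comsett')$ is acyclic. In an $\iRSC$ execution every interaction is either a singleton $\{j\}$ or a pair $\{j,j{+}1\}$. I would assign to each interaction $\exch$ its minimum index $\min(\exch)$ and argue: whenever $\exch_1 \rightarrow_{e',\comsett'} \exch_2$, there exist $j_1 \in \exch_1$, $j_2 \in \exch_2$ with $j_1 \prec_{e',\comsett'} j_2$, hence $j_1 < j_2$; because each interaction occupies at most two consecutive indices and distinct interactions cannot share an index, this forces $\min(\exch_1) < \min(\exch_2)$. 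Thus $\rightarrow_{e',\comsett'}$ embeds into the strict linear order given by $\min$, and the graph is acyclic.

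\textbf{Reverse direction ($\Leftarrow$).} Suppose $\cgraph(e,\comsett)$ is acyclic and fix any topological linearisation $\exch_1,\ldots,\exch_N$. Build $e'$ by concatenating, in this order, the actions attached to each interaction: for a matching pair $\exch_i = \{j,j'\}$ with $j<j'$ output $a_j \cdot a_{j'}$; for an unmatched send $\{j\}$ output $a_j$. Let $\comsett'$ be the induced set of matching pairs/singletons at the new positions in $e'$. By construction every matching pair in $\comsett'$ occupies two consecutive indices, so $(e',\comsett')$ is $\iRSC$. What remains is to exhibit a conflict-graph isomorphism; the natural candidate is the identity on the vertex set $\comsett = \comsett'$ (up to re-indexing), so the essential task is to check that edges coincide.

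\textbf{Key step and main obstacle.} The delicate point is the edge correspondence. An edge $\exch_1 \rightarrow \exch_2$ in $\cgraph(e,\comsett)$ witnesses two non-commuting actions $a_{j_1}, a_{j_2}$ with $j_1 < j_2$; I must show the same pair is still in the same relative order and still non-commuting in $e'$. Commutation depends only on the two actions and on $\comsett$ (same process, or forming a matching pair), and $\comsett'$ tracks exactly the same pairings, so non-commutation is preserved. The ordering claim is precisely where acyclicity bites: if the topological sort ever inverted two non-commuting actions it would realise an edge opposite to the sort, a contradiction. Conversely, any edge in $\cgraph(e',\comsett')$ comes from a non-commuting pair whose corresponding interactions must, by the same argument run backwards, have been related in the original graph. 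The subtlety to get right is the bookkeeping for $i$-matching pairs: because of lossiness, corruption or swapping, a matched send and receive may be very far apart in $e$, and one must verify that the per-process prefix order of $e$ is respected by the rebuilt $e'$ — which it is, because two actions of the same process never commute and are therefore forced into their original order by any topological sort. Once this invariance is established the identity on interactions yields the required isomorphism, completing the equivalence.
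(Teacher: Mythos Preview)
Your proposal is correct and follows essentially the same approach as the paper's proof: both directions use the $\min$-index argument for acyclicity of an $\iRSC$ conflict graph, and a topological linearisation of the DAG to rebuild an $\iRSC$ execution whose conflict graph is isomorphic to the original. Your treatment of the edge-correspondence step is in fact more explicit than the paper's, which defers the bookkeeping to \cite{di_giusto_multiparty_2023}; the additional care you take over $i$-matching pairs is appropriate but does not change the argument's shape.
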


A borderline violation for interferences defined below is a key concept for the decidability of $\RSC$ systems. Intuitively, it provides a ``minimal counter-example'' for non-$\RSC$ behaviour.

\begin{definition}[Borderline violation] An execution $(e, \comsett)$ is a borderline violation if {\rm (1)} $(e, \comsett)$ is not causally equivalent to an $\iRSC$ execution, {\rm (2)} $e = e' \cdot \recact{c}{m}$ for some execution $e'$ such that {\rm (a)} for all $\comsett' \in \setcomm(e')$, $(e', \comsett')$ is equivalent to an $\iRSC$ execution and {\rm (b)} there exists $\comsett_1 \in \setcomm(e')$ such that $(e', \comsett_1)$ is an $\iRSC$ execution.
\end{definition}
\vspace{-1mm}

\begin{restatable}{lemma}{bvrsc}
$\system$ is $\iRSC$ if and only if for all $e \in \exec(\system)$ and $\comsett \in \setcomm(e)$, $(e, \comsett)$ is not a borderline violation.
\end{restatable}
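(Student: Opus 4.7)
The plan is to prove both directions of the iff, with the forward direction being essentially immediate and the work concentrated in the contrapositive of the backward direction. For ($\Rightarrow$), if $(e,\comsett)$ is a borderline violation, condition~(1) of the definition states directly that $(e,\comsett)$ is not causally equivalent to any $\iRSC$ execution, so $\system$ cannot be $\iRSC$; contraposition yields the claim.

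For ($\Leftarrow$), I argue by contrapositive. Assume $\system$ is not $\iRSC$. By the preceding conflict-graph lemma, there exists $(e,\comsett)$ with $e\in\exec(\system)$, $\comsett\in\setcomm(e)$, and $\cgraph(e,\comsett)$ cyclic. Pick such an $(e,\comsett)$ with $|e|$ minimal, and write $e = e'\cdot a_n$. The first key step is to show that $a_n$ must be a receive. If $a_n$ were a send, then it has no later action to match it, so $\{n\}$ appears as an unmatched-send singleton in every $\comsett\in\setcomm(e)$, and in $\cgraph(e,\comsett)$ this node has no outgoing edges. Any cycle must therefore lie entirely within the subgraph obtained by deleting $\{n\}$, which is exactly $\cgraph(e',\comsett')$ for $\comsett'=\comsett\setminus\{\{n\}\}\in\setcomm(e')$. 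This would make $(e',\comsett')$ a strictly shorter bad pair, contradicting minimality.

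Next, I secure condition~(b) by exhibiting some $\comsett_1\in\setcomm(e')$ such that $(e',\comsett_1)$ is actually $\iRSC$, rather than merely equivalent to one. The idea is to replace $(e,\comsett)$ by a causally equivalent bad execution whose prefix is $\iRSC$ on the nose. By minimality and the lemma, for any fixed $\comsett'\in\setcomm(e')$ there is an $\iRSC$ execution $(\hat{e'},\hat{\comsett'})$ with $\cgraph(e',\comsett')\cong\cgraph(\hat{e'},\hat{\comsett'})$. Invoking the standard Mazurkiewicz-style property that conflict-graph-equivalent executions of a FIFO system are both valid and reach the same configuration, $\hat{e'}\cdot a_n\in\exec(\system)$. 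Pairing $a_n$ with the same send it was matched to in $\comsett$ (now at its reindexed position) yields $\hat{\comsett}\in\setcomm(\hat{e'}\cdot a_n)$ whose conflict graph is isomorphic to $\cgraph(e,\comsett)$, so it is still cyclic. Thus $(\hat{e'}\cdot a_n,\hat{\comsett})$ satisfies condition~(1); condition~(b) holds via $\comsett_1=\hat{\comsett'}$; and condition~(a) follows from length-minimality applied to the strictly shorter $\hat{e'}$. This is the sought borderline violation.

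The main obstacle is condition~(b): plain length-minimality only provides $\iRSC$-\emph{equivalence} of the prefix, whereas the definition demands that $(e',\comsett_1)$ itself have all matching pairs of the form $\{j,j+1\}$. The reordering step above — replacing $e$ by a causally equivalent $\hat{e'}\cdot a_n$ whose prefix is $\iRSC$ on the nose — bridges this gap, relying on the standard but implicit fact that causal equivalence preserves both executability and reachable configurations in FIFO systems with interference, together with stability of the conflict graph under such isomorphic reorderings.
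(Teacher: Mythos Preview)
Your proposal is correct and follows the paper's strategy: pick a length-minimal execution not causally equivalent to an $\iRSC$ one, reorder its prefix to be $\iRSC$ on the nose, and exhibit the resulting execution as the borderline violation. The only minor variation is that you argue the last action must be a receive via a direct sink-node observation in the conflict graph (an unmatched final send has no outgoing edges, so any cycle already lives in the prefix), whereas the paper derives the same conclusion after the reordering step by noting that an $\iRSC$ prefix followed by a send is still $\iRSC$; you are also more explicit than the paper about condition~(b) and about the borderline violation being the reordered execution $(\hat{e'}\cdot a_n,\hat{\comsett})$ rather than the original $(e,\comsett)$.
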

Following the same approach as in \cite{di_giusto_multiparty_2023}, we show that inclusion into the $\iRSC$ class is decidable. %
For simplicity, we construct the following sets: $\actset_{nr} = \{\sract{c}{m} \mid \sendact{c}{m} \in \actset, \recact{c}{m'} \in \actset\} \cup \{\sendact{c}{m}\mid\sendact{c}{m} \in \actset\}$ and $\actset_{?} = \{\recact{c}{m}\mid\recact{c}{m} \in \actset\}$. Note that $\sract{c}{m}$ could include a send-receive pair where the message sent is different from the one received. This ensures inclusion of matching pairs due to corruption. An $\iRSC$ execution can be represented by a word in $\actset_{nr}^*$ and a borderline violation by a word in $\actset_{nr}^*.\actset_{?}$. We first show that the set of borderline violations is regular.

\begin{restatable}{lemma}{abv}
Let $\system$ with $\product(\system) = (Q, \Sigma, \chanset, \actset, \transrel, q_o)$. There is a non-deterministic finite state automaton $\aubv$ computable in time $\mathcal{O}(|\chanset|^3|\msgset|^2)$ such that $\lang(\aubv) = \{ e \in \actset_{nr}^*.\actset_? \mid \exists \comsett \in \setcomm(e) \text{ such that } (e, \comsett) \text{ is a borderline violation}\}$.

\end{restatable}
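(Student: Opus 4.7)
The plan is to give a structural characterization of borderline violations purely in terms of word structure, then construct a small NFA that nondeterministically guesses a witness. I would follow the template of the perfect RSC automaton construction of di Giusto et al., adapted to allow for $i$-matching pairs.

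\textbf{Step 1 (Structural characterization).} By the preceding lemma relating $\iRSC$-equivalence to acyclicity of $\cgraph$, and by the definition of borderline violation (prefix $e'$ is $\iRSC$-equivalent but $e' \cdot \recact{c}{m}$ is not), any cycle in $\cgraph(e' \cdot \recact{c}{m}, \comsett)$ must pass through the interaction $\exch^{\star} = \{j_1, j_{\mathit{final}}\}$ containing the final receive. Here $j_1$ is the position of some $\sendact{c}{m'}$ in $e'$ (an unmatched send under a suitable valid communication for $e'$), and because of possible corruption we allow $m' \neq m$.

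\textbf{Step 2 (Reduction to a 2-cycle witness).} I would argue that every borderline violation admits a witness with a simple 2-cycle $\exch^\star \to \exch_2 \to \exch^\star$, where $\exch_2$ is an intermediate interaction strictly between $j_1$ and $j_{\mathit{final}}$. Since singletons in $\comsett$ are unmatched sends involving a single process, and both edges of the 2-cycle require $\exch_2$ to non-commute with an action of $p$ and an action of $q$ (where $c = \chan{p}{q}$), $\exch_2$ must be a matched pair $\sract{d}{m''}$ with $d \in \{\chan{p}{q}, \chan{q}{p}\}$ for some $m''$. The reduction from longer cycles to a 2-cycle proceeds by choosing a minimal cycle in $\cgraph$ and bypassing intermediate nodes by re-selecting $\comsett$; here the added flexibility of $i$-matching under interferences actually helps, since the same send can be rematched with several receives.

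\textbf{Step 3 (NFA construction).} Based on this characterization I build $\aubv$ with four kinds of states: an initial state $s_0$; ``post-match'' states $s_1^{c, m, m'}$, entered after guessing the matching send; ``post-intermediate'' states $s_2^{c, m}$, entered after detecting the intermediate matched pair; and an accept state $s_3$. The transitions are self-loops over $\actset_{nr}$ at each non-accept state (to skip irrelevant actions), a nondeterministic guess transition $s_0 \xrightarrow{\sendact{c}{m'}} s_1^{c, m, m'}$, an intermediate transition $s_1^{c, m, m'} \xrightarrow{\sract{d}{m''}} s_2^{c, m}$ for each $d \in \{\chan{p}{q}, \chan{q}{p}\}$ and $m'' \in \msgset$, and the accepting transition $s_2^{c, m} \xrightarrow{\recact{c}{m}} s_3$. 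Counting yields $O(|\chanset| \cdot |\msgset|^2)$ states and the transition budget fits within the claimed $\mathcal{O}(|\chanset|^3 |\msgset|^2)$ bound. Soundness and completeness of $\aubv$ then follow directly from the characterization in Steps~1--2.

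\textbf{Main obstacle.} The hard part is Step~2: proving that every borderline violation indeed admits a 2-cycle witness under interferences. Corruption, lossiness, and out-of-ordering make the $i$-matching relation substantially more permissive than the perfect case, so I would have to verify carefully that an arbitrary cycle in $\cgraph(e, \comsett)$ can always be collapsed into a 2-cycle by re-pairing sends and receives within the allowed $i$-matching relation, without destroying the property that $e'$ remains $\iRSC$-equivalent in the resulting valid communication. This requires a case analysis on the kind of interference and the local structure of the minimal cycle.
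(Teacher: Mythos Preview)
Your Step~2 is not just ``the hard part''---it is false in the multiparty setting, and this breaks the whole construction. Consider four processes $\procp,\procq,\procr,\procs$ with $c=\chan{\procp}{\procq}$, and the $\iRSC$ prefix
\[
\sendact{\procp}{\procq}{m}\;\cdot\;\sract{\chan{\procp}{\procr}}{a}\;\cdot\;\sract{\chan{\procr}{\procs}}{b}\;\cdot\;\sract{\chan{\procs}{\procq}}{d}
\]
followed by the final reception $\recact{\procp}{\procq}{m}$. The interaction $\exch^\star$ pairing position~$1$ with the final receive participates in the $4$-cycle $\exch^\star\to\{2,3\}\to\{4,5\}\to\{6,7\}\to\exch^\star$ (consecutive nodes share processes $\procp,\procr,\procs,\procq$ respectively), but no single intermediate interaction touches both $\procp$ and $\procq$, so there is no $2$-cycle through $\exch^\star$. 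There is also nothing to ``re-pair'': every channel carries exactly one message, so the $i$-matching flexibility buys you nothing here. Your additional restriction that the intermediate interaction lives on $\chan{\procp}{\procq}$ or $\chan{\procq}{\procp}$ is therefore also wrong---it essentially hard-codes the binary case.

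The paper's construction avoids this pitfall by \emph{not} bounding the cycle length. Instead it keeps in the state only the channel $c$ of the guessed match together with the \emph{last} action $a$ seen along the putative cycle, plus a bit distinguishing ``before'' and ``after'' the first conflict edge. From a state $(\procx{c},a,1)$ it can repeatedly step to $(\procx{c},a',1)$ whenever $\proc(a)\cap\proc(a')\neq\emptyset$, so a chain of arbitrary length is traced out one edge at a time; the final receive is accepted precisely when it shares a process with the current $a$. Because only the last action matters for extending the chain, the state space stays within $\chanset\times\actset\times\{0,1\}$, giving the $\mathcal{O}(|\chanset|^3|\msgset|^2)$ bound without any cycle-collapsing lemma. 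Your Step~1 is fine, and your automaton skeleton is close in spirit; the fix is to replace the hard-wired two phases $s_1,s_2$ by a state that carries the last guessed action and allows an unbounded number of ``shares-a-process'' hops.
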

Next we show that the subset of executions $\exset(\system)$ that begin with an $\iRSC$ prefix and terminate with a reception is regular.
The construction of the automaton $\aursc$ recognising such a language mimics the $\iRSC$ executions of the original system $\system$, storing only the information on non-empty buffers, guessing which is the send message that will be matched by the final reception.

For the following result, we let the size  $|\automata|$ of an automaton $\automata = (\cstateset, \transrel, \initp{})$ be $|\cstateset| + |\transrel|$. Moreover, the size $|\system|$ of a system $\system = (\automata_{\procp})_{\procp \in \procset}$ = $\sum_{\procp \in \procset}|\automata_{\procp}|$.

\begin{restatable}{lemma}{arsc}
	Let $\system$ be a FIFO system. There exists a
        non-deterministic finite state automaton
        $\aursc$ over $\actset_{nr} \cup \actset_{?}$ such that
        $\lang(\aursc) = \{ e\cdot \recact{\chan{\procp}{\procq}}{m} \in \actset_{nr}^*.\actset_? \mid  e\cdot \recact{\chan{\procp}{\procq}}{m} \in \exec(\system)$ and  $\exists \comsett \in \setcomm(e)$ such that $(e, \comsett)$  is an $\iRSC$ execution$\}$, which can be constructed
        in time $\mathcal{O}(n^{|\procset|+2}|\chanset|^2\times
        2^{|\chanset|})$, where $n$ is the size of $\system$. 
\end{restatable}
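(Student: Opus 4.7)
The plan is to construct $\aursc$ as a nondeterministic automaton over $\actset_{nr}\cup\actset_{?}$ that simulates $\iRSC$ executions of $\system$ and admits one final reception that breaks the $\iRSC$ property. A state of $\aursc$ stores three items: (i) a global state $\globq\in\prod_{\procp\in\procset}\cstateset_\procp$ of $\product(\system)$; (ii) a subset $N\subseteq\chanset$ flagging which channels currently carry at least one pending unmatched send; and (iii) an optional guess $(c_g,\msg{m}_g)\in\chanset\times\msgset$ marking the unmatched send that will eventually be paired with the final receive. Transitions come in three flavours. On $\sract{c}{\msg{m}}\in\actset_{nr}$ with $c=\chan{\procp}{\procq}$, $\aursc$ advances $q_\procp$ along a $\sendact{c}{\msg{m}}$ transition and, atomically, $q_\procq$ along some local $\recact{c}{\msg{m}'}$ transition (the mismatch $\msg{m}\neq\msg{m}'$ is allowed, encoding corruption); $N$ is updated to reflect whether the matched pair drains the last pending message on $c$. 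On $\sendact{c}{\msg{m}}\in\actset_{nr}$, only the sender's state advances, $c$ is added to $N$, and the automaton may nondeterministically commit $(c,\msg{m})$ as its guess. On the terminating $\recact{c}{\msg{m}}\in\actset_?$, $\aursc$ fires an enabled local $\recact{c}{\msg{m}}$ transition of the receiver, provided a guess with $c_g=c$ has already been committed, and moves to its unique accepting state.

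Correctness decomposes into a routine completeness direction and a subtler soundness direction. \emph{Completeness}: for any $\iRSC$ execution $(e,\comsett)$ such that $e\cdot\recact{c}{\msg{m}}\in\exec(\system)$, one reads $e$ action by action, emitting $\sract{c'}{\msg{m}'}$ for each matched pair of $\comsett$ and $\sendact{c'}{\msg{m}'}$ for each unmatched send, and commits the guess at the send paired with the final receive. The main obstacle is \emph{soundness}: since $\aursc$ tracks only occupancy bits in $N$ rather than exact channel contents, we must show that every accepting run can be realised as a concrete execution of $\system$ under the interference semantics of Definition~\ref{def:irs}. The axioms of the interference model bridge the gap: corruption $a\succeq b$ legitimises the $\msg{m}/\msg{m}'$ mismatch in matched pairs, out-of-order $a\cdot b\succeq b\cdot a$ brings the guessed message to the head of its channel when the final receive fires, lossiness $a\succeq\varepsilon$ drains any spurious leftover messages before the final receive, and Integrity justifies insisting on a committed guess before the final receive, since no message can appear from an empty buffer. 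A careful induction on the length of the accepting run then assigns consistent buffer contents at each step and verifies the clauses of Definition~\ref{def:irs}.

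The complexity bound follows by counting: the state set has at most $n^{|\procset|}\cdot 2^{|\chanset|}\cdot(1+|\chanset||\msgset|)$ elements, and for each state the number of outgoing transitions is bounded by the size of $\actset$; folding $|\msgset|\le n$ into one $n$ factor yields the stated $\mathcal{O}(n^{|\procset|+2}|\chanset|^2\cdot 2^{|\chanset|})$ construction time, matching the bound obtained for the failure-free case in \cite{di_giusto_multiparty_2023}.
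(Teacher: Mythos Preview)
Your construction is essentially the paper's: both track (i) a global state of $\product(\system)$, (ii) a subset of $\chanset$ recording which channels carry pending unmatched sends, and (iii) a non-deterministic guess of the send that will be matched by the terminal receive, and both feed the automaton with $\actset_{nr}$-letters for matched pairs and unmatched sends before a single $\actset_?$-letter to finish.

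Two minor deviations are worth noting. First, you store the full pair $(c_g,\msg{m}_g)$ in the guess, whereas the paper stores only the channel component $\chi\in\{\varepsilon\}\cup\chanset$; since your acceptance condition on the final $\recact{c}{\msg{m}}$ only tests $c_g=c$ (corruption absorbs any message mismatch), the $\msg{m}_g$ component is dead weight. You recover the stated bound by folding $|\msgset|\le n$, which is fine, but the simpler state space already gives it. Second, the paper parameterises the construction on the interference model: in the absence of out-of-order it adds the side condition that a matched pair $\sract{c}{\msg{m}}$ may fire only when $c\notin S$. Your soundness argument invokes all three axioms (lossiness to drain leftovers, out-of-order to surface the guessed message, corruption for the mismatch in $\sract{c}{\msg{m}}$), so as written it covers the full interference model but would need the analogous side condition to remain sound when, say, only corruption is present. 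With that caveat, the approach and the complexity accounting match the paper's.
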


Using the above lemmas, we derive the following main theorem in this
section, which states that the inclusion of
an $\iRSC$ system is decidable; and the complexity is comparable to that of checking inclusion to \RSC
\cite[Theorem~12]{di_giusto_multiparty_2023}.

\begin{restatable}{theorem}{rscmainthm}
\label{thm:rscmain}
Given a system $\system$ of size $n$, deciding whether it is an $\iRSC$ system can be done in time $\mathcal{O}(n^{|\procset|+2}|\chanset|^5\times
2^{|\chanset|} \times |\Sigma|^2)$.

\end{restatable}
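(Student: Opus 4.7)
The plan is to combine the two preceding lemmas into a standard automata-theoretic emptiness check, following the template of \cite[Theorem~12]{di_giusto_multiparty_2023} but adapted to the enlarged alphabet $\actset_{nr}\cup\actset_?$ that encodes $i$-matching pairs.

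\textbf{Step 1: Characterisation via borderline violations.} By the borderline violation lemma, $\system$ is $\iRSC$ if and only if no pair $(e,\comsett)$ with $e\in\exec(\system)$ and $\comsett\in\setcomm(e)$ is a borderline violation. Hence it suffices to decide whether \emph{some} execution of $\system$, equipped with some valid communication, is a borderline violation. Because the alphabet $\actset_{nr}$ encodes each matched send--receive (including corruption-induced mismatches) as a single letter $\sract{c}{m}$, the choice of $\comsett$ on the prefix is determined by the word itself, so the question reduces to a language problem over $\actset_{nr}^\ast\cdot\actset_?$.

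\textbf{Step 2: Intersection of $\aubv$ and $\aursc$.} The automaton $\aubv$ recognises all structural borderline violations, i.e., words in $\actset_{nr}^\ast\cdot\actset_?$ that, as abstract words, satisfy both clauses (2a) and (2b) of the borderline violation definition together with non-$\iRSC$ of the full word. The automaton $\aursc$ recognises words $e\cdot \recact{c}{m}$ that are actually firable in $\system$ and whose prefix $e$ admits an $\iRSC$ valid communication (clause (2b) again). Their intersection $\lang(\aubv)\cap\lang(\aursc)$ therefore consists exactly of words that are (i) executable in $\system$, (ii) whose prefix is causally equivalent to an $\iRSC$ execution for every valid communication and is $\iRSC$ for at least one, and (iii) whose extension by $\recact{c}{m}$ breaks this property. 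This is precisely the set of borderline violations of $\system$. Consequently, $\system$ is $\iRSC$ if and only if $\lang(\aubv)\cap\lang(\aursc)=\emptyset$.

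\textbf{Step 3: Complexity.} Construct the synchronous product $\aubv\times\aursc$ on the common alphabet $\actset_{nr}\cup\actset_?$ and test non-emptiness, which is linear in the size of the product. By the preceding two lemmas, $\aubv$ has size $\mathcal{O}(|\chanset|^3|\msgset|^2)$ and $\aursc$ has size $\mathcal{O}(n^{|\procset|+2}|\chanset|^2\times 2^{|\chanset|})$, so the product has size
\[
\mathcal{O}\bigl(n^{|\procset|+2}\,|\chanset|^{5}\,2^{|\chanset|}\,|\Sigma|^{2}\bigr),
\]
matching the stated bound.

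\textbf{Main obstacle.} The delicate point is the alignment of the two language characterisations: $\aubv$ speaks about abstract words satisfying the structural BV conditions, whereas $\aursc$ enforces executability in $\system$ and $\iRSC$-feasibility of the prefix. One has to verify that words in the intersection, read back through the encoding of $\actset_{nr}$ into concrete pairs of send/receive actions, really correspond to genuine borderline violations of $\system$, and conversely that every such borderline violation is captured — in particular that the $\iRSC$ witness $\comsett_1$ produced on the prefix by $\aursc$ and the BV witness $\comsett$ on the full word produced by $\aubv$ can be taken to be compatible. This is handled by observing that the $\actset_{nr}$-letters in the prefix uniquely determine the prefix's valid communication that $\aursc$ accepts on, and that extending it with the singleton or $i$-matching interaction induced by the final receive recovers the full $\comsett$ on which $\aubv$ accepts.
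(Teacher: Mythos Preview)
Your proposal is correct and takes essentially the same approach as the paper: reduce the $\iRSC$ question, via the borderline-violation lemma, to emptiness of the intersection of $\aubv$ and $\aursc$, and read off the complexity as the product of the two automaton sizes. The paper's own proof is a one-liner stating exactly this (it writes the intersection as $\lang(\aursc)\cdot\actset_{nr}\cap\lang(\aubv)$, which appears to be a slip given that both lemma statements already place the languages inside $\actset_{nr}^\ast\cdot\actset_?$); your formulation $\lang(\aubv)\cap\lang(\aursc)$ and your explicit discussion of why words in the intersection correspond to genuine borderline violations are, if anything, more careful than what the paper provides.
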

\section{$k$-Multiparty Compatibility with interferences}\label{sec:kmc}
We extend our analyses to consider 
\emph{$k$-multiparty compatibility} ($\kmc$) which was introduced in \cite{lange_verifying_2019}
for a subset of FIFO systems, called 
\emph{communicating session automata} (CSA). 
CSA strictly include systems corresponding to asynchronous multiparty session types \cite{denielou_multiparty_2013}.

\begin{definition}[Communicating session automata]
\label{def:csa}
A deterministic FIFO automaton which has no mixed states is defined as a \emph{session automaton}. FIFO systems comprising session automata are referred to as communicating session automata (CSA).
\end{definition}
In this section, we only consider communicating session automata. We begin by recalling the definition of $\kmc$ which is composed of two properties, $k$-safety and $k$-exhaustivity. 

\begin{definition}[$k$-Safety, Definition~4 in \cite{lange_verifying_2019}]\label{def:ksafe}
A communicating system $\system$ is \emph{$k$-safe} if the following holds for all $\confq{\globq}{\chancon} \in \RSK{k}{\system}$:
 \begin{description}
 	\item[\textnormal{\textsc{($k$-er)}}] $\forall \chan{\procp}{\procq} \in \chanset$, if $
 	\chancon_{\chan{\procp}{\procq}} = m.u$ then $\confq{\globq}{\chancon} \xrightarrow{*}_k\xrightarrow{\recact{\chan{\procp}{\procq}}{m}}_k$.
 	\item [\textnormal{\textsc{($k$-pg)}}] if $q_\procp$ is receiving, then $\confq{\globq}{\chancon} \xrightarrow{}_k\xrightarrow{\recact{\chan{\procp}{\procq}}{m}}$ for some $m \in \msgset$.
 \end{description}
\end{definition}
\noindent The $k$-safety condition is composed of two properties, the first being eventual reception \textsc{($k$-er)} which ensures that every message sent to a channel is eventually received. The other property is progress \textsc{($k$-pg)} 
where the system is not ``stuck'' at any receiving state.  

A system is $k$-exhaustive if for all $k$-reachable configurations, whenever a send action is enabled, it can be fired within a $k$-bounded execution. 

\begin{definition}[$k$-Exhaustivity, Definition~8 in \cite{lange_verifying_2019}]\label{def:kexh}
	A communicating system $\system$ is $k$-exhaustive if
	for all $\confq{\globq}{\chancon} \in \RSK{k}{\system}$ and $\chan{\procp}{\procq} \in \chanset$, if $q_\procp$ is a sending state, then for all $(q_\procp, \sendact{\chan{\procp}{\procq}}{m}, q'_\procp)\in \delta_\procp$, there exists $\confq{\globq}{\chancon} \xrightarrow{*}_k \xrightarrow{\sendact{\chan{\procp}{\procq}}{m}}_k$. 
\end{definition}

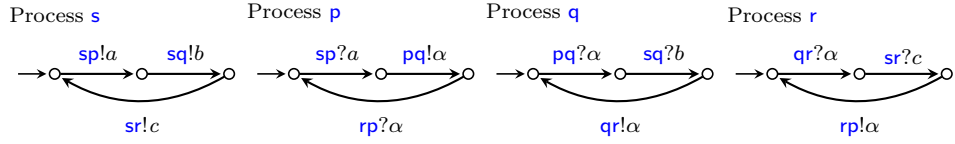
\begin{figure}
 	\vspace{-5mm}
\small{
\begin{center}
	\begin{tikzpicture}[>=stealth,node distance=1.1cm,shorten >=1pt,
		every state/.style={text=black, scale =0.45, minimum size=0.1em}, semithick,
		font={\fontsize{8pt}{12}\selectfont}]
		\node[state, initial, initial text=] (place1) at (0,0) {};
		\node[state, right=1 cm of place1] (place2) {};
		\node[state, right=1cm of place2] (place3) {};
		\node[above= 0.5cm of place1] (heading) {Process $\procx{s}$};
		
		\path [-stealth, thick]
		(place2) edge node[above=0.01cm] {$\sendact{\chan{s}{q}}{b}$}(place3)
		(place1) edge node[above=0.01cm] {$\sendact{\chan{s}{p}}{a}$} (place2)
		(place3) edge[bend left] node[below=0.1cm] {$\sendact{\chan{s}{r}}{\textit{c}}$} (place1);

		\node[state, initial, initial text=, right=3cm of place1] (place5) {};
		\node[state, right=1 cm of place5] (place6) {};
		\node[state, right=1cm of place6] (place7) {};
		\node[above=0.5cm of place5] (heading2) {Process $\procp$};
		
		\path [-stealth, thick]
		(place5) edge node[above=0.01cm]{$\recact{\chan{s}{p}}{a}$}(place6)
		(place6) edge node[above=0.01cm] {$\sendact{\chan{p}{q}}{\alpha}$} (place7)
		(place7) edge[bend left] node[below=0.1cm] {$\recact{\chan{r}{p}}{\alpha}$} (place5);

		\node[state, initial, initial text=, right=3cm of place5] (place8) {};
		\node[state, right=1 cm of place8] (place9) {};
		\node[state, right=1cm of place9] (place10) {};
		\node[above=0.5cm of place8] (heading3) {Process $\procq$};
		
		\path [-stealth, thick]
		(place8) edge node[above=0.01cm]{$\recact{\chan{p}{q}}{\alpha}$}(place9)
		(place9) edge node[above=0.01cm] {$\recact{\chan{s}{q}}{b}$} (place10)
		(place10) edge[bend left] node[below=0.1cm] {$\sendact{\chan{q}{r}}{\alpha}$} (place8);
		
		\node[state, initial, initial text=, right=3cm of place8] (place11) {};
		\node[state, right=1 cm of place11] (place12) {};
		\node[state, right=1cm of place12] (place13) {};
		\node[above=0.5cm of place11] (heading4) {Process $\procr$};
		
		\path [-stealth, thick]
		(place11) edge node[above=0.01cm]{$\recact{\chan{q}{r}}{\alpha}$}(place12)
		(place12) edge node[above=0.01cm] {$\recact{\chan{s}{r}}{c}$} (place13)
		(place13) edge[bend left] node[below=0.1cm] {$\sendact{\chan{r}{p}}{\alpha}$} (place11);
		
\end{tikzpicture}
\end{center}
}
\vspace*{-3mm}
\caption{The above system $\system$ is $k$-MC but has a non-regular
		reachability set.}
\label{fig:exkmc}
 \end{figure}

Example~\ref{ex:nonregular} shows 
that the reachability set of $\kmc$ is not necessarily regular,
unlike the binary half-duplex systems~\cite{cece_verification_2005}. 

\begin{example}
\label{ex:nonregular}
  	The system $\system$ depicted in Figure~\ref{fig:exkmc} is an example of a $\kmc$ system for which the reachability set is not regular. It
  	consists of four participants, sending messages amongst themselves. The first participant $\procx{s}$ can send equal number of $a$, $b$, $c$ letters in a loop to participants $\procp$, $\procq$, and $\procx{r}$ respectively. The participants $\procp$, $\procq$, and $\procx{r}$ behave similarly, so let us take the example of $\procp$. It consumes one letter from the channel $\chan{\procx{s}}{\procp}$, then as a way of synchronisation sends a message $\alpha$ to $\procq$ and waits to receive a message $\alpha$ from $\procx{r}$. This ensures that participants $\procp$, $\procq$, and $\procx{r}$ consume equal number of letters from their respective channels with $\procx{s}$. Hence, the reachability set for initial configuration $(s_0, p_0, q_0, r_0)$ is $a^n\#b^n\#c^n$ which is context-sensitive, hence non-regular. 
\end{example} 

\noindent We prove that $\kmc$ in the absence of errors,
for a large class of systems the $k$-safety property subsumes $k$-exhaustivity.

\begin{restatable}{theorem}{basicksafe}
\label{thm:basicksafe}
	If a directed CSA $\system$ is $k$-safe, then $\system$ is $k$-exhaustive.
\end{restatable}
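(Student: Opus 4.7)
The goal is to show that from any $\confq{\globq}{\chancon} \in \RSK{k}{\system}$, for every outgoing send transition $(q_\procp, \sendact{\chan{\procp}{\procq}}{m}, q'_\procp) \in \delta_\procp$ where $q_\procp$ is a sending state, there is a $k$-bounded execution from $\confq{\globq}{\chancon}$ ending with this send. Because $\system$ is a \emph{directed} CSA, the state $q_\procp$ is non-mixed (no receive transitions leave it) and directed, so every outgoing transition from $q_\procp$ is a send to the same process $\procq$, hence targets the same channel $\chan{\procp}{\procq}$.

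The plan is to split on the length of $\chancon_{\chan{\procp}{\procq}}$. If $|\chancon_{\chan{\procp}{\procq}}| < k$, then the send can be fired immediately in a $k$-bounded way, yielding a queue of length at most $k$. The interesting case is $|\chancon_{\chan{\procp}{\procq}}| = k$, where the queue must first be drained. Writing $\chancon_{\chan{\procp}{\procq}} = m_1 \cdot u$, the $(k\text{-er})$ clause of $k$-safety provides a $k$-bounded execution $\sigma$ from $\confq{\globq}{\chancon}$ whose last action fires $\recact{\chan{\procp}{\procq}}{m_1}$. Within the augmented execution $\sigma \cdot \recact{\chan{\procp}{\procq}}{m_1}$, let $\tau$ be the shortest prefix whose final action is a receive on $\chan{\procp}{\procq}$; such a prefix exists, as the final action of the augmented execution is itself such a receive.

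The core claim is that throughout $\tau$ the local state of $\procp$ stays equal to $q_\procp$, so that the desired send remains enabled after $\tau$. By minimality of $\tau$, no receive on $\chan{\procp}{\procq}$ occurs strictly before its last action, and since only $\procp$ can write to $\chan{\procp}{\procq}$ (peer-to-peer), the content of this channel can only grow during $\tau$ up to its last step. As long as $\procp$ is at $q_\procp$, any transition it could fire is a send to the already-full channel $\chan{\procp}{\procq}$, which is forbidden in a $k$-bounded execution; and $q_\procp$ being non-mixed, $\procp$ has no receive transitions either. Hence $\procp$ is blocked for the entire duration of $\tau$ and stays at $q_\procp$. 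Immediately after $\tau$ the queue $\chan{\procp}{\procq}$ has length $k-1$, so appending $\sendact{\chan{\procp}{\procq}}{m}$ yields a $k$-bounded step, and the overall execution $\tau \cdot \sendact{\chan{\procp}{\procq}}{m}$ witnesses $k$-exhaustivity for the chosen transition.

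The main obstacle is this last claim: a priori $(k\text{-er})$ only guarantees the \emph{existence} of some $k$-bounded execution reducing the queue, with no control over whether it interleaves $\procp$-moves that leave $q_\procp$. The trick is to truncate at the first reduction event on $\chan{\procp}{\procq}$ and exploit the combined effect of the channel being saturated and the sending state $q_\procp$ being directed to $\procq$, which together rule out any $\procp$-step. This is exactly where the \emph{directed} hypothesis on $\system$ is indispensable: without it, $\procp$ could simply escape $q_\procp$ by sending on some other outgoing channel, and the argument would collapse.
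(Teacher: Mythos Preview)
Your proof is correct and follows essentially the same approach as the paper's: both use the $(k\text{-}\textsc{er})$ clause to obtain a $k$-bounded execution that drains $\chan{\procp}{\procq}$, and both exploit directedness together with the absence of mixed states to conclude that $\procp$ cannot leave $q_\procp$ while the channel remains full. The paper phrases this as a proof by contradiction (assuming no $k$-bounded execution ends with the desired send, then deriving one), whereas you give a direct argument and make the reasoning more explicit by truncating at the \emph{first} receive on $\chan{\procp}{\procq}$; this shortest-prefix device cleanly justifies why no send on that channel occurs before the receive, a step the paper leaves somewhat implicit.
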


\subparagraph*{$k$-MC with interferences.} 
Theorem~\ref{thm:basicksafe}
shows that the $k$-safety is a
too strong condition in the presence of interferences. For instance, in case of lossiness, progress cannot be guaranteed. This is because there is always the potential of losing messages and being in a receiving state forever.  We are now ready to define $k$-weak multiparty compatibility. %

 \begin{definition}[$k$-Weak Multiparty Compatibility]\label{def:wkmc}
A communicating system $\system$ is \emph{weakly $\kmc$}, or $\kwmc$, if it satisfies $\textsc{$k$-er}$ and is $k$-exhaustive.%
 \end{definition}

This notion covers a larger class of systems than $\kmc$ systems, and it is more natural in the presence of errors. Moreover, we still retain the decidability of $\kwmc$ in the presence of errors. We briefly discuss weaker refinements to these properties in \S~\ref{sec:conclusion}. We conclude with the following theorem which states that the $\kwmc$ property is decidable.
 
 \begin{restatable}{theorem}{kmcmainthm}
 	\label{thm:kmcmainthm}
 	Given a system $\system$ with lossiness (resp. corruption, resp. out-of-order) errors, checking the $\kwmc$ property is decidable and \textsc{pspace}-complete.
 \end{restatable}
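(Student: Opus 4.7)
The plan is to show both decidability and the \textsc{pspace} upper and lower bounds by casting the $\kwmc$ check as a family of reachability queries on the finite $k$-bounded transition system induced by the interference-aware semantics of Definition~\ref{def:irs}.

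First, I would establish that $\RSK{k}{\system}$ remains finite and effectively computable under each of the three interference models. The Non-expansion axiom forbids unbounded growth of channel contents, so every $k$-bounded configuration admits only finitely many interference-aware successors, and the overall space of $k$-bounded configurations is exponential in $|\system|$, $|\chanset|$, $k$, and $\log|\msgset|$. Both conditions comprising $\kwmc$, namely $\textsc{$k$-er}$ and $k$-exhaustivity, are universal closures over $\RSK{k}{\system}$ of existential $k$-bounded reachability properties, so each reduces to finitely many reachability queries on this finite structure, yielding decidability.

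For the \textsc{pspace} upper bound, I would follow the approach of \cite{lange_verifying_2019} with the successor relation replaced by its interference-aware counterpart. A $k$-bounded configuration is encodable in $O(|\procset|\log n + |\chanset|\cdot k \cdot \log|\msgset|)$ bits, and a single successor, including an interference step permitted by $\succeq$ for lossiness, corruption, or out-of-order, is computable in polynomial time via local word rewrites of bounded length. An on-the-fly nondeterministic reachability procedure then certifies both properties in polynomial space, and \textsc{npspace}${}={}$\textsc{pspace} via Savitch's theorem.

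For the \textsc{pspace} lower bound, I would reduce from the \textsc{pspace}-hard $\kmc$-checking problem of \cite{lange_verifying_2019} by embedding a failure-free input instance inside a shell that neutralises the specific interference relation in use. For lossiness, pair each message with a confirmation handshake on a dedicated channel, so that any loss would violate $\textsc{$k$-er}$ only when the underlying system already violates $\kmc$. For corruption, use pairwise-disjoint tag sets per channel so that corrupted messages map outside the recognised alphabet of every process and thus merely induce stuck receives rather than new behaviour. For out-of-order, restrict $k$-bounded executions of the padded system to channel contents of length at most one, on which the swap axiom $a \cdot b \succeq b \cdot a$ is trivially inert.

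The chief obstacle is precisely this lower-bound step: a single uniform padding does not obviously preserve hardness across all three interference models, so the reduction must be carried out case-by-case, with a verification in each case that the padded system is $\kwmc$ under the chosen interference if and only if the original system is $\kmc$ in the failure-free setting. The decidability and upper-bound arguments, by contrast, follow a standard finite-state on-the-fly reachability template adapted to the interference-aware successor relation of Definition~\ref{def:irs}.
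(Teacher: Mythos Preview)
Your decidability and \textsc{pspace} upper-bound arguments coincide with the paper's: both exploit the finiteness of $\RSK{k}{\system}$ guaranteed by the Non-expansion axiom and cast $k$-exhaustivity and \textsc{$k$-er} as on-the-fly reachability over polynomially encodable $k$-bounded configurations, citing \cite{lange_verifying_2019} and \cite{bollig_propositional_2010}. The paper's proof is in fact terser than yours on this point and, notably, gives no explicit hardness argument at all---it asserts \textsc{pspace}-completeness but only argues membership, presumably leaning on the failure-free hardness of \cite{lange_verifying_2019}.

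Your lower-bound gadgets therefore go beyond what the paper supplies, but the corruption one does not work as stated. The corruption axiom adds $a \succeq b$ for \emph{all} $a,b \in \msgset$, so pairwise-disjoint tag sets across channels do nothing to stop a message on channel $\chan{\procp}{\procq}$ from being corrupted into another message in that same channel's tag set, which can trigger a different receive branch and hence genuinely new behaviour. Your construction neutralises corruption only if every channel carries a single message symbol, and then you owe a separate argument that the hard instances of \cite{lange_verifying_2019} survive that restriction. The lossiness gadget is also under-specified: a lost message leaves its channel empty, so \textsc{$k$-er} is vacuously satisfied there, and the stalled handshake you describe is a progress failure---precisely the condition $\kwmc$ drops---rather than a \textsc{$k$-er} or $k$-exhaustivity violation, so the biconditional you need is not evident. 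The out-of-order case is the cleanest, but it presupposes that the hardness in \cite{lange_verifying_2019} already holds at $k=1$, which you should check rather than assume.
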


\section{Crash-stop failures} \label{sec:crashstop}
Session types
\cite{THK,honda.vasconcelos.kubo:language-primitives,honda_multiparty_2008}
are a type discipline to ensure communication safety for message
passing systems. Most session types assume a scenario where
participants operate reliably, i.e. communication happens without
failures. To model systems closer to the real world, Barwell et
al.~\cite{barwell_generalised_2022, barwell_designing_2023} introduced
session types with \emph{crash-stop failures}. In this section, we
consider the same notion for communicating systems which we define as
crash-handling.

\subsection{Crash-handling FIFO systems}
\label{subsec:fifocs}

We extend this framework to FIFO systems. As in \cite{barwell_generalised_2022,barwell_designing_2023}, we declare a (potentially empty) set of \emph{reliable processes}, which we denote as $\relset \subseteq \procset$. 
If a process is assumed reliable, the other processes can interact with it 
without needing to handle its crashes. 
Hence if $\relset = \procset$, there is no additional crash-handling behaviour for the system. In this way, we can model a mixture of reliable and unreliable processes. For simplicity in the construction, we enforce an additional constraint that in the crash-handling branches, there is no receive action from the crashed process.

We use a shorthand for the \emph{broadcast} of 
a message $m \in \Alphset$ by process $\procp \in \procset$ 
along all outgoing channels:
$(q, \broadcast{m}{p}, q')$ 
if 
$q \xrightarrow{ \actsend{\procp}{\procx{r_1}}{m}.
	\actsend{\procp}{\procx{r_2}}{m}\ldots
	\actsend{\procp}{\procx{r_n}}{m}}
q'$ 
such that  
$n=|\chansetout|$ and $\procx{r_i} \neq \procx{r_j}$ for
all $\procx{i} \neq \procx{j}$. We denote by $\cbroadcast{m}{\procp}$ the concatenation $\intact{crash}_\procp \cdot \broadcast{m}{p}$ where $\intact{crash}_\procp \in \internalset$ is an internal action reserved for when process $\procp$ crashes.

Let $\system = (\automata_\procp)_{\procp \in \procset}$ be a FIFO system %
over $\Alphset \uplus \{\text{\lightning}\}$. Let the set of reliable processes be $\relset \subseteq \procset$. For each ${\procp \in \procset}$: \begin{itemize}
	\item We divide the state set as follows : $\cstateset_\procp = \cstateset_{\procp, 1} \uplus \cstateset_{\procp, 2} \uplus \cstateset_{\procp, 3}$.
	\item Let $\actset \subseteq (\chanset \times \{!, ?\} \times(\msgset \uplus \{\text{\lightning}\}) )
	\uplus \internalset$ be the set of actions. 
	\item We split $\transrel_\procp = \transrel_{\procp, 1} \uplus \transrel_{\procp, 2}$ such that: \begin{itemize}
		\item $\transrel_{\procp, 1} \subseteq \cstateset_{\procp,1} \times (\chanset \times \{!, ?\} \times \msgset) \times (\cstateset_{\procp,1} \cup \cstateset_{\procp,2})$, and
		\item $\transrel_{\procp, 2} \subseteq \cstateset_{\procp} \times [( \chanset \times \{!, ?\} \times \{\text{\lightning}\}) \cup \internalset] \times \cstateset_{\procp}$. 
	\end{itemize}  
\end{itemize}
 We say that a process $\procp$ has crash-handling behaviour in $\system$ if $\transrel_{\procp, 2}$ is the smallest set of transitions such that:
\begin{enumerate}
	\item \emph{Crash handling} $\CI$: For all $(q, \actrec{\procr}{\procp}{a}, q') \in \transrel_{\procp, 1}$ such that $\initconf \xrightarrow{e} \conf = \confq{\globq}{\chancon}$ and $\globq_\procp = q$ and $\procr \in \procset \setminus \relset$, there exists $q'' \in (\cstateset_{\procp,1} \cup \cstateset_{\procp,2})$ such that $(q, \actrec{\procr}{\procp}{\text{\lightning}}, q'') \in \transrel_{\procp, 2}$. %

	\item \emph{Crash broadcast} $\CP$: If $\procp \notin \relset$, then for all $q \in
	\cstateset_{\procp,1}$, there exists a crash-broadcast $(q, \cbroadcast{\text{\lightning}}{p},\qstop) \subseteq \transrel_{\procp, 2}$, for some $\qstop \in \cstateset_{\procp, 2}$ and all intermediate states belonging to $\cstateset_{\procp, 3}$.
	\item \emph{Crash redundancy} $\CR$: Finally, we have the condition that any dangling crash messages are cleaned up. For all $q \in 	\cstateset_{\procp,2}$, $(q, \actrec{\procr}{\procp}{\text{\lightning}}, q) \in \delta_{\procp,2}$.
\end{enumerate}
\noindent Condition $\CI$ enforces that every state in the system which receives from an unreliable process has a crash-handling branch, so that the receiving process is not deadlocked waiting for a message from a process that has crashed. %
Condition $\CP$ ensures that every unreliable process can non-deterministically take the internal action $\intact{crash}_\procp$ when it crashes and broadcast this information to all the other participants.
Condition $\CR$ ensures that from all states in $\cstateset_{\procp, 2}$, any dangling crash  messages are cleaned up from an (otherwise empty) channel.

\begin{definition}[Crash-handling systems]
	We say that a system $\system$ is crash-handling if every process $\procp \in \procset$ has crash-handling behaviour in $\system$.
\end{definition}

\begin{figure}[t]
\begin{center}
\begin{tabular}{ccc}
{	\small{
		\begin{tikzpicture}[>=stealth,node distance=1cm,shorten >=1pt,
			every state/.style={text=black, scale =0.5, minimum size=0.1em}, semithick,
			font={\fontsize{8pt}{12}\selectfont}]
			\node[state, initial, initial text=] (place1) at (0,0) {};
			\node[state, right=1 cm of place1] (place2) {};
			\node[state, right=1cm of place2] (place3) {};
			\node[state, below=1 cm of place2] (place4) {};
			\node[state, below=1 cm of place3] (place5) {};
			\node[above= 0.4cm of place1] (heading) {Process
$\procx{c}$};
			\path [-stealth, thick]
			(place1) edge node[above=0.01cm] {$\sendact{\chan{c}{s}}{\ms{req}}$} (place2)
			(place2) edge node[above=0.1cm] {$\recact{\chan{s}{c}}{\ms{res}}$}(place3)
			(place1) edge node[left=0.1cm] {$\intact{crash}_{\procx{c}}$} (place4)
			(place2) edge node[right=0.01cm] {$\intact{crash}_{\procx{c}}$} (place4)
			(place4) edge node[above=0.002cm] {$\sendact{\chan{c}{s}}{\text{\lightning}}$} (place5);

			(%

\node[state, initial, initial text=, right=3.25cm of place1] (place5) {};
			\node[state, right=1 cm of place5] (place6) {};
			\node[state, right=1cm of place6] (place7) {};
			\node[state, below=1cm of place6] (place8) {};
			{};
\node[above=0.4cm of place5] (heading2) {Process $\procx{s}$};
			
			\path [-stealth, thick]
			(place5) edge node[above=0.01cm] {$\recact{\chan{c}{s}}{\ms{req}}$} (place6)
			(place6) edge node[above=0.01cm] {$\sendact{\chan{s}{c}}{\ms{res}}$}(place7)
			(place5) edge node[left=0.01cm] {$\recact{\chan{c}{s}}{\text{\lightning}}$} (place8);
	\end{tikzpicture}}
}
&
\quad \quad 
&
	\small{
		\vspace{-7mm}
		\begin{tikzpicture}[>=stealth,node distance=1cm,shorten >=1pt,
			every state/.style={text=black, scale =0.7, minimum size=0.1em, align=center}, semithick,
			font={\fontsize{8pt}{12}\selectfont}]
			\node[state, initial, initial text=] (place1) at (0,0) {$\deftypep{T_0}$};%
			\node[state, right=1.2 cm of place1] (place2){$\deftypep{T_1}$}; %
			\node[state, below=0.75 cm of place1] (place4){$\deftypep{T_2}$};
		  \node[state, right= 1.2cm of place4] (place3) {$\deftypep{T_3}$};
			
			\path [-stealth, thick]
			(place1) edge node[above=0.01cm] {$\recact{\chan{B}{C}}{\msgtyped{sig}}$} (place2)
			(place2) edge[bend left] node[below=0.05cm] {$\recact{\chan{A}{C}}{\msgtyped{commit}}$} (place1)
			(place2) edge node[right=0.1cm] {$\recact{\chan{A}{C}}{\text{\lightning}}$}(place3)
			(place1) edge node[left=0.1cm] {$\recact{\chan{B}{C}}{\msgtyped{save}}$} (place4)
(place4) edge node[below=0.01cm] {$\recact{\chan{A}{C}}{\msgtyped{finish}}$} (place3)
			(place4) edge node[above=0.01cm] {$\recact{\chan{A}{C}}{\text{\lightning}}$} (place3);			
	\end{tikzpicture}}
 \end{tabular}
\end{center}
	\caption{(a) The system $\system$ (right) is crash-handling.
(b) FIFO automata (right) of the type in Example~\ref{ex:typeintype}}
\label{fig:ex2}
\label{fig:fifofromtype}
\end{figure}
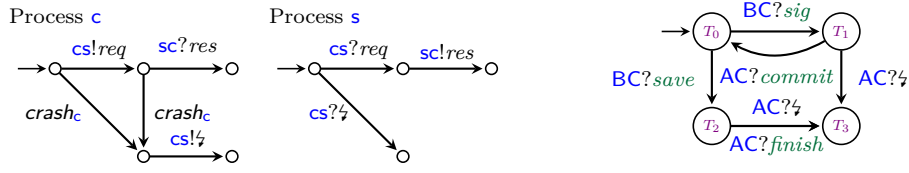

\noindent Consider the following example, which models a simple send-receive
protocol between a sender and a receiver. 

\begin{example}
	Figure~\ref{fig:ex2}(a) 
	shows a crash-handling system.
	It consists of two processes, a $\procx{s}$erver ($\procx{s}$) and a $\procx{c}$lient ($\procx{c}$). We assume that the $\procx{s}$erver is reliable, i.e. does not crash, while the $\procx{c}$lient is unreliable, i.e could crash. Hence the $\procx{c}$lient can crash in any control state, while the $\procx{s}$erver is always ready to handle a crash when it is waiting for a message from the $\procx{c}$lient.

\end{example}

In this construct, it is still possible to send messages to a crashed process. This is because from the perspective of the sending process, the crash of the receiving process is not necessarily known. Therefore, in this model, while processes can continue to send messages to crashed processes, the crashed processes would not be able to receive any messages.

Note that these properties are local to each individual automaton, hence 
the verification of these properties is decidable.

\begin{restatable}{lemma}{crashsysdec}
	\label{lem:crashsysdec}
	It is decidable to check whether a system is crash-handling.
\end{restatable}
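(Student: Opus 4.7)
The plan is to verify each of the three defining conditions $\CI$, $\CP$, $\CR$ as a local structural test on each process automaton $\automata_\procp$ and take the conjunction across all $\procp \in \procset$. Since $\procset$ is finite and each $\automata_\procp$ is a finite structure, each local check is effective, so decidability follows.

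First I would dispose of $\CR$ as a direct membership test: for every $q \in \cstateset_{\procp,2}$ and every $\procr \in \procset \setminus \{\procp\}$, check whether $(q, \actrec{\procr}{\procp}{\text{\lightning}}, q) \in \transrel_{\procp,2}$, running in time $O(|\cstateset_{\procp,2}| \cdot |\procset|)$. Next, $\CP$ (non-trivial only when $\procp \notin \relset$) reduces to a reachability query in the graph of $\automata_\procp$ restricted to $\transrel_{\procp,2}$: from each $q \in \cstateset_{\procp,1}$, search for a path whose label factors as $\intact{crash}_\procp$ followed by sends $\actsend{\procp}{\procx{r_i}}{\text{\lightning}}$ covering every outgoing channel of $\procp$ exactly once, with intermediate states confined to $\cstateset_{\procp,3}$ and a terminal state in $\cstateset_{\procp,2}$. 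Standard graph reachability, augmented with bookkeeping over already-covered receivers, decides this in polynomial time.

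The subtle case is $\CI$, whose quantifier ranges over local states $q$ that occur in some \emph{globally reachable} configuration, and global reachability in FIFO systems is undecidable. I would sidestep this by adopting the local structural reading consistent with the paper's remark that all three conditions are local: for every $(q, \actrec{\procr}{\procp}{a}, q') \in \transrel_{\procp,1}$ with $\procr \in \procset \setminus \relset$, verify that some $q''$ exists with $(q, \actrec{\procr}{\procp}{\text{\lightning}}, q'') \in \transrel_{\procp,2}$. Under the ``smallest set'' clause in the definition of crash-handling behaviour, this reading coincides with the stated condition: any crash-handling transition from an unreachable $q$ would violate minimality, and the absence of one from a reachable $q$ would violate $\CI$. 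The structural test thus captures the intended semantics and runs in time linear in $|\transrel_{\procp,1}|$.

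The main obstacle is precisely this reachability qualifier in $\CI$, which the local reading resolves. Combining the three polynomial-time local tests over all processes yields a decision procedure polynomial in $|\system|$, establishing the lemma.
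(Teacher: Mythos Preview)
Your proposal is correct and takes the same approach as the paper's proof: verify each of $\CI$, $\CP$, $\CR$ as a local structural check on the finite automaton of each process. You are in fact more thorough than the paper---its proof explicitly mentions only $\CI$ and $\CP$, omits $\CR$, and silently adopts the purely local reading of $\CI$ without ever engaging with the global-reachability qualifier that you correctly flag and attempt to reconcile via the minimality clause.
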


We see that this behaviour can be appended to any FIFO system, but it does not affect the underlying verification properties of the automata. We demonstrate with the example of boundedness (i.e. checking if every execution is $k$-bounded for some $k$), but a similar argument can be used for reachability or deadlock.

\begin{restatable}{lemma}{crashreachundec}
	\label{lem:crashundecidable}
	The boundedness problem is undecidable for crash-handling systems.
\end{restatable}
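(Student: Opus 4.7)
The plan is to obtain undecidability by reducing boundedness of ordinary FIFO systems, which is well-known to be undecidable since two-process FIFO systems are Turing-powerful \cite{brand_communicating_1983}, to boundedness of crash-handling systems. The key observation is that when every process is declared reliable, the three crash-handling conditions $\CI$, $\CP$, $\CR$ become vacuous, so an arbitrary FIFO system can be embedded as a crash-handling system with an empty crash-handling fragment.

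Concretely, given an arbitrary FIFO system $\system = (\automata_\procp)_{\procp \in \procset}$, I would construct $\system'$ by declaring $\relset := \procset$ and, for each $\procp \in \procset$, taking $\cstateset_{\procp, 1} := \cstateset_{\procp}$, $\cstateset_{\procp, 2} := \cstateset_{\procp, 3} := \emptyset$, $\transrel_{\procp, 1} := \transrel_{\procp}$, and $\transrel_{\procp, 2} := \emptyset$. The next step is to verify that $\system'$ is crash-handling. Since $\procset \setminus \relset = \emptyset$, condition $\CI$ has no instances to enforce; since $\procp \in \relset$ for every $\procp$, the premise of $\CP$ fails; and since $\cstateset_{\procp, 2} = \emptyset$, condition $\CR$ has no instances either. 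Hence $\transrel_{\procp, 2} = \emptyset$ is trivially the smallest transition relation satisfying the three clauses, so every process has crash-handling behaviour and $\system'$ is a crash-handling system by definition.

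Because $\system'$ has exactly the same states, transitions, initial configuration and successor relation as $\system$, the set of reachable configurations and the set of executions coincide, so $\system'$ is bounded if and only if $\system$ is bounded. Since boundedness is undecidable for general FIFO systems, it is undecidable for crash-handling systems as well.

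The only mild subtlety I foresee is checking that the partition of states and transitions demanded by the definition of crash-handling systems is indeed witnessed by the degenerate, all-reliable choice above; this is immediate once we observe that the three crash clauses are universally quantified over sets that are empty when $\relset = \procset$. No substantial obstacle is expected, as the reduction does not require any rewriting of the underlying automata.
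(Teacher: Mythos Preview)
Your proposal is correct and matches the paper's own argument: the paper opens with exactly the observation that every FIFO system is crash-handling when $\relset = \procset$, making the lemma trivially true. The paper then adds a short extra remark (not needed for the lemma itself) showing that even with $\relset \subsetneq \procset$ one can pad any FIFO system into a crash-handling one while preserving (un)boundedness up to a $+1$ shift, but your reduction already suffices.
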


\vspace{-3mm}
\subsection{Crash-handling subsystems}
Next we investigate inclusion of crash-handling systems in the aforementioned classes.

\subparagraph*{Crash-handling RSC systems.} Checking that the $\RSC$ property is decidable for crash-handling systems amounts to verifying if the proofs hold for communicating automata with internal actions. 
Let us first look at an example.

\begin{example} 	The system in Figure~\ref{fig:ex2} 
	is a crash-handling system that is also $\RSC$. We see that the behaviour of the system in the absence of crashes is $\RSC$, and in the presence of crashes, there is no additional non-$\RSC$ behaviour. Moreover, even if the $\ms{req}$ is sent, followed by the crash broadcast---since the crash message is never received, the behaviour of the system is still $\RSC$. However, this need not be the case for other examples.
\end{example}
\noindent
Next we show that 
the proofs from \cite{di_giusto_multiparty_2023} can be adapted to automata with internal actions.

\begin{restatable}{theorem}{rsccrashdec}
	\label{prop:rsccrashdec}
	Given a crash-handling system $\system$, it is decidable to
        check inclusion to the $\RSC$ class.%
\end{restatable}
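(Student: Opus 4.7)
The plan is to reduce the decidability of $\RSC$ inclusion for crash-handling systems to the analogous result for ordinary FIFO systems (Theorem~12 in \cite{di_giusto_multiparty_2023}, equivalently the machinery used for Theorem~\ref{thm:rscmain}), by adapting the constructions of $\aubv$ and $\aursc$ to an enlarged action alphabet. A crash-handling system differs from a plain FIFO system in two ways: (i) the message alphabet is augmented with a distinguished symbol $\text{\lightning}$, and (ii) the transition relations include internal actions $\intact{crash}_\procp \in \internalset$. Point (i) is cosmetic: since $\text{\lightning}$ is syntactically just another element of $\msgset \uplus \{\text{\lightning}\}$, send/receive matching on $\text{\lightning}$ behaves exactly as on any other message, and the multi-step broadcast $\cbroadcast{\text{\lightning}}{\procp}$ decomposes, in the underlying FIFO automaton, into a crash action followed by a sequence of ordinary sends.

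The real technical work is to extend matching pairs, interactions, valid communications, and conflict graphs to the enlarged alphabet $\actset \cup \internalset$. Internal actions carry no channel and no message, so they cannot participate in any matching pair: each occurrence of $\intact{crash}_\procp$ in an execution $e$ is treated as its own singleton interaction, in the same spirit as an unmatched send. The commutation relation is extended in the obvious way: $\intact{crash}_\procp$ commutes with any action $a$ satisfying $\proc(a) \neq \procp$. With this extension, I would re-verify the conflict-graph characterisation (the analog of \cite[Lemma~9]{di_giusto_multiparty_2023} and of our Lemma for $\iRSC$): an execution is causally equivalent to an $\RSC$ execution iff its conflict graph $\cgraph(e, \comsett)$ is acyclic. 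The argument goes through because internal singletons interact with the graph exactly as unmatched-send singletons do, and they cannot create new cycles with matching pairs belonging to other processes.

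Next I would re-instantiate the borderline-violation lemma and reconstruct the automata $\aubv$ and $\aursc$ over the action set $\actset_{nr} \cup \internalset \cup \actset_{?}$. In the product construction of $\aursc$, internal transitions are simulated as local steps of the active process that leave the buffer summary unchanged; in $\aubv$, internal actions are added as self-loops in the states where they are enabled. Neither change affects the way the automata track non-empty channels or guess the matched send for the terminating reception. The resulting automata are at most a factor $|\internalset| \leq |\procset|$ larger in alphabet than the originals, preserving the complexity bounds of the RSC proofs. Inclusion to $\RSC$ then reduces to emptiness of $\lang(\aubv)$, which is decidable.

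The main obstacle is ensuring that the cascade induced by $\cbroadcast{\text{\lightning}}{\procp}$ does not break the inductive step of the borderline-violation lemma. One must check that if an execution $e'$ that crashes $\procp$ is $\RSC$, then appending the broadcast sequence and its eventual receives (which by condition $\CI$ and $\CR$ are the only way crash messages are consumed) does not create a cyclic conflict graph except when genuine non-$\RSC$ behaviour was already present in $e'$. Because each individual crash send is a standard send to a single peer, and the matching receive is uniquely determined by $\CI$ (or absorbed as a self-loop by $\CR$), the broadcast contributes only matching pairs and singletons of the standard shape, and the acyclicity argument lifts without further modification.
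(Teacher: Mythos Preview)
Your approach is essentially the paper's: enlarge $\actset_{nr}$ with the internal actions, treat each internal action as a singleton interaction that contributes only process-edges in the conflict graph (hence can be shortcut out of any cycle), and re-run the $\aubv$/$\aursc$ construction unchanged. Two minor remarks: the final reduction should be to emptiness of the \emph{intersection} $\lang(\aursc)\cap\lang(\aubv)$, not of $\lang(\aubv)$ alone; and your last paragraph is unnecessary---once $\text{\lightning}$ is treated as an ordinary letter and internal actions are absorbed into $\actset_{nr}$, the crash broadcast is just a finite sequence of plain sends, so the borderline-violation lemma requires no special argument (if a crash cascade produces a conflict cycle, the system is simply not $\RSC$, and the algorithm correctly detects this).
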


\subparagraph*{Crash-handling $k$-WMC systems.}
We now show that checking $\kwmc$ is decidable for crash-handling systems generated from a collection of local types.
The reason for considering $\kwmc$ instead of $\kmc$ in 
\cite[Definition~9]{lange_verifying_2019} is that for crash-handling systems generated from
local types, the end states are receiving states (as opposed to
\emph{final states}). This result is adapted from \cite{lange_verifying_2019} with the inclusion of internal actions. %

\begin{restatable}{theorem}{crashkmcmain}
	\label{thm:kmcmain}
	Given a crash-handling system $\system$ generated from a collection of communicating session automata, it is decidable to check $\kwmc$, and can be done in $\textsc{pspace}$.
\end{restatable}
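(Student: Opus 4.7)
The plan is to adapt the \textsc{pspace} algorithm for checking $\kmc$ from \cite{lange_verifying_2019} to the crash-handling setting, observing that the extra structural features introduced in \S~\ref{subsec:fifocs}---internal actions $\intact{crash}_\procp$, crash messages $\text{\lightning}$, and the intermediate states $\cstateset_{\procp,3}$ arising from crash-broadcasts---do not fundamentally alter the state-space exploration.

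First, I would observe that the $k$-bounded reachability set $\RSK{k}{\system}$ remains finite and computable for a crash-handling system. Each transition in $\transrel_{\procp,1} \cup \transrel_{\procp,2}$ either leaves channels untouched (internal actions in $\internalset$) or writes/reads a single message (including $\text{\lightning}$) on one channel. In particular, a crash broadcast $\cbroadcast{\text{\lightning}}{\procp}$ from $\CP$ is syntactically a sequence of ordinary single-message send transitions through intermediate states in $\cstateset_{\procp,3}$, so each intermediate configuration is handled by the standard $k$-bounded semantics (Definition~\ref{def:irs} specialised to perfect channels). Hence $\RSK{k}{\system}$ can still be described by configurations of polynomial size $O(|\procset| \log |\cstateset| + |\chanset|\cdot k \cdot \log(|\msgset|+1))$, and its transition graph is a standard bounded-channel reachability graph.

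Next I would perform the two membership checks locally. For $\textsc{$k$-er}$, at each $\confq{\globq}{\chancon} \in \RSK{k}{\system}$ with $\chancon_{\chan{\procp}{\procq}} = m\cdot u$, I verify that some $k$-bounded path leads to a successor that fires $\recact{\chan{\procp}{\procq}}{m}$; treating crash messages $\text{\lightning}$ exactly as ordinary messages suffices, and the crash-redundancy clause $\CR$ guarantees that any dangling $\text{\lightning}$ can always be consumed. For $k$-exhaustivity, at each such configuration and each sending transition $(q_\procp, \sendact{\chan{\procp}{\procq}}{m}, q'_\procp)$, I check that a $k$-bounded path enables this send. Each such reachability query between $k$-bounded configurations is a standard graph reachability problem on the exponential-size graph $\RSK{k}{\system}$, and by Savitch's theorem is decidable in \textsc{pspace}.

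Finally, to obtain the precise complexity bound, I would follow the non-deterministic construction from \cite{lange_verifying_2019}: guess a sequence of $k$-bounded configurations witnessing (or refuting) each property, using space polynomial in $|\system|$ and $k$. The additional states introduced by $\CI$, $\CP$, and $\CR$ contribute only polynomially to $|\system|$, and internal actions are counted like any other transition without affecting the space bound. The hardness direction follows from the fact that the crash-handling extension subsumes the ordinary CSA case (taking $\relset = \procset$), so the \textsc{pspace} lower bound of \cite{lange_verifying_2019} transfers directly.

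The main technical obstacle I anticipate is the treatment of crash-broadcasts under the $k$-bound: since $\cbroadcast{\text{\lightning}}{\procp}$ expands to $|\chansetout|$ successive sends, an execution that initiates a crash in a state where several outgoing channels are already close to size $k$ may be unable to complete the broadcast within a $k$-bounded run. This is not a correctness issue---the semantics simply prohibit such a continuation---but it must be carefully handled when arguing that $k$-exhaustivity is equivalent to enabledness of individual sends rather than of broadcasts as atomic units, so that the exhaustivity check remains sound for the intermediate states in $\cstateset_{\procp,3}$.
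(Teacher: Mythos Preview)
Your proposal is correct and follows essentially the same approach as the paper: both reduce to the \textsc{pspace} algorithm of \cite{lange_verifying_2019} after observing that the crash-handling additions (internal actions, $\text{\lightning}$ messages, intermediate broadcast states) do not change the finiteness of $\RSK{k}{\system}$ or the polynomial size of configurations. Your write-up is in fact more detailed than the paper's---you make the configuration-size bound explicit, invoke Savitch's theorem, and supply a hardness argument via $\relset = \procset$ (the theorem as stated only claims the upper bound, so the latter is extra); the technical concern you raise about completing a crash-broadcast within the $k$-bound is legitimate but, as you note, handled automatically by treating each intermediate state in $\cstateset_{\procp,3}$ as an ordinary sending state subject to the same exhaustivity check, which is exactly how the paper implicitly proceeds.
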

\section{Session types with crash-stop failures}
\label{sec:crashstopsessions}
This section shows that the crash-handling system
strictly subsumes the crash-stop systems in
\cite{barwell_designing_2023,BHYZ2025}, preserving the semantics.
We recall the crash-stop semantics for local types defined in \cite{barwell_designing_2023} where the major additions are (1)  a special local type 
$\stoptype$ to denote crashed processes; and  
(2) a crash-handling branch ($\crasht$) in  
one of branches of an external choice.

The syntax of the local types ($\deftypep{S}, \deftypep{T}, \ldots$) 
are given as:
\begin{align*}
	\deftypep{S}, \deftypep{T} \quad \Coloneqq& \quad \branchingtype{p}{m_i}{B_i}{T_i} ~\mid~\selectiontype{p}{m_i}{B_i}{T_i} \quad && \text{\small (external choice, internal choice)}\\
	\mid & \quad \rectypedef{T}~ \mid ~ \typet ~ \mid ~  \deftype{end} ~ \mid ~  \deftype{stop} && \text{\small (recursion, type variable, end, crash)}
\end{align*}
\noindent An external choice (branching) (resp., an internal choice (selection)), 
denoted by $\branchingtype{p}{m_i}{B_i}{T_i}$ (resp., $\selectiontype{p}{m_i}{B_i}{T_i}$) 
indicates that the \emph{current} role is to \emph{receive} from (or 
\emph{send} to) the process $\procp$.
We require pairwise-distinct,
non-empty labels and the crash-handling label ($\crasht$) 
not appear in \emph{internal} choices; 
and that singleton crash-handling labels not permitted in external choices.
The type $\deftype{end}$ 
indicates a \emph{successful} termination (omitted where
unambiguous), and recursive types are assumed \emph{guarded}, i.e., $\rectypedef{\typet}$ is not allowed, and recursive variables are unique.
A \emph{runtime} type $\deftype{stop}$ denotes crashes.

We point out here that while this is a bottom-up view of the crash-handling behaviour introduced in \cite{barwell_designing_2023}, we have taken a purely type-based approach here. For a calculus based approach, we refer the reader to \cite{barwell_designing_2023_full}.

We define the LTS over local types and extend the notions to communicating systems. We use the same labels as the ones for communicating systems.

\begin{definition}[LTS over local types]
	The relation $\deftypep{T} \xrightarrow{a} \deftypep{T'}$ for the local type of role $\procp$ is defined as:
	\vspace{-2mm}
	\begin{alignat*}{3}
		&\textnormal{\textsf{[LR1]}} \quad \sebrtype{q}{m_i}{B_i}{T_i} && \myrightarrow{\actsr{p}{q}{\msgtype{m_k}{B_i}}} && \deftypep{T_k}, \quad \text {where }\dagger \in \{!, ?\} \text{ and } \msgtyped{m_k} \neq \crasht.\\
		&\textnormal{\textsf{[LR2]}} \quad \deftypep{T}[\rectype{T}/\typet] \xrightarrow{a} \deftypep{T'} &&\xRightarrow{\mathmakebox[\arrow]{}} && \rectype{T} \xrightarrow{a} \deftypep{T'} \\
		&\textnormal{\textsf{[LR3]}} \quad \ssebrtype{q}{m_i}{B_i}{T_i} \quad &&\xrightarrow{\cbroadcast{\text{\lightning}}{p}}\quad &&  \deftype{stop} \text{, where } \sebr \in \{!, ?\}. \\
		&\textnormal{\textsf{[LR4]}} \quad \branchingtype{q}{m_i}{B_i}{T_i} && \myrightarrow{\actrec{q}{p}{\text{\lightning}}}&& \deftypep{T_k}, \quad \text{ if } \msgtyped{m_k} = \crasht. \\
		&\textnormal{\textsf{[LR5]}} \quad \qquad \deftypep{T} && \myrightarrow{\actrec{q}{p}{\text{\lightning}}}&& \deftypep{T}, \quad \forall\procq \in \procset \setminus \{\procp\} \text{ for }\deftypep{T} \in \{\deftype{stop}, \deftype{end}\}.
	\end{alignat*}

\end{definition}
\noindent Rules \textsf{[LR1]} and \textsf{[LR2]} are standard 
output/input and recursion rules, respectively; rule \textsf{[LR3]} accommodates for the crash of a process; rule \textsf{[LR4]} is the main rule for crash-handling where the reception of crash information leads the process to a crash-handling branch; and rule \textsf{[LR5]} allows any dangling crash information messages to be read in the sink states.

The LTS over a set of local types is defined as in Definition~\ref{def:fifoconf}, where a configuration $\conf = \confq{\globqt}{\chancon}$ of
a system is a pair with $\globqt = \{\deftypep{T}_\procp\}_{\procp \in \procset}$ and $\chancon = (\word_{\chan{\procp}{\procq}})_{\chan{\procp}{\procq \in \chanset}}$
with $\word_{\chan{\procp}{\procq}} \in \msgset^*$.

Next we algorithmically translate from local types
to FIFO automata preserving the trace semantics. %
Below we write $\mu\vv{\typetb}.{\deftypepb{T}}$ for $\mu{\typetb}_1.\mu{\typetb}_2\ldots\mu{\typetb}_n.\deftypepb{T}$ with $n \geq 0$.

In order to construct the FIFO automata, we first need to define the set of states. Intuitively, this is the set of types which result from any continuation of the initial local type. Below we define \emph{a type occurring in another type} (based on the definition in \cite{vassor_refinements_2024}).

\begin{definition}[Type occurring in type,
    \cite{vassor_refinements_2024}] We say a type $\deftypep{T'}$
  occurs in $\deftypep{T}$ (denoted by $\deftypep{T'} \in
  \deftypep{T}$) if and only if at least one of the following
  conditions holds: {\rm (1)} if $\deftypep{T}$ is
  $\branchingtype{p}{m_i}{B_i}{T_i}$, there exists $i \in I$ such that
  $\deftypep{T'} \in \deftypep{T_i}$; {\rm (2)} if $\deftypep{T}$ is
  $\selectiontype{p}{m_i}{B_i}{T_i}$, there exists $i \in I$ such that
  $\deftypep{T'} \in \deftypep{T_i}$; {\rm (3)} if $\deftypep{T}$ is $
  \rectypedef{T_\mu}$ then $\deftypep{T'} \in \deftypep{T_\mu}$; or
             {\rm (4)} $\deftypep{T'} = \deftypep{T}$, where $=$
             denotes the syntactic equality.
	\end{definition}

\begin{example}\label{ex:typeintype}
	Let $\procset = \{\procx{A}, \procx{B}, \procx{C}\}$ and
        $\relset = \{\procx{B}, \procx{C}\}$. Consider a local type of $\procx{C}$: 
 $\deftypep{T} = \rectypec
        \branchingtypecii{B}{sig}{B_i}{\branchingtypecii{A}{commit}{B_i}{\typetb}{\crasht}{B_i}{\deftype{end}}}{~save}{B_i}{\branchingtypecii{A}{finish}{B_i}{\deftype{end}}{{\crasht}}{B_i}{\deftype{end}}}$. 
	 
	Then, the set of all $\deftypep{T'} \in \deftypep{T}$ is $\{\deftypep{T},  \branchingtypeci{B}{sig}{B_i}{\branchingtypecii{A}{commit}{B_i}{\typetb}{{\crasht}}{B_i}{\deftype{end}}}, \branchingtypeci{A}{commit}{B_i}{\typetb}, \linebreak \branchingtypeci{B}{save}{B_i}{\branchingtypecii{A}{finish}{B_i}{\deftype{end}}{{\crasht}}{B_i}{\deftype{end}}}, \branchingtypeci{A}{{\crasht}}{B_i}{\deftype{end}},  \branchingtypeci{A}{finish}{B_i}{\deftype{end}}, \deftype{end}, \typetb\}$. 
\end{example}
And now, we are ready to define the FIFO automata.

\begin{definition}[From local types to FIFO automata]
Let $\deftypepb{T_0}$ be the local type of participant $\procp$. The automaton corresponding to $\deftypepb{T_0}$ is $\automata(\deftypepb{T_0}) = (\cstateset, \transrel, \init)$ where:\begin{enumerate}
	\item  $\cstateset = \{\deftypepb{T'} \mid \deftypepb{T'}  \in \deftypepb{T_0}, \deftypepb{T'}  \neq \typetb, \deftypepb{T'}  \neq \rectypeb{T} \} \cup \{\qsink\} \cup \{ \qsend{r} \mid \procr \in \procset \setminus \{\procp \} \} $
	
	\item $\init = \strip{\deftypepb{{T_0}}}$; 
	\item $\transrel$ is the smallest set of transitions such that $\forall \deftypep{T} \in \cstateset$:
	\begin{enumerate}
		\item If $\deftypep{T} = \sebrtype{q}{m_i}{B_i}{T_i}$ and $k \in I$, $\msgtyped{m_k} \neq {\crasht}$, and $\sebr \in \{!, ?\}$	
			\\[1mm]
			\hspace*{-0.5cm}
			$\begin{cases}
				& (\deftypep{T}, \actsr{p}{q}{\msgtype{m_k}{B_k}}, \strip{\deftypep{T_k}}) \in \transrel \quad \text{if } \deftypep{T_k} \neq \typetb \\
			
				& (\deftypep{T}, \actsr{p}{q}{\msgtype{m_k}{B_k}}, \strip{\deftypepb{T'}}) \in \transrel \quad \text{if }\deftypep{T_k} = \typetb \text{ with } \rectypeb{T'} \in \deftypep{T_0}.
			\end{cases}
			$\\	
				\item 	If $\deftypep{T} = \branchingtype{q}{m_i}{B_i}{T_i}$ 
				with $k \in I$, $\msgtype{m_k}{} = {\crasht}$
					\\[1mm]
				\hspace*{-0.5cm}
					$\begin{cases}
						& (\deftypep{T}, \actrec{q}{p}{\text{\lightning}}, \strip{\deftypep{T_k}}) \in \transrel 
						\quad \text{if } \deftypep{T_k} \neq \typetb \\
					
						& (\deftypep{T}, \actrec{q}{p}{\text{\lightning}}, \strip{\deftypep{T'}}) \in \transrel \quad \text{if }\deftypep{T_k} = \typetb \text{ with } \rectypeb{T'} \in \deftypep{T_0}.
					\end{cases}$\\

				\item If $\deftypep{T} \notin \{\deftype{stop}, \deftype{end}\}$, then $(\deftypep{T}, \cbroadcast{\text{\lightning}}{\procp}, \deftype{stop}) \subseteq \transrel$ where 
				\begin{enumerate}
					\item $(\deftypep{T}, \intact{crash}, \qsink) \in \transrel$
					\item $(\qsink, \actsend{p}{r_1}{\text{\lightning}, \qsend{r_1}}) \in \transrel$
					\item $(\qsend{r_i}, \actsend{p}{r_{i+1}}{\text{\lightning}, \qsend{r_{i+1}}}) \in \transrel$ $\forall i \in \{1, \ldots, n-2\}$, where $n = |\chansetout|$
					\item $(\qsend{r_{n-1}}, \intact{crash}, \deftype{stop}) \in \transrel$
				\end{enumerate}

				\item If $\deftypep{T} \in  \{\deftype{stop}, \deftype{end}\}$, then $(\deftypep{T}, \actrec{q}{p}{\text{\lightning}}, \deftypep{T}) \in \transrel$ for all $\procq \in \procset \setminus \{\procp\}$.
			\end{enumerate}
		\end{enumerate} 	
	\smallskip
where
$\strip{\deftypep{T}} \defeq \strip{\deftypep{T'}}$ if
$\deftypep{T} = \rectype{T'}$; otherwise
$\strip{\deftypep{T}} \defeq \deftypep{T}$.  
\end{definition}

\begin{example}
		The FIFO automata constructed from the type in Example~\ref{ex:typeintype} is shown in Figure~\ref{fig:fifofromtype}. We see that for all receiving actions from process $\procx{A}$, which is not in the reliable set of processes, there is a crash-handling branch, where:
		\vspace{-1mm}
		 {%
			\begin{align*} \deftypep{T_0} &=
		\branchingtypecii{B}{sig}{B_i}{\branchingtypecii{A}{commit}{B_i}{\typetb}{\crasht}{B_i}{\deftype{end}}}{~save}{B_i}{\branchingtypecii{A}{finish}{B_i}{\deftype{end}}{{\crasht}}{B_i}{\deftype{end}}}\\
		\deftypep{T_1} &=\branchingtypecii{A}{commit}{B_i}{\typetb}{{\crasht}}{B_i}{\deftype{end}}  \quad
		\deftypep{T_2} = \branchingtypecii{A}{finish}{B_i}{\deftype{end}}{{\crasht}}{B_i}{\deftype{end}} \quad 
		\deftypep{T_3} = \deftype{end}%
	\end{align*}%
}
	\end{example}

We now prove that the automata generated from a local type can be composed into communicating session automata, and this translation preserves the semantics.
	\begin{restatable}{lemma}{basicautomata}
\label{lem:crashcsa}
Assume $\deftypep{T}_\procp$ is a local type. Then
$\automata(\deftypep{T}_\procp)$ is
deterministic, directed and has no mixed states. Moreover, $\deftypep{T}_\procp \approx \automata(\deftypep{T}_\procp)$, i.e.  $\forall \phi, \phi \in \exec(\deftypep{T}_\procp) \Leftrightarrow \phi \in \exec{(\automata(\deftypep{T}_\procp))}$.
	\end{restatable}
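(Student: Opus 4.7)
The plan is to prove the two parts separately: first the structural properties by direct inspection of the construction, then the trace equivalence via a bisimulation argument. For the structural part, I would observe that each non-sink state $\deftypep{T}$ of $\automata(\deftypep{T}_\procp)$ arises from a subterm of $\deftypep{T}_\procp$, and its outgoing send/receive transitions are generated by exactly one of the clauses (a) or (b). Since an external choice $\branchingtype{q}{m_i}{B_i}{T_i}$ and an internal choice $\selectiontype{q}{m_i}{B_i}{T_i}$ each fix a single peer process $\procq$, the generated actions are uniformly receives from $\procq$ or sends to $\procq$; this yields directedness and the absence of mixed send/receive states (the auxiliary states $\qsink, \qsend{r_i}, \deftype{stop}$ and the internal crash transitions added by clause (c) belong to the crash-handling extension and are treated separately, as already done in \S\ref{subsec:fifocs}). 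Determinism follows from the pairwise-distinctness of branch labels $\msgtyped{m_i}$ required by the syntax, together with the fact that the target $\strip{\deftypep{T_k}}$ --- or $\strip{\rectypeb{T'}}$ when $\deftypep{T_k} = \typetb$ --- is uniquely determined by $k$ and by the unique binder of the free variable.

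For the trace equivalence $\deftypep{T}_\procp \approx \automata(\deftypep{T}_\procp)$, I would exhibit a relation $R$ between type configurations and automaton configurations defined by $\deftypep{T'} \mathrel{R} \strip{\deftypep{T'}}$ for every $\deftypep{T'}$ reachable from $\deftypep{T}_\procp$, and prove by induction on derivations that $R$ is a bisimulation preserving action labels. The case analysis matches the LTS rules over local types with the construction clauses: rule \textsf{[LR1]} with clause (a); rule \textsf{[LR4]} (crash-handling branch) with clause (b); rule \textsf{[LR3]}, which fires $\cbroadcast{\text{\lightning}}{\procp}$, with the chain of transitions from clause (c) that implement exactly the shorthand $\intact{crash}_\procp \cdot \broadcast{\text{\lightning}}{\procp}$ via the auxiliary $\qsink$ and $\qsend{r_i}$ states; and rule \textsf{[LR5]} with clause (d). Since both semantics manipulate channel contents identically under the same action labels, lifting the bisimulation on states to configurations $\confq{\globqt}{\chancon}$ is immediate.

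The main obstacle is the treatment of recursion in rule \textsf{[LR2]}, which unfolds $\rectype{T'}$ into $\deftypep{T}[\rectype{T'}/\typet]$ on the type side, whereas the automaton has no explicit unfolding step: it instead redirects any transition whose continuation would be $\typetb$ back to $\strip{\rectypeb{T'}}$, collapsing all unfoldings into a single state. To reconcile these two behaviours, I would establish an auxiliary lemma --- relying on the assumption that recursive variables are unique and that types are guarded --- stating that $\strip{\deftypep{T}[\rectype{T'}/\typet]}$ and $\strip{\deftypep{T}}$ induce the same outgoing transitions in $\automata(\deftypep{T}_\procp)$ whenever $\typet$ is bound by $\rectype{T'}$ in the enclosing type. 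This invariant lets the bisimulation survive an unfolding step, after which the induction goes through by standard arguments; the remaining crash-handling and sink cases are local and follow directly from clauses (c)--(d).
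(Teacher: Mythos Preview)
Your proposal is correct and follows essentially the same approach as the paper: structural properties by inspection of the clauses (distinct labels for determinism, single peer in each choice for directedness and non-mixedness, auxiliary crash states handled case-by-case), and trace equivalence by matching each LTS rule \textsf{[LR$i$]} with the corresponding construction clause. Your explicit auxiliary lemma for recursion is in fact more careful than the paper's own proof, which dispatches \textsf{[LR2]} in a single sentence (``implicitly applied because of the $\strip{}$ function''); your bisimulation phrasing and the invariant relating $\strip{\deftypep{T}[\rectype{T'}/\typet]}$ to $\strip{\deftypep{T}}$ make precise what the paper leaves implicit, but the underlying argument is the same.
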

	\smallskip
        \noindent By Lemma~\ref{lem:crashcsa}, we derive that
        the resulting systems belong to the class
of crash-handling systems, and the problem of checking $\RSC$ and
$\kwmc$ is decidable for this class.

	\begin{restatable}{theorem}{fifocrash}
		\label{lem:fifocrash}
		The FIFO system generated from the translation of
                crash-stop session types is a crash-handling
                system. Moreover, it is decidable to check
                inclusion to the $\RSC$ and $\kwmc$ classes. 
	\end{restatable}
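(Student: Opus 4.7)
The statement decomposes naturally into two parts: showing that the translation yields a crash-handling system (in the sense of Section~\ref{subsec:fifocs}), and then invoking the earlier decidability results for $\RSC$ and $\kwmc$ on this class. My plan is to handle them in this order, since the decidability half is essentially a direct appeal to already-proved theorems once membership in the crash-handling class is established.

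For the first part, I will verify the three clauses $\CI$, $\CP$, and $\CR$ of the crash-handling definition against the construction of $\automata(\deftypep{T}_\procp)$. Clause (b) of the translation produces, for every external choice whose $\crasht$ label targets an unreliable participant, a transition of the form $(\deftypep{T}, \actrec{q}{p}{\text{\lightning}}, \strip{\deftypep{T_k}})$, and the syntactic well-formedness condition on types (external choices involving unreliable peers must carry a $\crasht$ branch, enforced by the requirement that singleton $\crasht$ branches are not permitted and by the assumption that receivers from $\procset\setminus \relset$ always include crash-handling) guarantees that every reachable receiving transition from an unreliable process comes equipped with such a counterpart; this gives $\CI$. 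Clause (c) of the translation explicitly adds, from every non-terminal state, the sequence $\intact{crash}_\procp \cdot \broadcast{\text{\lightning}}{\procp}$ through fresh intermediate states $\qsend{r_i}$, which is precisely the shape required by $\CP$, and the intermediates sit in $\cstateset_{\procp,3}$. Clause (d) of the translation adds the self-loops $(\deftype{stop}, \actrec{q}{p}{\text{\lightning}}, \deftype{stop})$ and $(\deftype{end}, \actrec{q}{p}{\text{\lightning}}, \deftype{end})$ for every $\procq \neq \procp$, which directly instantiates $\CR$. A light partition of the states into $\cstateset_{\procp,1}$ (original type states), $\cstateset_{\procp,2}$ (the $\deftype{stop}$/$\deftype{end}$ sinks), and $\cstateset_{\procp,3}$ (the $\qsink$/$\qsend{r_i}$ intermediates) witnesses the split required in the definition of crash-handling behaviour.

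For the second part I appeal to Lemma~\ref{lem:crashcsa}: each $\automata(\deftypep{T}_\procp)$ is deterministic, directed, and free of mixed states, so the composed system $(\automata(\deftypep{T}_\procp))_{\procp \in \procset}$ is a CSA in the sense of Definition~\ref{def:csa}. Combined with the crash-handling membership just shown, Theorem~\ref{prop:rsccrashdec} yields the decidability of checking $\RSC$ inclusion, and Theorem~\ref{thm:kmcmain} yields the decidability of checking $\kwmc$ (and in fact places it in \textsc{pspace}).

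\textbf{Main obstacle.} The only subtlety I expect is in $\CI$: the condition is phrased over \emph{reachable} configurations $\confq{\globq}{\chancon}$, whereas the translation rules are local to the type syntax. I will therefore need a short inductive argument showing that every state $q \in \cstateset_{\procp,1}$ of $\automata(\deftypep{T}_\procp)$ that admits an incoming $\actrec{r}{p}{a}$ transition with $\procr \notin \relset$ corresponds to an external choice on $\procr$ in $\deftypep{T}_\procp$, and hence by the type well-formedness carries a $\crasht$ branch mapped by clause~(b) to the required $\actrec{r}{p}{\text{\lightning}}$ transition. The rest is bookkeeping: checking that the freshly introduced transitions in clauses~(c) and (d) do not interfere with this property, and that $\deftype{stop}$/$\deftype{end}$ states are correctly classified as $\cstateset_{\procp,2}$ so that $\transrel_{\procp,1}$ and $\transrel_{\procp,2}$ split cleanly.
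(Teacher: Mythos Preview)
Your proposal is correct and follows essentially the same approach as the paper: partition the states as $\cstateset_{\procp,1}=\{\text{type states}\}\setminus\{\deftype{end},\deftype{stop}\}$, $\cstateset_{\procp,2}=\{\deftype{end},\deftype{stop}\}$, $\cstateset_{\procp,3}=\{\qsink\}\cup\{\qsend{r}\}$, match translation clauses (3b)/(3c)/(3d) to $\CI$/$\CP$/$\CR$, and then appeal to Lemma~\ref{lem:crashcsa} together with Theorems~\ref{prop:rsccrashdec} and~\ref{thm:kmcmain}. Your treatment of the reachability premise in $\CI$ is in fact more careful than the paper's own proof, which simply asserts the correspondence without discussing it.
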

\section{Experimental evaluation}\label{sec:implementation}
We verify protocols in the literature
under interferences 
in order to compare how inclusion to the $\RSC$ and $\kwmc$ classes
change; 
and which of $\RSC$ and $\kmc$ is more resilient under failures.
We used the tools, $\RSC$-checker \emph{ReSCu} 
\cite{desgeorges_rsc_2023} (implemented in OCaml), 
and $\kmc$-checker \texttt{kmc} \cite{lange_verifying_2019}
(implemented in Haskell).
The \emph{ReSCu} tool implements a version of the out-of-order scheduling with an option, so we use this available option to take our benchmark.

The  \texttt{kmc} tool implements the options
to check (1) the $k$-exhaustivity ($k$-$\textsc{exh}$, Def.~\ref{def:kexh}); 
(2) the $k$-eventual reception ($k$-$\textsc{er}$,
Def~\ref{def:ksafe}); and 
(3) the progress ($k$-$\textsc{pg}$, Def~\ref{def:ksafe}).
The $k$-weak multiparty compatibility condition ($\kwmc$,
Def~\ref{def:wkmc}) 
no longer checks for $k$-$\textsc{pg}$ but checks $k$-$\textsc{exh}$
and $k$-$\textsc{er}$. 
Hence checking (1,2,3) gives us the justification
whether $\kwmc$ is more resilient than $\kmc$. 
For the out-of-order, 
we implemented the out-of-order scheduling in Haskell
mirroring the implementation as in \emph{ReSCu}. %
To model lossiness, we add reception self-loops as defined in completely specified protocols \cite{finkel_decidability_1994}, and for corruption, we allow the sending of arbitrary messages; both of these are implemented in Python.  

Table~\ref{table:implementation} shows the evaluation results. 
From the benchmarks in \cite{desgeorges_rsc_2023}, we selected all the relevant
benchmarks (CSA and peer-to-peer) in order to evaluate them by \texttt{kmc}. 
Interestingly, in the case of out-of-order errors,
all of the protocols which satisfy $\kmc$ without errors 
still satisfy $\kmc$. 
However $\RSC$ does not satisfy some $\kmc$ protocols. 
This would imply that in most real-world examples, the flexibility
in behaviour introduced by relaxing the FIFO condition does not affect
the inclusion to $\kmc$. 
On the other hand, 
under lossiness and corruption,
most examples no longer belong to $\kmc$.
More specifically,
$k$-$\textsc{pg}$ fails for most cases.
The $k$-$\textsc{er}$ also fails
for many cases, especially in the presence of corruption.
This justifies a relevance of our
definition of $\kwmc$ under those two failures. 

\begin{center}
	
	\setlength{\tabcolsep}{1.2pt}
	\renewcommand{\arraystretch}{1.2}
	
\captionof{table}{\footnotesize Experimental evaluation of benchmarks in the \emph{ReSCu} and \texttt{kmc} tool, under the presence of no errors, out-of-order, lossiness and corruption errors. Note that the systems were checked for $k \leq 10.$ ${\color{blue}\textsc{to}}$ denotes timeout after 5 minutes. The *-marked examples originate from the ReSCu tool \cite{desgeorges_rsc_2023}, having been translated from other papers, as detailed in the original publication. \label{table:implementation}\vspace{1.5mm}} 
	\scriptsize{
		\begin{tabular}{c|c|c|c|c|ccc|c|ccc|c}
			\hline 
Protocol 
&\multicolumn{2}{c|}{~No errors~} 
&\multicolumn{2}{c|}{~Out of order~} 
&\multicolumn{4}{c|}{~Lossiness~} 
&\multicolumn{4}{c}{~Corruption~}\\
& ~\kmc~ & ~\RSC~ &  ~\kmc~ & ~\RSC~ & $k$-exh & $k$\notsotiny{-ER} 
& $k$\notsotiny{-PG} 
& ~\RSC~ & $k$-exh & $k$\notsotiny{-ER} 
& $k$\notsotiny{-PG} & ~\RSC~  \\
			\hline
			Alternating Bit \cite{PEngineering} & \Yea & \Yea & \Yea & \Yea & \Yea & \Yea & \Nay & \Yea & \Yea & \Yea & \Nay & \Yea\\
			Alternating Bit \cite{boigelot_symbolic_1996} &\Yea & \Nay  & \Yea & \Yea  & \Yea & \Yea &\Nay & \Yea & \Yea & \Yea & \Nay & \Yea \\
			Bargain \cite{LTY15} & \Yea  & \Yea  & \Yea & \Yea  & \Yea & \Yea &\Nay & \Yea & \Yea & \Nay & \Nay & \Yea \\
			Client-Server-Logger \cite{lange_verifying_2019} & \Yea & \Nay  & \Yea & \Nay & \Yea & \Yea &\Nay & \Yea & \Yea & \Yea & \Nay & \Yea \\
			Cloud System v4 \cite{GudemannSO12} & \Yea & \Yea  & \Yea & \Nay & \Nay & \Nay &\Nay & \Yea & \Nay & \Nay & \Nay & \Nay \\
			Commit protocol \cite{Bouajjani2018} & \Yea & \Yea & \Yea & \Yea & \Yea & \Nay &\Nay & \Yea & \Yea & \Nay & \Nay & \Yea \\
			Dev System \cite{PereraLG16} & \Yea & \Yea  & \Yea & \Yea  & \Yea & \Nay &\Nay & \Yea & \Yea & \Nay & \Nay & \Yea \\
			Elevator \cite{Bouajjani2018} & \Yea  & \Nay &  \Yea & \Nay & \Yea & \Yea &\Nay & \Nay & \Nay & {\color{blue}\textsc{to}} & \Nay & \Nay \\
			Elevator-dashed \cite{Bouajjani2018} & \Yea & \Nay  & \Yea & \Nay  & \Nay & \Nay &\Nay & \Nay & \Nay & {\color{blue}\textsc{to}} & \Nay & \Nay \\
			Elevator-directed \cite{Bouajjani2018} & \Yea & \Nay  & \Yea & \Nay  & \Nay & \Nay &\Nay & \Nay & \Nay & {\color{blue}\textsc{to}} & \Nay & \Nay \\
			Filter Collaboration \cite{YellinS97} & \Yea & \Yea  & \Yea & \Yea  & \Yea & \Yea &\Nay & \Yea & \Yea & \Nay & \Nay & \Yea \\
			Four Player Game \cite{LTY15} & \Yea  & \Yea &  \Yea & \Nay & \Nay & \Yea &\Nay & \Yea & \Nay & \Yea & \Nay & \Yea \\
			Health System \cite{lange_verifying_2019} & \Yea  & \Yea  & \Yea & \Yea & \Yea & \Nay &\Nay & \Yea & \Yea & \Nay & \Nay & \Yea \\
			Logistic \cite{BPMNcoreography} & \Yea  & \Yea  & \Yea & \Yea & \Yea & \Yea &\Nay & \Yea & \Nay & \Nay & \Nay & \Yea \\
			Sanitary Agency (mod) \cite{SalaunBS06} & \Yea & \Yea  & \Yea & \Yea & \Yea & \Nay &\Nay & \Yea & \Yea & {\color{blue}\textsc{to}} & \Nay & \Yea \\
			TPM Contract \cite{HalleB10} & \Yea & \Yea  & \Yea & \Nay & \Yea & \Yea &\Nay & \Yea & \Nay & \Nay & \Nay & \Nay \\
			2-Paxos 2P3A (App~\ref{sec:paxos}) & \Yea &\Yea  & \Yea & \Yea & \Yea & \Yea & \Yea & \Yea & \Yea & \Nay & \Nay & \Yea \\
			Promela I* \cite{desgeorges_rsc_2023} & \Yea &\Nay  & \Yea & \Nay & \Yea & \Yea & \Nay & \Yea & \Yea & \Yea & \Yea & \Yea \\
			Web Services* \cite{desgeorges_rsc_2023} & \Yea &\Yea  & \Yea & \Yea & \Yea & \Yea & \Nay & \Yea & \Yea & \Nay & \Nay & \Yea \\
			Trade System* \cite{desgeorges_rsc_2023} & \Yea &\Yea  & \Yea & \Yea & \Yea & \Yea & \Nay & \Yea & \Yea & \Nay & \Nay & \Yea \\
			Online Stock Broker* \cite{desgeorges_rsc_2023} & \Nay &\Nay  & \Nay & \Nay & \Nay & \Nay & \Nay & \Yea & \Nay & \Nay & \Nay & \Yea \\
			FTP* \cite{desgeorges_rsc_2023} & \Yea &\Yea  & \Yea & \Yea & \Yea & \Yea & \Nay & \Yea & \Nay & \Nay & \Nay & \Yea \\
			Client-server* \cite{desgeorges_rsc_2023} & \Yea &\Yea  & \Yea & \Yea & \Yea & \Yea & \Nay & \Yea & \Yea & \Nay & \Nay & \Yea \\
			Mars Explosion* \cite{desgeorges_rsc_2023} & \Yea &\Yea  & \Yea & \Yea & \Yea & \Nay & \Nay & \Yea & \Nay & \Nay & \Nay & \Yea \\
			Online Computer Sale* \cite{desgeorges_rsc_2023} & \Nay &\Yea  & \Nay & \Yea & \Yea & \Yea & \Nay & \Yea & \Nay & \Nay & \Nay & \Yea \\
			e-Museum* \cite{desgeorges_rsc_2023} & \Yea &\Yea  & \Yea & \Nay & \Yea & \Nay & \Nay & \Yea & \Yea & \Nay & \Nay & \Yea \\
			Vending Machine* \cite{desgeorges_rsc_2023} & \Yea &\Yea  & \Yea & \Yea & \Yea & \Yea & \Nay & \Yea & \Yea & \Nay & \Nay & \Yea \\
			Bug Report* \cite{desgeorges_rsc_2023} & \Yea &\Yea  & \Yea & \Nay & \Yea & \Yea & \Nay & \Yea & \Nay & \Nay & \Nay & \Yea \\
			Sanitary Agency* \cite{desgeorges_rsc_2023} & \Nay &\Yea  & \Nay & \Yea & \Yea & \Yea & \Nay & \Yea & \Yea & \Nay & \Nay & \Yea \\
			SSH* \cite{desgeorges_rsc_2023} & \Nay &\Yea  & \Nay & \Yea & \Yea & \Yea & \Nay & \Yea & \Yea & \Yea & \Nay & \Yea \\
			Booking System* \cite{desgeorges_rsc_2023} & \Nay &\Yea  & \Nay & \Yea & \Yea & \Yea & \Nay & \Yea & \Yea & \Nay & \Nay & \Yea \\
			Hand-crafted Example* \cite{desgeorges_rsc_2023} & \Nay &\Yea  & \Nay & \Yea & \Yea & \Nay & \Nay & \Yea & \Yea & \Nay & \Nay & \Yea \\
			\hline
			
	\end{tabular}}
\end{center}

We implement a $k$-bounded version of the Paxos protocol \cite{Lamport_2001}, a consensus algorithm that ensures agreement in a distributed system despite failures like lossiness and reordering, using a process of proposing and accepting values (c.f. Appendix~\ref{sec:paxos} for the details). This version limits retry attempts to $k$. Our implementation (for $2$ retries, $2$ proposers and $3$ acceptors) shows it is $\kmc$ and $\RSC$ both without errors, and $\kmc$ under lossiness. Since Paxos does not assume corruption, it is unsurprising that it is no longer $\kmc$ under corruption.

\section{Conclusion and further related work}\label{sec:conclusion}
In this paper, we derived decidability and complexity results
for two subclasses, $\RSC$ and $\kmc$, 
under two types of communication failures: interferences and
crash-stop failures.  
In the absence of errors, $\RSC$ systems and $\kmc$ systems are incomparable, even if we restrict the analyses to $\kmck{1}$ systems. For example, \cite[Example~4]{lange_verifying_2019} is $\kmck{1}$ but not $\RSC$. Conversely, \cite[Example~4]{di_giusto_multiparty_2023} is $\RSC$ but does not satisfy the progress condition, and hence is not $\kmc$ for any $k \in \nat$. %
Despite these distinctions, both classes aim to generalise the
concept of half-duplex communication to multiparty systems. 
This serves as our primary motivation for examining failures in a uniform way across both $\RSC$ and $\kmc$ systems.

In the interference model, 
we introduced $\iRSC$ systems,
which relax the matching-pair conditions in $\RSC$;
and $\kwmc$ which omits the progress condition
to accommodate a model with no final states.  
We proved that the inclusion problem for these relaxed 
properties remain decidable within the same complexity class
as their error-free counterparts. 
The evaluation results in
\S~\ref{sec:implementation} confirm
that relaxed systems are more resilient than the original ones. 

As the second failure model, we investigated crash-stop failures. We
defined crash-handling communicating systems which strictly include the class of local types with crash-stop failures. We also proved that both $\RSC$ and $\kmc$ properties are decidable for this class. Note that multiparty session types 
with crash-stop failures studied in \cite{barwell_generalised_2022} 
are limited to \emph{synchronous} communications. Meanwhile, the asynchronous setting in
\cite{barwell_designing_2023} 
restricts expressiveness
 to a set of local types projected from global types (which is known to be 
less expressive than those not using global types
\cite{scalas_less_2019}). Therefore, both of these systems are strictly subsumed by
our crash-handling system as proven in Theorem
\ref{lem:fifocrash}.

Integrating the $\kmc$-checker and the \emph{ReSCu} tool (with support for crash-stop failures) into the Scala toolchain of \cite{barwell_designing_2023} is a promising direction for future work, potentially enabling the verification of a broader class of programs than those considered in \cite{scalas_less_2019,barwell_designing_2023}.

Due to the need to model failures in real-world distributed systems, various failure-handling systems have been studied in the session types literature, e.g., affine session types \cite{mostrous2014Affine,DBLP:journals/pacmpl/FowlerLMD19,HLY2024}, link-failure \cite{adameit2017Session} and event-driven failures \cite{OOPSLA21FaultTolerantMPST}. Interpreting their failures into our framework would offer a uniform 
analysis of behavioural typed failure processes.

\raggedbottom
\bibliography{bibliography}

\begin{thebibliography}{10}

\bibitem{abdulla_verifying_1993}
P.~Abdulla and B.~Jonsson.
\newblock Verifying programs with unreliable channels.
\newblock In {\em [1993] {Proceedings} {Eighth} {Annual} {IEEE} {Symposium} on
  {Logic} in {Computer} {Science}}, pages 160--170, June 1993.
\newblock \href {https://doi.org/10.1109/LICS.1993.287591}
  {\path{doi:10.1109/LICS.1993.287591}}.

\bibitem{adameit2017Session}
{Manuel} Adameit, {Kirstin} {Peters}, and {Uwe} {Nestmann}.
\newblock {Session} {Types} for {Link} {Failures}.
\newblock In {Ahmed} Bouajjani and {Alexandra} {Silva}, editors, {\em {Formal}
  {Techniques} for {Distributed} {Objects}, {Components}, and {Systems}}, pages
  1--16, Cham, 2017. {Springer} {International} {Publishing}.
\newblock \href {https://doi.org/10.1007/978-3-319-60225-7_1}
  {\path{doi:10.1007/978-3-319-60225-7_1}}.

\bibitem{barwell_designing_2023}
Adam~D. Barwell, Ping Hou, Nobuko Yoshida, and Fangyi Zhou.
\newblock {Designing Asynchronous Multiparty Protocols With Crash Stop
  Failures}.
\newblock In {\em European Conference on Object Oriented Programming}, volume
  263, pages 1:1--1:30, 2023.
\newblock \href {https://doi.org/10.4230/LIPIcs.ECOOP.2023.1}
  {\path{doi:10.4230/LIPIcs.ECOOP.2023.1}}.

\bibitem{barwell_designing_2023_full}
Adam~D. Barwell, Ping Hou, Nobuko Yoshida, and Fangyi Zhou.
\newblock Designing {Asynchronous} {Multiparty} {Protocols} with {Crash}-{Stop}
  {Failures}, May 2023.
\newblock arXiv:2305.06238 [cs].
\newblock URL: \url{http://arxiv.org/abs/2305.06238}.

\bibitem{BHYZ2025}
Adam~D. Barwell, Ping Hou, Nobuko Yoshida, and Fangyi Zhou.
\newblock {Crash-Stop Failures in Asynchronous Multiparty Session Types}.
\newblock {\em {Logical Methods in Computer Science}}, 2025.
\newblock URL: \url{https://arxiv.org/abs/2311.11851}.

\bibitem{barwell_generalised_2022}
Adam~D. Barwell, Alceste Scalas, Nobuko Yoshida, and Fangyi Zhou.
\newblock Generalised {Multiparty} {Session} {Types} with {Crash}-{Stop}
  {Failures}.
\newblock In Bartek Klin, S{\textbackslash}lawomir Lasota, and Anca Muscholl,
  editors, {\em 33rd {International} {Conference} on {Concurrency} {Theory}
  ({CONCUR} 2022)}, volume 243 of {\em Leibniz {International} {Proceedings} in
  {Informatics} ({LIPIcs})}, pages 35:1--35:25, Dagstuhl, Germany, 2022.
  Schloss Dagstuhl – Leibniz-Zentrum für Informatik.
\newblock ISSN: 1868-8969.
\newblock URL: \url{https://drops.dagstuhl.de/opus/volltexte/2022/17098}, \href
  {https://doi.org/10.4230/LIPIcs.CONCUR.2022.35}
  {\path{doi:10.4230/LIPIcs.CONCUR.2022.35}}.

\bibitem{boigelot_symbolic_1996}
Bernard Boigelot and Patrice Godefroid.
\newblock Symbolic verification of communication protocols with infinite state
  spaces using {QDDs}.
\newblock In Rajeev Alur and Thomas~A. Henzinger, editors, {\em Computer
  {Aided} {Verification}}, Lecture {Notes} in {Computer} {Science}, pages
  1--12, Berlin, Heidelberg, 1996. Springer.
\newblock \href {https://doi.org/10.1007/3-540-61474-5_53}
  {\path{doi:10.1007/3-540-61474-5_53}}.

\bibitem{bollig_unifying_2021}
Benedikt Bollig, Cinzia Di~Giusto, Alain Finkel, Laetitia Laversa, Etienne
  Lozes, and Amrita Suresh.
\newblock A {Unifying} {Framework} for {Deciding} {Synchronizability}.
\newblock In Serge Haddad and Daniele Varacca, editors, {\em 32nd
  {International} {Conference} on {Concurrency} {Theory} ({CONCUR} 2021)},
  volume 203 of {\em Leibniz {International} {Proceedings} in {Informatics}
  ({LIPIcs})}, pages 14:1--14:18, Dagstuhl, Germany, 2021. Schloss Dagstuhl –
  Leibniz-Zentrum für Informatik.
\newblock ISSN: 1868-8969.
\newblock URL: \url{https://drops.dagstuhl.de/opus/volltexte/2021/14391}, \href
  {https://doi.org/10.4230/LIPIcs.CONCUR.2021.14}
  {\path{doi:10.4230/LIPIcs.CONCUR.2021.14}}.

\bibitem{bollig_propositional_2010}
Benedikt Bollig, Dietrich Kuske, and Ingmar Meinecke.
\newblock {Propositional Dynamic Logic for Message-Passing Systems}.
\newblock {\em {Logical Methods in Computer Science}}, {Volume 6, Issue 3},
  September 2010.
\newblock URL: \url{https://lmcs.episciences.org/1057}, \href
  {https://doi.org/10.2168/LMCS-6(3:16)2010}
  {\path{doi:10.2168/LMCS-6(3:16)2010}}.

\bibitem{bouajjani_completeness_2018}
Ahmed Bouajjani, Constantin Enea, Kailiang Ji, and Shaz Qadeer.
\newblock On the {Completeness} of {Verifying} {Message} {Passing} {Programs}
  {Under} {Bounded} {Asynchrony}.
\newblock In Hana Chockler and Georg Weissenbacher, editors, {\em Computer
  {Aided} {Verification} - 30th {International} {Conference}, {CAV} 2018,
  {Held} as {Part} of the {Federated} {Logic} {Conference}, {FloC} 2018,
  {Oxford}, {UK}, {July} 14-17, 2018, {Proceedings}, {Part} {II}}, Lecture
  {Notes} in {Computer} {Science}, pages 372--391, Cham, 2018. Springer
  International Publishing.
\newblock \href {https://doi.org/10.1007/978-3-319-96142-2_23}
  {\path{doi:10.1007/978-3-319-96142-2_23}}.

\bibitem{Bouajjani2018}
Ahmed Bouajjani, Constantin Enea, Kailiang Ji, and Shaz Qadeer.
\newblock On the completeness of verifying message passing programs under
  bounded asynchrony.
\newblock In {\em {CAV} 2018}, pages 372--391, 2018.
\newblock \href {https://doi.org/10.1007/978-3-319-96142-2\_23}
  {\path{doi:10.1007/978-3-319-96142-2\_23}}.

\bibitem{brand_communicating_1983}
Daniel Brand and Pitro Zafiropulo.
\newblock On {Communicating} {Finite}-{State} {Machines}.
\newblock {\em J. ACM}, 30(2):323--342, April 1983.
\newblock \href {https://doi.org/10.1145/322374.322380}
  {\path{doi:10.1145/322374.322380}}.

\bibitem{cutner_deadlock-free_2022}
Zak Cutner, Nobuko Yoshida, and Martin Vassor.
\newblock Deadlock-free asynchronous message reordering in {Rust} with
  multiparty session types.
\newblock In {\em Proceedings of the 27th {ACM} {SIGPLAN} {Symposium} on
  {Principles} and {Practice} of {Parallel} {Programming}}, {PPoPP} '22, pages
  246--261, New York, NY, USA, March 2022. Association for Computing Machinery.
\newblock \href {https://doi.org/10.1145/3503221.3508404}
  {\path{doi:10.1145/3503221.3508404}}.

\bibitem{CzerwinskiLLLM21}
Wojciech Czerwinski, Slawomir Lasota, Ranko Lazic, J{\'{e}}r{\^{o}}me Leroux,
  and Filip Mazowiecki.
\newblock {The Reachability Problem for Petri Nets Is Not Elementary}.
\newblock {\em J. {ACM}}, 68(1):7:1--7:28, 2021.
\newblock \href {https://doi.org/10.1145/3422822} {\path{doi:10.1145/3422822}}.

\bibitem{cece_verification_2005}
Gérard Cécé and Alain Finkel.
\newblock Verification of programs with half-duplex communication.
\newblock {\em Information and Computation}, 202(2):166--190, November 2005.
\newblock URL:
  \url{https://www.sciencedirect.com/science/article/pii/S0890540105001082},
  \href {https://doi.org/10.1016/j.ic.2005.05.006}
  {\path{doi:10.1016/j.ic.2005.05.006}}.

\bibitem{cece_unreliable_1996}
Gérard Cécé, Alain Finkel, and S.~Purushothaman~Iyer.
\newblock Unreliable {Channels} {Are} {Easier} to {Verify} {Than} {Perfect}
  {Channels}.
\newblock {\em Information and Computation}, 124(1):20--31, January 1996.
\newblock URL:
  \url{https://www.sciencedirect.com/science/article/pii/S0890540196900036},
  \href {https://doi.org/10.1006/inco.1996.0003}
  {\path{doi:10.1006/inco.1996.0003}}.

\bibitem{denielou_multiparty_2013}
Pierre-Malo Deniélou and Nobuko Yoshida.
\newblock Multiparty {Compatibility} in {Communicating} {Automata}:
  {Characterisation} and {Synthesis} of {Global} {Session} {Types}.
\newblock In Fedor~V. Fomin, Rūsiņš Freivalds, Marta Kwiatkowska, and David
  Peleg, editors, {\em Automata, {Languages}, and {Programming}}, Lecture
  {Notes} in {Computer} {Science}, pages 174--186, Berlin, Heidelberg, 2013.
  Springer.
\newblock \href {https://doi.org/10.1007/978-3-642-39212-2_18}
  {\path{doi:10.1007/978-3-642-39212-2_18}}.

\bibitem{desgeorges_rsc_2023}
Loïc Desgeorges and Loïc Germerie~Guizouarn.
\newblock {RSC} to the {ReSCu}: {Automated} {Verification} of {Systems}
  of {Communicating} {Automata}.
\newblock In Sung-Shik Jongmans and Antónia Lopes, editors, {\em Coordination
  {Models} and {Languages}}, Lecture {Notes} in {Computer} {Science}, pages
  135--143, Cham, 2023. Springer Nature Switzerland.
\newblock \href {https://doi.org/10.1007/978-3-031-35361-1_7}
  {\path{doi:10.1007/978-3-031-35361-1_7}}.

\bibitem{di_giusto_multiparty_2023}
Cinzia Di~Giusto, Loïc Germerie~Guizouarn, and Etienne Lozes.
\newblock Multiparty half-duplex systems and synchronous communications.
\newblock {\em Journal of Logical and Algebraic Methods in Programming},
  131:100843, February 2023.
\newblock URL:
  \url{https://www.sciencedirect.com/science/article/pii/S2352220822000967},
  \href {https://doi.org/10.1016/j.jlamp.2022.100843}
  {\path{doi:10.1016/j.jlamp.2022.100843}}.

\bibitem{di_giusto_towards_2021}
Cinzia Di~Giusto, Loïc~Germerie Guizouarn, and Etienne Lozes.
\newblock Towards {Generalised} {Half}-{Duplex} {Systems}.
\newblock {\em Electron. Proc. Theor. Comput. Sci.}, 347:22--37, October 2021.
\newblock arXiv:2110.00145 [cs].
\newblock URL: \url{http://arxiv.org/abs/2110.00145}, \href
  {https://doi.org/10.4204/EPTCS.347.2} {\path{doi:10.4204/EPTCS.347.2}}.

\bibitem{di_giusto_k-synchronizability_2020}
Cinzia Di~Giusto, Laetitia Laversa, and Etienne Lozes.
\newblock On the k-synchronizability of {Systems}.
\newblock In Jean Goubault-Larrecq and Barbara König, editors, {\em
  Foundations of {Software} {Science} and {Computation} {Structures} - 23rd
  {International} {Conference}, {FOSSACS} 2020, {Proceedings}}, Lecture {Notes}
  in {Computer} {Science}, pages 157--176, Cham, 2020. Springer International
  Publishing.
\newblock \href {https://doi.org/10.1007/978-3-030-45231-5_9}
  {\path{doi:10.1007/978-3-030-45231-5_9}}.

\bibitem{fekete_impossibility_1993}
Alan Fekete, Nancy Lynch, Yishay Mansour, and John Spinelli.
\newblock The impossibility of implementing reliable communication in the face
  of crashes.
\newblock {\em J. ACM}, 40(5):1087--1107, November 1993.
\newblock URL: \url{https://dl.acm.org/doi/10.1145/174147.169676}, \href
  {https://doi.org/10.1145/174147.169676} {\path{doi:10.1145/174147.169676}}.

\bibitem{finkel_decidability_1994}
Alain Finkel.
\newblock Decidability of the termination problem for completely specified
  protocols.
\newblock {\em Distrib Comput}, 7(3):129--135, March 1994.
\newblock \href {https://doi.org/10.1007/BF02277857}
  {\path{doi:10.1007/BF02277857}}.

\bibitem{DBLP:journals/pacmpl/FowlerLMD19}
Simon Fowler, Sam Lindley, J.~Garrett Morris, and S{\'{a}}ra Decova.
\newblock {Exceptional Asynchronous Session Types: Session Types without
  Tiers}.
\newblock {\em Proc. {ACM} Program. Lang.}, 3({POPL}):28:1--28:29, 2019.
\newblock \href {https://doi.org/10.1145/3290341} {\path{doi:10.1145/3290341}}.

\bibitem{genest_communicating_2007}
Blaise Genest, Dietrich Kuske, and Anca Muscholl.
\newblock On {Communicating} {Automata} with {Bounded} {Channels}.
\newblock {\em Fundamenta Informaticae}, 80(1-3):147--167, January 2007.
\newblock Publisher: IOS Press.
\newblock URL:
  \url{https://content.iospress.com/articles/fundamenta-informaticae/fi80-1-3-09}.

\bibitem{genest_kleene_2004}
Blaise Genest, Anca Muscholl, and Dietrich Kuske.
\newblock A {Kleene} {Theorem} for a {Class} of {Communicating} {Automata} with
  {Effective} {Algorithms}.
\newblock In Cristian Calude, Elena Calude, and Michael~J. Dinneen, editors,
  {\em Developments in {Language} {Theory}, 8th {International} {Conference},
  {DLT} 2004, {Auckland}, {New} {Zealand}, {December} 13-17, 2004,
  {Proceedings}}, volume 3340 of {\em Lecture {Notes} in {Computer} {Science}},
  pages 30--48. Springer, 2004.
\newblock URL: \url{https://doi.org/10.1007/978-3-540-30550-7\_4}, \href
  {https://doi.org/10.1007/978-3-540-30550-7_4}
  {\path{doi:10.1007/978-3-540-30550-7_4}}.

\bibitem{GudemannSO12}
Matthias G{\"{u}}demann, Gwen Sala{\"{u}}n, and Meriem Ouederni.
\newblock Counterexample guided synthesis of monitors for realizability
  enforcement.
\newblock In {\em {ATVA} 2012}, pages 238--253, 2012.
\newblock \href {https://doi.org/10.1007/978-3-642-33386-6\_20}
  {\path{doi:10.1007/978-3-642-33386-6\_20}}.

\bibitem{HalleB10}
Sylvain Hall{\'{e}} and Tevfik Bultan.
\newblock Realizability analysis for message-based interactions using
  shared-state projections.
\newblock In {\em {SIGSOFT} 2010}, pages 27--36, 2010.
\newblock URL: \url{http://doi.acm.org/10.1145/1882291.1882298}, \href
  {https://doi.org/10.1145/1882291.1882298}
  {\path{doi:10.1145/1882291.1882298}}.

\bibitem{honda.vasconcelos.kubo:language-primitives}
Kohei Honda, Vasco~T. Vasconcelos, and Makoto Kubo.
\newblock Language primitives and type disciplines for structured
  communication-based programming.
\newblock In {\em Proceedings of ESOP 1998}, volume 1381, pages 22--138.
  Springer, 1998.

\bibitem{honda_multiparty_2008}
Kohei Honda, Nobuko Yoshida, and Marco Carbone.
\newblock Multiparty asynchronous session types.
\newblock In {\em Proceedings of the 35th annual {ACM} {SIGPLAN}-{SIGACT}
  symposium on {Principles} of programming languages}, {POPL} '08, pages
  273--284, New York, NY, USA, January 2008. Association for Computing
  Machinery.
\newblock \href {https://doi.org/10.1145/1328438.1328472}
  {\path{doi:10.1145/1328438.1328472}}.

\bibitem{HLY2024}
Ping Hou, Nicolas Lagaillardie, and Nobuko Yoshida.
\newblock {Fearless Asynchronous Communications with Timed Multiparty Session
  Protocols}.
\newblock In {\em ECOOP 2024}. Schloss Dagstuhl - Leibniz-Zentrum f{"{u}}r
  Informatik, 2024.

\bibitem{ImaiLN22}
Keigo Imai, Julien Lange, and Rumyana Neykova.
\newblock Kmclib: Automated inference and verification of session types from
  {OCaml} programs.
\newblock In Dana Fisman and Grigore Rosu, editors, {\em Tools and Algorithms
  for the Construction and Analysis of Systems - 28th International Conference,
  {TACAS} 2022, Held as Part of the European Joint Conferences on Theory and
  Practice of Software, {ETAPS} 2022, Munich, Germany, April 2-7, 2022,
  Proceedings, Part {I}}, volume 13243 of {\em Lecture Notes in Computer
  Science}, pages 379--386. Springer, 2022.
\newblock \href {https://doi.org/10.1007/978-3-030-99524-9\_20}
  {\path{doi:10.1007/978-3-030-99524-9\_20}}.

\bibitem{kazemlou_crash-resilient_2018}
Shokoufeh Kazemlou and Borzoo Bonakdarpour.
\newblock Crash-{Resilient} {Decentralized} {Synchronous} {Runtime}
  {Verification}.
\newblock In {\em 2018 {IEEE} 37th {Symposium} on {Reliable} {Distributed}
  {Systems} ({SRDS})}, pages 207--212, October 2018.
\newblock ISSN: 2575-8462.
\newblock URL: \url{https://ieeexplore.ieee.org/abstract/document/8613969},
  \href {https://doi.org/10.1109/SRDS.2018.00032}
  {\path{doi:10.1109/SRDS.2018.00032}}.

\bibitem{LNY2022}
Nicolas Lagaillardie, Rumyana Neykova, and Nobuko Yoshida.
\newblock {Stay Safe under Panic: Affine Rust Programming with Multiparty
  Session Types}.
\newblock In {\em 36th European Conference on Object-Oriented Programming},
  volume 222 of {\em LIPIcs}, pages 4:1--4:29. Schloss
  Dagstuhl--Leibniz-Zentrum f{u}r Informatik, 2022.
\newblock \href {https://doi.org/10.4230/LIPIcs.ECOOP.2022.4}
  {\path{doi:10.4230/LIPIcs.ECOOP.2022.4}}.

\bibitem{Lamport_2001}
Leslie Lamport.
\newblock The part-time parliament.
\newblock {\em {ACM} Trans. Comput. Syst.}, 16(2):133--169, 1998.
\newblock \href {https://doi.org/10.1145/279227.279229}
  {\path{doi:10.1145/279227.279229}}.

\bibitem{LTY15}
Julien Lange, Emilio Tuosto, and Nobuko Yoshida.
\newblock From communicating machines to graphical choreographies.
\newblock In {\em {POPL} 2015}, pages 221--232, 2015.
\newblock URL: \url{http://doi.acm.org/10.1145/2676726.2676964}, \href
  {https://doi.org/10.1145/2676726.2676964}
  {\path{doi:10.1145/2676726.2676964}}.

\bibitem{lange_verifying_2019}
Julien Lange and Nobuko Yoshida.
\newblock Verifying {Asynchronous} {Interactions} via {Communicating} {Session}
  {Automata}.
\newblock In Isil Dillig and Serdar Tasiran, editors, {\em Computer {Aided}
  {Verification}}, Lecture {Notes} in {Computer} {Science}, pages 97--117,
  Cham, 2019. Springer International Publishing.
\newblock \href {https://doi.org/10.1007/978-3-030-25540-4_6}
  {\path{doi:10.1007/978-3-030-25540-4_6}}.

\bibitem{lohrey_bounded_2002}
Markus Lohrey and Anca Muscholl.
\newblock Bounded {MSC} {Communication}.
\newblock In Mogens Nielsen and Uffe Engberg, editors, {\em Foundations of
  {Software} {Science} and {Computation} {Structures}, 5th {International}
  {Conference}, {FOSSACS} 2002. {Held} as {Part} of the {Joint} {European}
  {Conferences} on {Theory} and {Practice} of {Software}, {ETAPS} 2002
  {Grenoble}, {France}, {April} 8-12, 2002, {Proceedings}}, volume 2303 of {\em
  Lecture {Notes} in {Computer} {Science}}, pages 295--309. Springer, 2002.
\newblock URL: \url{https://doi.org/10.1007/3-540-45931-6\_21}, \href
  {https://doi.org/10.1007/3-540-45931-6_21}
  {\path{doi:10.1007/3-540-45931-6_21}}.

\bibitem{lozes_reliable_2012}
Etienne Lozes and Jules Villard.
\newblock Reliable {Contracts} for {Unreliable} {Half}-{Duplex}
  {Communications}.
\newblock In Marco Carbone and Jean-Marc Petit, editors, {\em Web {Services}
  and {Formal} {Methods}}, Lecture {Notes} in {Computer} {Science}, pages
  2--16, Berlin, Heidelberg, 2012. Springer.
\newblock \href {https://doi.org/10.1007/978-3-642-29834-9_2}
  {\path{doi:10.1007/978-3-642-29834-9_2}}.

\bibitem{mostrous2014Affine}
{Dimitris} Mostrous and {Vasco}~{Thudichum} {Vasconcelos}.
\newblock {Affine} sessions.
\newblock In {\em {International} {Conference} on {Coordination} {Languages}
  and {Models}}, pages 115--130, Berlin, {Heidelberg}, 2014. Springer.
\newblock \href {https://doi.org/10.1007/978-3-662-43376-8_8}
  {\path{doi:10.1007/978-3-662-43376-8_8}}.

\bibitem{BPMNcoreography}
{OMG}.
\newblock {{B}usiness {P}rocess {M}odel and {N}otation}, 2018.
\newblock {\url{https://www.omg.org/spec/BPMN/2.0/}}.

\bibitem{PereraLG16}
Roly Perera, Julien Lange, and Simon~J. Gay.
\newblock Multiparty compatibility for concurrent objects.
\newblock In {\em {PLACES} 2016}, pages 73--82, 2016.
\newblock \href {https://doi.org/10.4204/EPTCS.211.8}
  {\path{doi:10.4204/EPTCS.211.8}}.

\bibitem{PEngineering}
Introduction to protocol engineering.
\newblock Available at \url{http://cs.uccs.edu/~cs522/pe/pe.htm}, 2006.

\bibitem{SalaunBS06}
Gwen Sala{\"{u}}n, Lucas Bordeaux, and Marco Schaerf.
\newblock Describing and reasoning on web services using process algebra.
\newblock {\em {IJBPIM}}, 1(2):116--128, 2006.
\newblock \href {https://doi.org/10.1504/IJBPIM.2006.010025}
  {\path{doi:10.1504/IJBPIM.2006.010025}}.

\bibitem{scalas_less_2019}
Alceste Scalas and Nobuko Yoshida.
\newblock Less is more: multiparty session types revisited.
\newblock {\em Proc. ACM Program. Lang.}, 3(POPL):30:1--30:29, January 2019.
\newblock URL: \url{https://dl.acm.org/doi/10.1145/3290343}, \href
  {https://doi.org/10.1145/3290343} {\path{doi:10.1145/3290343}}.

\bibitem{Schnoebelen02}
Philippe Schnoebelen.
\newblock Verifying lossy channel systems has nonprimitive recursive
  complexity.
\newblock {\em Inf. Process. Lett.}, 83(5):251--261, 2002.
\newblock \href {https://doi.org/10.1016/S0020-0190(01)00337-4}
  {\path{doi:10.1016/S0020-0190(01)00337-4}}.

\bibitem{THK}
Kaku Takeuchi, Kohei Honda, and Makoto Kubo.
\newblock {An Interaction-based Language and its Typing System}.
\newblock In {\em PARLE'94}, volume 817, pages 398--413, 1994.

\bibitem{vassor_refinements_2024}
Martin Vassor and Nobuko Yoshida.
\newblock Refinements for multiparty message-passing protocols:
  Specification-agnostic theory and implementation.
\newblock In Jonathan Aldrich and Guido Salvaneschi, editors, {\em 38th
  European Conference on Object-Oriented Programming, {ECOOP} 2024, September
  16-20, 2024, Vienna, Austria}, volume 313 of {\em LIPIcs}, pages 41:1--41:29.
  Schloss Dagstuhl - Leibniz-Zentrum f{\"{u}}r Informatik, 2024.
\newblock URL: \url{https://doi.org/10.4230/LIPIcs.ECOOP.2024.41}, \href
  {https://doi.org/10.4230/LIPICS.ECOOP.2024.41}
  {\path{doi:10.4230/LIPICS.ECOOP.2024.41}}.

\bibitem{OOPSLA21FaultTolerantMPST}
Malte Viering, Raymond Hu, Patrick Eugster, and Lukasz Ziarek.
\newblock A {Multiparty} {Session} {Typing} {Discipline} for {Fault-Tolerant}
  {Event-Driven} {Distributed} {Programming}.
\newblock {\em Proc. ACM Program. Lang.}, 5(OOPSLA), October 2021.
\newblock \href {https://doi.org/10.1145/3485501} {\path{doi:10.1145/3485501}}.

\bibitem{YellinS97}
Daniel~M. Yellin and Robert~E. Strom.
\newblock Protocol specifications and component adaptors.
\newblock {\em {ACM} Trans. Program. Lang. Syst.}, 19(2):292--333, 1997.
\newblock URL: \url{http://doi.acm.org/10.1145/244795.244801}, \href
  {https://doi.org/10.1145/244795.244801} {\path{doi:10.1145/244795.244801}}.

\end{thebibliography}

\appendix
\section{Proofs from \S \ref{sec:prelim}} \label{app:sec2}
We sketch the proofs of reachability under the presence of errors. %
The proof is well-known for FIFO systems with lossiness and out-of-order errors, but we for completeness sake, we include it here.
\myparagraph{FIFO systems with lossiness.}
As shown in \cite{cece_unreliable_1996}, for lossy systems, the reachability set is recognisable, and the reachability problem is decidable.

\begin{lemma}[\cite{cece_unreliable_1996}]
	For FIFO systems with lossiness, the reachability set is recognisable.
\end{lemma}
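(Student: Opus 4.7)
The strategy is the classical well-structured-transition-systems argument of Abdulla--Jonsson and Cécé--Finkel--Iyer. I would proceed in three steps.

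First, equip the set of configurations with the componentwise quasi-order $\conf \sqsubseteq \conf'$ defined by $\globq = \globq'$ and $w_{\chan{\procp}{\procq}} \preceq w'_{\chan{\procp}{\procq}}$ for every channel, where $\preceq$ is the scattered-subword ordering on $\msgset^*$. By Higman's lemma, $\preceq$ is a well-quasi-order on $\msgset^*$, and since $\sqsubseteq$ is the product of a well-quasi-order with equality on the finite set of global states, it is itself a well-quasi-order on configurations.

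Second, I would show that the reachability set $\RS{\system}$ is downward-closed with respect to $\sqsubseteq$. This is where lossiness is crucial: the $\succeq$ relation added to the semantics by the axiom $a \succeq \emptyword$ together with Additivity and Transitivity gives $w \succeq w'$ whenever $w'$ is a scattered subword of $w$. Hence if $\initconf \astred \conf$ and $\conf' \sqsubseteq \conf$, then along the very same sequence of transitions, but using the interference relation to drop the extra letters immediately after each send (or equivalently in the initial channels), one obtains $\initconf \astred \conf'$. Formally this is a short induction on the length of the witnessing run of $\conf$, repeatedly invoking Definition~\ref{def:irs} with $w' \preceq w \cdot m$.

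Third, I would invoke Haines' theorem: downward-closed subsets of $\msgset^*$ for the subword ordering are exactly the complements of upward-closed sets, and an upward-closure $\mathord{\uparrow}\{a_1 \cdots a_n\}$ has the regular description $\msgset^* a_1 \msgset^* a_2 \msgset^* \cdots a_n \msgset^*$. Since $\sqsubseteq$ is a well-quasi-order, every upward-closed set has finitely many minimal elements, hence is a finite union of such regular languages and is therefore regular; taking complements gives regularity of downward-closed sets. Because the global-state component ranges over a finite set, the downward closure can be represented state-by-state by a finite family of regular languages, one per channel, indexed by global states. Combining with the previous step, $\RS{\system}$ coincides with its own downward closure and is therefore recognisable in the sense of a finite collection of regular channel-content languages indexed by the finite product of control states.

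\textbf{Main obstacle.} The routine manipulations are straightforward; the only subtle point is ensuring that the downward-closure argument respects the \emph{peer-to-peer} product structure of a FIFO system, i.e.\ that the language of reachable configurations projects to a regular language on each channel simultaneously rather than merely channel-by-channel. This is handled by indexing the channel languages by the (finite) global state, as above, but it deserves to be stated explicitly before concluding.
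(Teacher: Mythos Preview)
Your proof is correct and follows essentially the same line as the paper's: the paper argues that the complement of the reachability set is upward-closed under the subword ordering and that upward-closed sets are recognisable, hence so is the reachability set by complementation. You unfold the same argument with considerably more detail (Higman, Haines, the product structure), but the core idea---downward-closure of $\RS{\system}$ under lossiness plus regularity of upward/downward-closed sets---is identical.
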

\begin{proof}
The proof follows from the fact that upward-closed sets are recognisable. Moreover, the complement of the reachability set of lossy FIFO systems is upwards-closed (under the subword ordering). Therefore, the reachability set is recognisable (since recognisable sets are closed under complementation).
\end{proof}

\myparagraph{FIFO systems with out-of-order errors.}
For FIFO systems with out-of-order errors, reachability is decidable. %

\begin{lemma}
	For FIFO systems with out-of-order errors, reachability is decidable.
\end{lemma}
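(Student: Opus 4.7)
The plan is to reduce the reachability problem for FIFO systems with out-of-order errors to the reachability problem in Petri nets, which is known to be decidable. The key observation is that the axiom $a\cdot b \succeq b\cdot a$ for all $a,b \in \msgset$ together with \emph{Transitivity} and \emph{Additivity} implies that any permutation of the channel contents is reachable via interference. Hence the semantics of each channel collapses from a word to a \emph{multiset} of messages, and FIFO systems with out-of-order errors are semantically equivalent to bag-channel systems.

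Given a FIFO system $\system = (\automata_\procp)_{\procp \in \procset}$ with $\automata_\procp = (\cstateset_\procp, \transrel_\procp, \initp{p})$, I would construct a Petri net $\net(\system)$ as follows. The set of places is $\bigcup_{\procp \in \procset} \cstateset_\procp \;\cup\; \{ p_{\chan{\procp}{\procq}, m} \mid \chan{\procp}{\procq} \in \chanset, m \in \msgset\}$, where each control state $q \in \cstateset_\procp$ is a place holding at most one token (representing the current local state of $\automata_\procp$), and each place $p_{\chan{\procp}{\procq}, m}$ holds as many tokens as there are occurrences of $m$ in the channel $\chan{\procp}{\procq}$. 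For each transition $(q, \actsend{\procp}{\procq}{m}, q') \in \transrel_\procp$, add a Petri net transition consuming a token from $q$ and producing one in $q'$ and in $p_{\chan{\procp}{\procq}, m}$. For each receive $(q, \actrec{\procp}{\procq}{m}, q')$, add a transition consuming a token from $q$ and from $p_{\chan{\procp}{\procq}, m}$, producing one in $q'$. Internal actions are handled analogously. The initial marking places one token on each $\initp{p}$ and zero elsewhere.

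The correctness argument proceeds by showing that a configuration $\conf = \confq{\globq}{\chancon}$ is reachable in $\system$ under out-of-order errors if and only if the marking $M_\conf$ (one token per $q_\procp$ in $\globq$ and $|w_{\chan{\procp}{\procq}}|_m$ tokens on $p_{\chan{\procp}{\procq}, m}$) is reachable in $\net(\system)$. The forward direction is straightforward by induction on the execution length: each send/receive lifts directly to a Petri net transition, and out-of-order interferences do not change the multiset abstraction of any channel. The backward direction uses the fact that, given any reachable marking of $\net(\system)$, one can realise the corresponding FIFO execution by firing the same sequence of transitions, and whenever a receive of $m$ is enabled in the Petri net (because a token is available on $p_{\chan{\procp}{\procq}, m}$) the out-of-order axioms allow us to bring some occurrence of $m$ to the head of the FIFO buffer before performing the reception. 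Since reachability is decidable for Petri nets, reachability is decidable for FIFO systems with out-of-order errors.

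The main subtlety — and the only step that requires care — is the backward simulation: one must show that whenever the Petri net fires a receive of $m$ on $\chan{\procp}{\procq}$, the actual word $w_{\chan{\procp}{\procq}}$ contains at least one $m$ (guaranteed by the invariant that Petri-net token counts equal message multiplicities) and that the out-of-order preorder $\succeq$ allows permuting that $m$ to the head. This is exactly what the closure of $a\cdot b \succeq b\cdot a$ under \emph{Transitivity} and \emph{Additivity} provides, so the simulation goes through cleanly.
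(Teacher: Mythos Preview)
Your proposal is correct and follows essentially the same approach as the paper: both observe that out-of-order errors collapse FIFO channels to bags/multisets and then reduce to reachability in Petri nets (equivalently, VASS). Your version is considerably more detailed than the paper's two-sentence sketch, in particular spelling out the explicit Petri-net construction and the bidirectional simulation argument.
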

\begin{proof}
	FIFO systems with out-of-order errors can be seen as FIFO systems with bags, or multisets. Loosely speaking, this can translate to a vector addition system with states (VASS), and \cite{CzerwinskiLLLM21} shows that reachability is Ackermann-complete for VASS.
\end{proof}

\myparagraph{FIFO systems with corruption.}
In case of corruption, the reachability problem is decidable. 

\begin{lemma}
	For FIFO systems with corruption errors, the reachability problem is decidable.
\end{lemma}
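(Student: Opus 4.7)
The plan is to reduce the reachability problem for FIFO systems with corruption to the reachability problem for Petri nets (equivalently, VASS), which is decidable by Mayr/Kosaraju. The key observation is that under unrestricted corruption — i.e., the axiom $a \succeq b$ for all $a,b \in \Alphset$, together with reflexivity, transitivity, and additivity — the content of any channel can be rewritten to any word of the same length over $\Alphset$. By Non-expansion, corruption preserves length. Hence the symbols in the channels are semantically irrelevant: only the channel lengths matter.

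Concretely, from a FIFO system $\system = (\automata_\procp)_{\procp \in \procset}$ with corruption, I would build a Petri net $N_\system$ as follows. For each process $\procp$ and each local control-state $q \in \cstateset_\procp$, introduce a place with initial marking $1$ if $q = \initp{p}$ and $0$ otherwise (so each process has exactly one ``state token'' at all times). For each channel $\chan{\procp}{\procq} \in \chanset$, introduce a counter-place $c_{\procp\procq}$, initially empty. A send transition $(q,\sendact{\chan{\procp}{\procq}}{m},q')$ becomes a Petri net transition that moves a token from $q$ to $q'$ and puts one token into $c_{\procp\procq}$; a receive transition $(q,\recact{\chan{\procp}{\procq}}{m},q')$ consumes one token from $c_{\procp\procq}$ and moves the state token from $q$ to $q'$; internal actions just move state tokens. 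Note that the message label $m$ is discarded in the translation.

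The correspondence is then: given a FIFO configuration $\conf = \confq{\globq}{\chancon}$, let $M(\conf)$ be the marking that puts one token on each $q_\procp$ and $|\word_{\chan{\procp}{\procq}}|$ tokens on each $c_{\procp\procq}$. I would prove: $\conf_0 \xrightarrow{*} \conf$ in $\system$ iff $M(\conf_0) \to^* M(\conf)$ in $N_\system$. The forward direction is straightforward since each send/receive/internal step of $\system$ is mirrored by the corresponding Petri net transition, and corruption steps do not change $M$. For the backward direction, I would simulate any Petri net firing sequence by the FIFO system: send transitions append an arbitrary message, and whenever the Petri net fires a receive transition for a channel with counter value $> 0$, the FIFO system can first apply corruption axioms to transform the head of the corresponding non-empty queue to the message $m$ required by the receive action, and then fire that receive. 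Hence any reachable Petri net marking is achieved by some FIFO configuration with the matching state vector and channel lengths.

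Finally, to decide whether a given target configuration $\conf^\star$ is reachable in $\system$, it suffices to decide whether $M(\conf^\star)$ is reachable in $N_\system$ — since by the equivalence, reaching $M(\conf^\star)$ yields \emph{some} configuration with the same states and the same channel lengths, from which corruption axioms produce exactly $\conf^\star$. The main obstacle I expect is in the backward simulation: one has to be careful that the FIFO system's interference steps indeed realise every receive action the Petri net fires, which amounts to verifying that the head of a non-empty queue can always be corrupted to the required message — this is where unrestricted corruption is used in an essential way. Once this correspondence is established, decidability of Petri net reachability transfers the result to $\system$.
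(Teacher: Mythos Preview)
Your proposal is correct and follows essentially the same approach as the paper: both exploit that unrestricted corruption makes channel contents irrelevant up to length, and reduce reachability to VASS/Petri net reachability by replacing each channel with a counter. Your version is in fact more explicit than the paper's sketch, spelling out the Petri net construction and the forward/backward simulation that the paper only gestures at.
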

\begin{proof}
Let $\system$ be a FIFO system with corruption. Let us consider a configuration $\conf = (\globq, \chancon)$, with $\chancon = (w_{\chan{\procp}{\procq}})_{\chan{\procp}{\procq} \in \chanset}$. Without loss of generality, let $w_{\chan{\procr}{\procx{s}}} \in \Sigma^*$ be the channel contents of channel $\chan{\procr}{\procx{s}}$ such that $|w_{\chan{\procr}{\procx{s}}}| = n$. Since the channel is corrupt, we know that $\{(\globq, \chancon') \mid w'_{\chan{\procp}{\procq}} = w_{\chan{\procp}{\procq}}$ for all $\chan{\procp}{\procq} \neq \chan{\procr}{\procx{s}}$ and $w'_{\chan{\procr}{\procx{s}}} \in \Sigma^*$ and $|w'_{\chan{\procr}{\procx{s}}}| = n\} \subseteq \RS{\system}$, since the existing channel contents can be corrupted to any other word of the same length. Hence, the reachability set is the union of all such sets of configurations. In order to find out which lengths of words are reachable for each configuration, it is sufficient to modify the automata such that there is only one letter replacing all the transitions. This ensures that we correctly count the length of all the words that are reachable. This translates to checking the reachability of a VASS (since each channel can now be seen as a counter without zero tests), and then, once we know if a word of length $n$ is reachable, we can be sure that any word of length $n$ is reachable from the initial state. Hence, the problem reduces to the reachability problem in VASS. 
\end{proof}

\section{Proofs from \S \ref{sec:rsc}}

\rscconflict*
\begin{proof}
	The left to right implication follows from two observations: first, two causally equivalent executions have
	isomorphic conflict graphs. Secondly, the conflict graph of an RSC execution is acyclic, because for an RSC execution and vertices $\exch_1, \exch_2$ in the conflict graph, $\exch_1 \rightarrow_{e, \comsett} \exch_2$ if there is $j_1 \in \exch_1$ and $j_2 \in \exch_2$ such that $j_1 \prec_{e, \comsett} j_2$. Moreover, if there is more than action in either $\exch_1$ or $\exch_2$, for $\iRSC$ executions by definition, $min(\exch_1) \prec_{e, \comsett} min(\exch_2)$. Therefore, if there is a cycle in the conflict graph, then this would imply $min(\exch_2) \prec_{e, \comsett} min(\exch_1)$, which would be a contradiction.
	
	For the converse direction, let us assume that a conflict graph associated to $e = a_1 a_2 \ldots a_n$ is acyclic. Let us consider the associated communication set $\comsett$. Let $\exch_1 \ll \cdots \ll \exch_n$ be a topological order on $\comsett$. Let $e' = \exch_1 \cdots \exch_n$ be the corresponding RSC execution, and $\comsett'$ the communication set associated to $e'$ that is RSC. 
	
	Let $\sigma$ be the permutation such that $e' = a_{\sigma(1)} \cdots a_{\sigma(n)}$. Following the proof idea in \cite{di_giusto_multiparty_2023}, we show that $e$ is causally equivalent to  $e'$. Let $j,j'$ be two indices of $e$, and let us show that $j \prec j'$ iff $\sigma(j) \prec \sigma(j')$. 
	
	We have that $\{j ,j'\}$ is a matching pair in $e$, iff, by construction, $\{\sigma(j), \sigma(j')\}$ is a matching pair in $e'$. If $\{j, j'\}$ is not a matching pair of $e$, then let $\exch$ and $\exch'$ be the interactions containing $j$ and $j'$ respectively. Since $j \prec j'$, there is an arrow between $\exch$ and $\exch'$ in the conflict graph, and moreover $a_j$ and $a_{j'}$ cannot commute. Note that there is an arrow in the conflict graph of $e'$ as well. Since the conflict graph is acyclic, we have $\sigma(j) \prec \sigma(j')$.

\end{proof}

\bvrsc*
\begin{proof}
	By definition, if there exists an execution $(e, \comsett)$ in system $\system$ such that it is a borderline violation, then $\system$ is not $\iRSC$.
	Conversely, if $\system$ is not $\iRSC$, let $(e, \comsett)$ be (one of) the shortest execution that is not causally equivalent to an $\iRSC$ execution. Then, $e = e' \cdot a$ such that for all $\comsett' \in \setcomm(e')$, we have $(e', \comsett')$ is equivalent to an $\iRSC$ execution. Let $\comsett''$ be the communication set of $e'$ such that it is a subset of the communication set $\comsett$. Let $(e'', \comsett'')$ be the $\iRSC$ execution that is causally equivalent to $(e', \comsett'')$. Then, there exists an execution $\hat{e}$ such that $(\hat{e}, \comsett)$ is an execution of $\system$. Moreover, if $a$ is a send action, then $(\hat{e}, \comsett)$ is $\iRSC$ which is a contradiction. Therefore, $({e}, \comsett)$ is a borderline violation.
\end{proof}

\abv*
\begin{proof}
	Let $\A_\bv = (Q_\bv, \transrel_\bv, q_{0,\bv}, \{q_f\})$, with $Q_\bv = \{q_{0,\bv}, q_f\} \cup (\chanset \times \actset \times \{0, 1\})$, and for all $a, a' \in \actset_{nr}$, for all $c \in \chanset$, $m,m' \in \Sigma$:
	\begin{enumerate}
		\item $(q_{0,\bv}, a, q_{0,\bv})$: this is a loop on the initial state that reads all interactions until the chosen send message
		\item $(q_{0,\bv}, \actsend{\procp}{\procq}{m}, (\chan{\procp}{\procq}, \actsend{\procp}{\procq}{m}, 0))$: this is the transition where we non-deterministically select the send message that is matched to the final reception to be borderline.
		\item In case we do not consider out-of-order errors, we add the following step: $((\procx{c}, a, 0), a', (\procx{c}, a, 0))$, if $\ch(a') \neq \procx{c}$: again loop for every communication but we do not accept any further communication on the channel $\procx{c}$ in order to stay borderline. Note that this step is skipped if we consider the general case with out-of-order errors as we can have matched pairs between a matched send and receive action. %
		\item $((\procx{c}, a, 0), a', (\procx{c}, a', 1))$, if $\proc(a) \cap \proc(a') \neq \emptyset$: here, the second interaction that will take part in the conflict graph cycle is guessed. We ensure there is a process in common with $a$ for there to be an edge between them.
		\item $((\procx{c}, a, 1), a', (\procx{c}, a, 1))$, if $\ch(a') \neq \procx{c}$: once again a loop for every interaction.
		\item  $((\procx{c}, a, 1), a', (\procx{c}, a', 1))$, if $\proc(a) \cap \proc(a') \neq \emptyset$: the next vertex (or vertices) (if any) of the conflict graph is guessed.
		\item $((\procx{c}, a, 1), \actrec{\procp}{\procq}{m'}, q_f)$, if $\proc(a) \cap \proc(\actrec{\procp}{\procq}{m'}) \neq \emptyset$: finally, an execution is accepted if it closes the cycle.
	\end{enumerate}
	Moreover, each transition of $\A_\bv$ can be constructed in constant time, so $\A_\bv$ can be constructed in time $\mathcal{O}(|\chanset|^3|\msgset|^2)$.
\end{proof}

\arsc*
\begin{proof}
	Let $\aursc = (Q_\rsc, \transrel_\rsc, q_{0,\rsc}, \{q_f\})$ be the non-deterministic automata, with $Q_\rsc = Q \times (\{\varepsilon\}\cup \chanset) \times 2^\chanset \cup \{q_f\}$. We define the transitions as follows:
	\begin{enumerate}
		\item First, while performing the action $a \in \actset_{nr}$, $(q, \chi, S) \xrightarrow{a} (q', \chi', S')$ if \begin{itemize}
			\item $(q, v) \implies (q',v')$ in the underlying transition system, for some buffer values $v,v'$ and for all $\procx{c} \in \chanset$, $v_c \neq \emptyset$ iff $\procx{c} \in S$ and $v'_{\procx{c}} \neq \emptyset$ iff $\procx{c} \in S'$, and 
			\item this condition is added in the absence of out-of-order errors: if a = $\chan{\procp}{\procq}!?m'$, then $\chan{\procp}{\procq}\notin S$, and
			\item either $\chi = \chi'$, or $a = \procx{c}!m$ and $\chi' = \procx{c}$
		\end{itemize}
		\item Second, while performing the action $a = \actrec{\procp}{\procq}{m}$, we have $(q, \chi, S) \xrightarrow{a} q_f$ if $\chi = \chan{\procp}{\procq}$ and $(q, a, q') \in \transrel_\system$ for some $q'$. 
	\end{enumerate}
	Each transition of $\aursc$ can be constructed in constant
	time. An upper bound on the number of transitions can be computed as follows: if $(q, \chi, S) \xrightarrow{a} (q', \chi', S')$ is a transition, then $q$ and $q'$ only differ on at most two machines (the one that executed the send, and the one that executed the receive),
	so there are at most $n^2$ different possibilities for $q'$
	once $q$ and $a$ are ﬁxed. There are at most two possibilities
	for $\chi'$ once $\chi$ and $a$ are ﬁxed, and $S'$ is fully
	determined by $S$ and $a$. Finally, there are
	$n^{|\procset|}(1+|\chanset|)\times 2^{|\chanset|} \times 2 \times |\chanset|$ possibilities for a choice of the pair $((q, \chi, S), a)$.
\end{proof}

\rscmainthm*
\begin{proof}
	The set of borderline violations of a system $\system$ can be expressed as $\lang(\aursc)\cdot\actset_{nr} \cap \lang(\aubv)$. Therefore, checking for inclusion in $\iRSC$ reduces to checking the emptiness of this intersection, which can be done in time $\mathcal{O}(n)$.
\end{proof}

\section{Proofs from \S \ref{sec:kmc}}
\basicksafe*
\begin{proof}
	We prove this by contradiction. Let us assume that $\system$ is not $k$-exhaustive. In other words, there exists $s \in \RSK{k}{\system}$ and $\chan{\procp}{\procq} \in \chanset$, such that $q_\procp$ is a sending state and there is no execution of the kind $s \xrightarrow{*}_k \xrightarrow{\sendact{\chan{\procp}{\procq}}{m}}_k$. In other words, the channel $\chan{\procp}{\procq}$ has $k$ messages already, i.e. $|w_{\chan{\procp}{\procq}}| = k$. However, since $s \in \RSK{k}{\system}$, and $\system$ is $k$-safe, and more specifically satisfies eventual reception, there exists a configuration $t \in \RSK{k}{\system}$ such that $s \xrightarrow{*}_k\xrightarrow{\recact{\chan{\procp}{\procq}}{m'}}_k t$, such that $w_{\chan{\procp}{\procq}} = m'\cdot u$. Moreover, since the execution $s \xrightarrow{e}_k t$ is $k$-bounded, we can be sure that there have been no new sends along the channel $\chan{\procp}{\procq}$. Furthermore, since $\system$ is directed and has no mixed states, in configuration $t$, $\procp$ is still at state $q_\procp$. Therefore, we now have $t \xrightarrow{\sendact{\chan{\procp}{\procq}}{m}}_k$, which contradicts our initial assumption.
\end{proof}

\kmcmainthm*
\begin{proof}
To check whether $\system$ is not $k$-exhaustive, i.e., for each sending state $q_\procp$ and send action from $q_\procp$, we check whether there is a reachable configuration from which this send action cannot be fired. Hence, we need to search $\RSK{k}{\system}$, which has an exponential number of configurations (wrt. $k$). Note that due to interferences, each of these configurations can now have modified channel contents. We need to store at most $|\procset|^n|\chanset||\Sigma|^k$ configurations, where $n$ is the maximum number of local states of a FIFO automata, following ideas from \cite{lange_verifying_2019} and \cite{bollig_propositional_2010}. Hence, the problem can be decided in polynomial space when $k$ is given in unary.

Next, to show that $\kwer$ is decidable, we check for every such reachable configuration, that there exists a receive action from the same channel (note that we do not need to ensure it is the same message). %

\end{proof}

\section{Proofs from \S \ref{sec:crashstop}}

\crashsysdec*
\begin{proof}
	To check whether a system $\system$ is crash-handling, we need to check the two properties:
	\begin{enumerate}
		\item Checking if $\CI$ is satisfied amounts to checking if for all receiving transitions $\tau \in \transrel_{\procp, 1}$ such that the sending process is in $\procset \setminus \relset$, there exists a transition $\tau' \in \transrel_{\procp, 2}$ that handles the crash.
		\item To check $\CP$, we need to check every state in a process in $\procset \setminus \relset$ and ensure it can send crash messages when it crashes.
	\end{enumerate}
	Both of these are structural checks made on the graph of the automata, hence, checking this is decidable.
\end{proof}

\crashreachundec*
\begin{proof}
	Since every FIFO system is a crash-handling process under the condition that $\procset = \relset$, this lemma is trivially true. Moreover, every FIFO system $\system$ with $\relset \subsetneq \procset$ can be translated to a crash-handling system $\system'$ such that $\system$ is $k$-bounded iff $\system'$ is at most $k+1$-bounded. %
	This can be done by adding a new sink state $q_{\textsf{sink}}$ such that for all receiving transitions $(q, \procx{c}?a, q') \in \transrel_{\procp, 1}$, we add to $\transrel_{\procp, 2}$ a transition $(q, \procx{c}?\text{\lightning}, q_{\textsf{sink}})$. Hence, $\CI$ will be handled. For enforcing $\CP$, we add to each state $q$ of an unreliable process $\procp$ the following transition $(q, \cbroadcast{\text{\lightning}}{\procp}, q_{\textsf{sink}})$. Hence, both conditions are satisfied, and these additional transitions do not add any  unboundedness to the channels (and at most one message extra to each channel). Moreover, if the original system is unbounded, then the same execution would be enabled in $\system'$.  Hence, $\system$ is $k$-bounded iff $\system'$ is at most $k+1$-bounded. %
\end{proof}

\rsccrashdec*
\begin{proof}
	This amounts to checking the $\RSC$ property in automata with internal actions. Intuitively, this amounts to ``skipping" the internal actions in the respective NFAs. In order to prove this, we let $\actset_{nr} =\{\sract{c}{m} \mid \sendact{c}{m} \in \actset, \recact{c}{m} \in \actset\} \cup \{\sendact{c}{m}\mid\sendact{c}{m} \in \actset\} \cup \internalset $. Then we follow the construction as before. Note that since internal actions do not have a channel associated to them, we do not need to make any further changes. In the conflict graph, they are considered as nodes with only process edges between them, hence, do not form cycles and can be ignored.
\end{proof}

\crashkmcmain*
\begin{proof}
	First, we observe that for any $k \in \nat$, $RS_k(\system)$ and $\xrightarrow{}_k$ are finite. Moreover, there are at most $n\cdot |\procset|$ control states in the system, where $n = max(\{|\cstateset_\procp| \mid \procp \in \procset\}$. 
	
	\textbf{$k$-exhaustivity:} We check whether $\system$ is not $k$-exhaustive, i.e., for each sending state $q_\procp$ and send action from $q_\procp$ , we check whether there is a reachable configuration from which this send action cannot be fired. The presence of internal actions and the absence of final states does not alter this proof.
	
	\textbf{eventual reception:} For each receiving state $q_\procp$, we check whether there is a reachable configuration from which one receive action
	of $\procp$ is enabled, followed by a send action that matches another receive. We proceed as in the case for $k$-exhaustivity with additional space to remember whether we are looking for the receiving state or for a matching send action.
	Note that the presence of the internal actions does not affect this property either. This is because they do not modify the channel bounds (and hence, are not bounded by $k$), and do not increase the size of $\RSK{k}{\system}$.
	
	 Therefore, the proofs can directly follow from \cite[Theorem 2]{lange_verifying_2019}.%
\end{proof}
\section{Proofs from \S \ref{sec:crashstopsessions}}

Before we construct the resulting FIFO automata, we first need to show that the set of states is finite. 

\begin{lemma}%
	Given a local type $\deftypep{T}$, the set $\{\deftypep{T'} \mid \deftypep{T'} \in \deftypep{T}\}$ is finite.
\end{lemma}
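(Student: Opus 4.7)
The plan is to prove the claim by structural induction on the syntax of $\deftypep{T}$. The underlying intuition is that the relation $\deftypep{T'} \in \deftypep{T}$, as given, picks out exactly the \emph{syntactic} subterms of $\deftypep{T}$: clauses~(1)--(3) descend into continuations or recursion bodies, and clause~(4) contributes $\deftypep{T}$ itself, while crucially no clause ever \emph{unfolds} a recursion. Since the syntax tree of any local type is finite by construction, the set of occurring types will be bounded above by this finite collection of subterms.

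In more detail, I would split into cases along the syntax. For the base cases $\deftypep{T} \in \{\typet, \deftype{end}, \deftype{stop}\}$, only clause~(4) applies and the set is the singleton $\{\deftypep{T}\}$, hence finite. For the branching and selection cases $\deftypep{T} = \branchingtype{p}{m_i}{B_i}{T_i}$ and $\deftypep{T} = \selectiontype{p}{m_i}{B_i}{T_i}$, the index set $I$ is finite by the syntax of types; applying the inductive hypothesis to each continuation $\deftypep{T_i}$ yields that each $\{\deftypep{T'} \mid \deftypep{T'} \in \deftypep{T_i}\}$ is finite, and the desired set is then
\[
\{\deftypep{T'} \mid \deftypep{T'} \in \deftypep{T}\} \;=\; \{\deftypep{T}\} \,\cup\, \bigcup_{i \in I} \{\deftypep{T'} \mid \deftypep{T'} \in \deftypep{T_i}\},
\]
a finite union of finite sets. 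For the recursive case $\deftypep{T} = \rectypedef{T_\mu}$, we similarly obtain $\{\deftypep{T}\} \cup \{\deftypep{T'} \mid \deftypep{T'} \in \deftypep{T_\mu}\}$, which is finite by the inductive hypothesis on the structurally smaller $\deftypep{T_\mu}$.

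The only conceptual point worth underlining is precisely the recursive case: \emph{a priori} one might worry that the semantics of recursive types could force unfolding and so generate infinitely many syntactically distinct continuations. This does not occur here, because the occurrence relation is defined purely syntactically and clause~(3) merely descends into the body $T_\mu$ once, strictly decreasing along the finite syntax tree. Consequently, the induction terminates without additional machinery, and the lemma follows immediately.
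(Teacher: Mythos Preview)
Your proof is correct and follows essentially the same approach as the paper: both argue that the occurrence relation only ever descends to strictly smaller syntactic subterms (never unfolding recursion), so finiteness follows from the finiteness of the syntax tree. Your presentation via explicit structural induction is in fact cleaner and more precise than the paper's informal argument, which phrases the same idea in terms of ``strict prefixes'' and the finite length of $\deftypep{T}$.
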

\begin{proof}
	Let us consider each of the four conditions to build the set $\{\deftypep{T'} \mid \deftypep{T'} \in \deftypep{T}\}$ from $\deftypep{T}$. In cases (1), (2) and (4), we see that $\deftypep{T'}$ is a strict prefix of $\deftypep{T}$. Moreover, in case (3), we do not add any element to the set. Therefore, since the length of $\deftypep{T}$ is finite, the set $\{\deftypep{T'} \mid \deftypep{T'} \in \deftypep{T}\}$ is finite.
\end{proof}
 Hence, we can conclude that the automata constructed from $\deftypep{T}$ has finitely many states. 

\basicautomata*
\begin{proof}
	For the determinism, we note that all $\msgtype{m_i}{}$ in $\selectiontype{q}{m_i}{B_i}{T_i}$ and $\branchingtype{q}{m_i}{B_i}{T_i}$ are distinct. Apart from this, there is only a unique local action that can be taken from a state in case of crash. Therefore, the automaton is deterministic. Directedness is by the syntax of branching and selection types, and the fact that the internal action crash leads to a state without interacting with any other participant. The message broadcast, although not explicit, can be viewed as a sequence of send transitions, thereby making the system directed. Finally, for the absence of mixed states, we can check a state is either sending or receiving state as one state represents either branching and selection type, along with $\deftype{stop}$, $\deftype{end}$ which are receiving states, and all the intermediate states (between a crash until the $\deftype{stop}$ state) are sending states. 
	
	We now show that the translation preserves the semantics.
	We show that $\deftypep{T} \xrightarrow{\tau} \deftypep{T'}$ if $\strip{\deftypep{T}} \xrightarrow{\tau} \strip{\deftypep{T'}}$. 
	Base case: Considering a transition of size 0, it trivially holds as $\init = \strip{\deftypep{T_0}}$.
	Let us assume it holds for a transition of size $k$. Now, let us consider a single transition from $\deftypep{T} \xrightarrow{a} \deftypep{T'}$.
	If the transition $a$ belongs to [LR1] or [LR2] (resp. [LR5]) that leads to $\deftypep{T'}$, then there exists a transition in 3a (resp. 3d) that leads to $\strip{\deftypep{T'}}$. Similarly, the correspondence holds for transitions from [LR4] to 3b. Note that rule [LR2] is implicitly applied because of the $\strip{}$ function. Finally, [LR3] corresponds to 3c.
	
	The reverse direction follows as above. The only change is for the crash-handling behaviour. Here, we modify the condition as follows: if $\strip{\deftypep{T}} \xrightarrow{\tau} q $, then there exists $\deftypep{T} \xrightarrow{\tau'} \deftypep{T'}$ such that there is a unique, deterministic sequence of transitions $\tau''$ such that $q \xrightarrow{\tau''} \strip{\deftypep{T'}}$. For all cases except 3c, the proof above can be adapted (with $q = \deftype{T'}$ and $\tau'' = \emptyword$). For 3c. we see that for every sequence of transitions taken, there is a unique continuation that leads to $\deftype{stop}$, and the concatenation of $\tau.\tau'' = \tau$ and leads to $\deftype{stop}$.

\end{proof}

\fifocrash*
\begin{proof}
	This can be seen by assuming $\cstateset_{\procp, 1} = \{\deftypepb{T'} \mid \deftypepb{T'}  \in \deftypepb{T_0}, \deftypepb{T'}  \neq \typetb, \deftypepb{T'}  \neq \rectypeb{T}\} \setminus \{\deftype{end}, \deftype{stop}\}$. Moreover, $\cstateset_{\procp, 2} =\{ \deftype{stop}, \deftype{end}\}$ and $\cstateset_{\procp, 3} = \{\qsink\} \cup \{ \qsend{r} \mid \procr \in \procset \setminus \{\procp \} \}$. Moreover, rules in 3a correspond to $\transrel_{\procp, 1}$ and 3b $\CI$, 3c $\CP$, 3d $\CR$ constitute $\transrel_{\procp, 2}$. With this correspondence, we see that all the conditions are satisfied, hence, it is a crash-handling system.
	From Lemmas~\ref{lem:crashcsa} and the above result, we see that the FIFO system generated from the translation of crash-stop local types is a crash-handling system and a collection of communicating session automata. Moreover, from Theorems~\ref{prop:rsccrashdec} and \ref{thm:kmcmain}, it is decidable to check inclusion into the $\RSC$ and $\kwmc$ classes. Therefore, we can check the inclusion for collection of local types generated from crash-stop session types.
\end{proof} %
\section{Paxos Protocol}\label{sec:paxos}
In this section, we implement a basic version of the single-decree Paxos protocol, which was originally described in \cite{Lamport_2001}. The Paxos algorithm has been used to implement a fault-tolerant distributed system, which is essentially a consensus algorithm aimed to ensure that network agents can agree on a single proposed value. We model this protocol using FIFO systems with faulty channels, and go on to explore if it belongs to any of the above-mentioned classes of communicating systems.

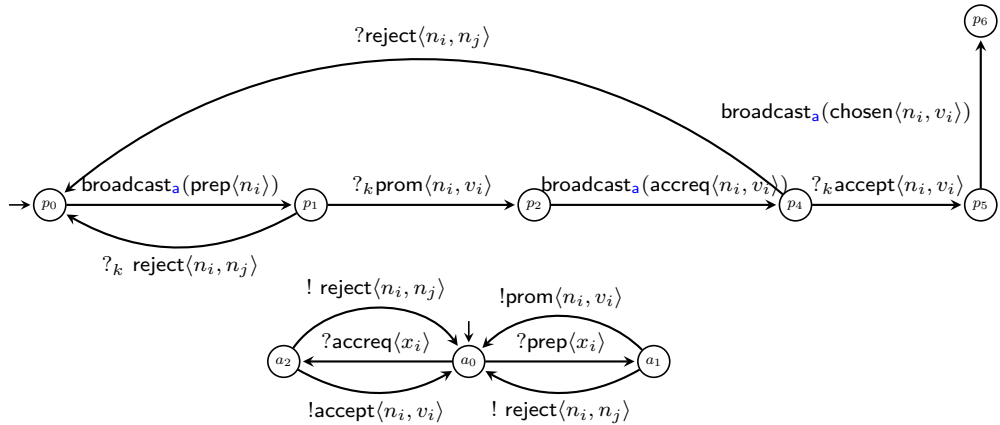
\begin{figure}
{\scriptsize
  \hspace*{-1cm}
	\begin{center}
	\begin{tikzpicture}[>=stealth,node distance=1cm,shorten >=1pt,
		every state/.style={text=black, scale =0.7, minimum size=0.1em, align=center}, semithick,
		font={\fontsize{8pt}{12}\selectfont}]
	\node[state, initial, initial text=] (place1) at (0,0) {${p_0}$};
	\node[state, right=3 cm of place1] (place2){${p_1}$};
	\node[state, right=2.5 cm of place2] (place3){${p_2}$};
	\node[state, right = 3cm of place3] (place5) {${p_4}$};
	\node[state, right=2 cm of place5] (place6){${p_5}$};
	\node[state, above=2 cm of place6] (place7){${p_6}$};
	
	\node[state, initial above, initial text=, below right= 2.5cm of place2] (plac1){${a_0}$};
	\node[state, right=2 cm of plac1] (plac2){${a_1}$};
	\node[state, left=2 cm of plac1] (plac3){${a_2}$};

	\path [-stealth, thick]
	(place1) edge node[above=0.01cm] {$\broadcast{\textsf{prep}\langle n_i\rangle}{a}$} (place2)
	(place2) edge[bend left] node[below=0.001cm] {$?_k \textsf{ reject} \langle n_i, n_j\rangle$} (place1)
	(place2) edge node[above=0.01cm] {$?_k \textsf{prom} \langle n_i, v_i\rangle$}(place3)
	(place3) edge node[above=0.01cm] {$\broadcast{\textsf{accreq}\langle n_i, v_i\rangle}{a}$} (place5)
	(place5) edge[bend right = 40] node[above=0.05cm] {$? \textsf{reject} \langle n_i, n_j\rangle$} (place1)
	(place5) edge node[above=0.01cm] {$?_k \textsf{accept} \langle n_i, v_i\rangle$}(place6)
	(place6) edge node[left=0.01cm] {$\broadcast{\textsf{chosen}\langle n_i, v_i\rangle}{a}$} (place7)
	
	(plac1) edge node[above=0.001cm] {$?\textsf{prep}\langle x_i\rangle$} (plac2)
	(plac2) edge[bend left] node[below=0.001cm] {$! \textsf{ reject} \langle n_i, n_j\rangle$} (plac1)
	(plac2) edge[bend right = 45 ] node[above=0.01cm] {$! \textsf{prom} \langle n_i, v_i\rangle$}(plac1)
	(plac1) edge node[above=0.001cm] {$?\textsf{accreq}\langle x_i\rangle$} (plac3)
	(plac3) edge[bend left = 55] node[above=0.001cm] {$! \textsf{ reject} \langle n_i, n_j\rangle$} (plac1)
	(plac3) edge[bend right] node[below=0.01cm] {$! \textsf{accept} \langle n_i, v_i\rangle$}(plac1)
	;			
	\end{tikzpicture}
	\caption{FIFO automata of the proposer and acceptor respectively in Paxos ($\broadcast{m}{a}$ refers to broadcasting message $m$ to all acceptors, $?_k m$ refers to receiving at least $k$ messages $m$).
		\label{fig:paxos}
	}
	
\end{center}

}
\end{figure} 

\paragraph*{The protocol.} We assume a subset of processes to be \emph{proposers}, i.e. processes that can propose values. The consensus algorithm ensures that exactly one value among the proposed values is chosen. A correct implementation of the protocol must ensure that:
\begin{enumerate}[\textbullet]
	\item Only a value that has been proposed will be chosen.
	\item Only one single value is chosen by the network.
	\item A process never knows that a value has been chosen unless the value has actually been chosen.
\end{enumerate}

The Paxos setting assumes the customary asynchronous, non-Byzantine model, in which:
\begin{enumerate}[\textbullet]
	\item Agents operate at arbitrary speed, may fail by crashing, and may restart. However, it is assumed that agents maintain persistent storage that survives crashes.
	\item Messages can take arbitrarily long to be delivered, can be duplicated, and can be lost or delivered out of order, but they are not corrupted.
\end{enumerate}

Paxos agents implement three roles: i) a proposer agent proposes values towards the
network for reaching consensus; ii) an acceptor accepts a value from those proposed, whereas
a majority of acceptors accepting the same value implies consensus and signifies protocol
termination; and iii) a learner discovers the chosen consensus value. 

The implementation of
the protocol may proceed over several rounds. A successful round has two phases: Prepare
and Accept.
The protocol ensures that in the case where a consensus value $v$ has already been chosen
among the majority of the network agents, broadcasting a new proposal request with a higher
proposal number will result in choosing the already chosen consensus value $v$. Following this
fact, we assume for simplicity that a learner has the same implementation as a proposer.

\begin{itemize}
	\item Requirement 1: An acceptor must accept the first proposal that it receives, i.e. for the acceptor, there must be a path from the initial state, which accepts the first proposal it gets.
	\item Requirement 2: If multiple proposals are chosen, they all have the same proposal value.
\end{itemize}

We model a bounded-version of Paxos with a FIFO system which can have any of the above-mentioned errors except corruption. The automata in Fig~\ref{fig:paxos} show an example implementation of a proposer and an acceptor, with a majority of $k$ agents needed for consensus. The action $!!\textsf{msg}$ refers to broadcasting the message $\textsf{msg}$ across all channels, and $?_k\textsf{msg}$ refers to receiving $k$ $\textsf{msg}$ messages. Both these actions can be unrolled and expressed as a combination of simple actions. Moreover, we assume that for each value of $n_i$ there is a copy of the same set of transitions. And since we cannot compare values in finite automata, we sequentially order the automata with increasing values of $n_i$. Note that this model is a CSA, and hence, we can test the $\texttt{kmc}$ tool and the ReSCu tool on an implementation. We verify that a $2$-bounded Paxos with $2$ proposers and $3$ acceptors ($2$-Paxos$2$P$3$A) is $\kmc$ and $\RSC$ in the presence of lossiness.

\end{document}